\DeclarePairedDelimiter\floor{\lfloor}{\rfloor}
\newtheorem{theorem}{Theorem}
\newtheorem{lemma}[theorem]{Lemma}
\newtheorem{definition}[theorem]{Definition}
\newcommand{\qeds}{\nobreak \ifvmode \relax \else
      \ifdim\lastskip<1.5em \hskip-\lastskip
      \hskip1.5em plus0em minus0.5em \fi \nobreak
      \vrule height0.75em width0.5em depth0.25em\fi}
\definecolor{JM}{rgb}{0.3,0.31,0.7}
\definecolor{AS}{rgb}{0.1,1,0.1}
\definecolor{purple}{RGB}{160,32,240}
\newcommand{\subsubsubsection}[1]{\paragraph{#1}\mbox{}\\}
\begin{document}

\preprint{APS/123-QED}

\title{\texorpdfstring{Logical accreditation: a framework for efficient certification\\ of fault-tolerant computations}{Logical accreditation: a framework for efficient certification of fault-tolerant computations}}

\author{James Mills$^{1}$}
\email{J.Mills-7@sms.ed.ac.uk}
\author{Adithya Sireesh$^{1}$}
\author{Dominik Leichtle$^{1}$}
\author{Joschka Roffe$^{1}$} 
\author{Elham Kashefi$^{1,2}$}

 \affiliation{$^1$School of Informatics, University of Edinburgh, United Kingdom \\$^2$Laboratoire d’Informatique de Paris 6, Sorbonne Université, France}

\begin{abstract}

As fault-tolerant quantum computers scale, certifying the accuracy of computations performed with encoded logical qubits will soon become classically intractable. 
This creates a critical need for scalable, device-independent certification methods. 
In this work, we introduce logical accreditation, a framework for efficiently certifying the correctness of quantum computations performed on logical qubits. 
Our protocol is robust against general noise models, far beyond those typically considered in performance analyses of quantum error-correcting codes. 
Through numerical simulations, we demonstrate that logical accreditation can scalably certify quantum advantage experiments and indicate the crossover point where encoded computations begin to outperform physical computations. 
The framework also enables evaluation of whether logical error rates are sufficiently low that error mitigation can be efficiently performed, extends entropy benchmarking to the regime of fault-tolerant computation, and upper bounds the infidelity of the logical output state of a computation.
Underlying the framework is a novel randomised compilation scheme that converts arbitrary logical circuit noise into stochastic Pauli noise.
This scheme includes a method for twirling non-transversal logical gates beyond the standard $T$ gate, resolving an open problem posed by [Piveteau et al. PRL 127, 200505 (2021)]. 
By bridging fault-tolerant computation and computational certification, logical accreditation offers a scalable, practical means of certifying the accuracy of quantum computations performed using encoded logical qubits.

\end{abstract}

\maketitle

\section{Introduction}

\begin{figure*}

\begin{subfigure}[t]{0.25\textwidth}
\centering
\includegraphics[width=0.99\textwidth]
{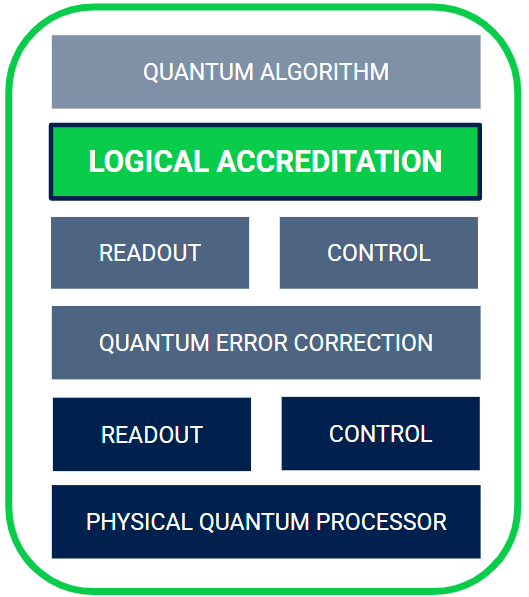}
\caption{}
\end{subfigure}
\begin{subfigure}[t]{0.73\textwidth}
\centering
\hspace{3.2em}\includegraphics[width=0.89\textwidth]
{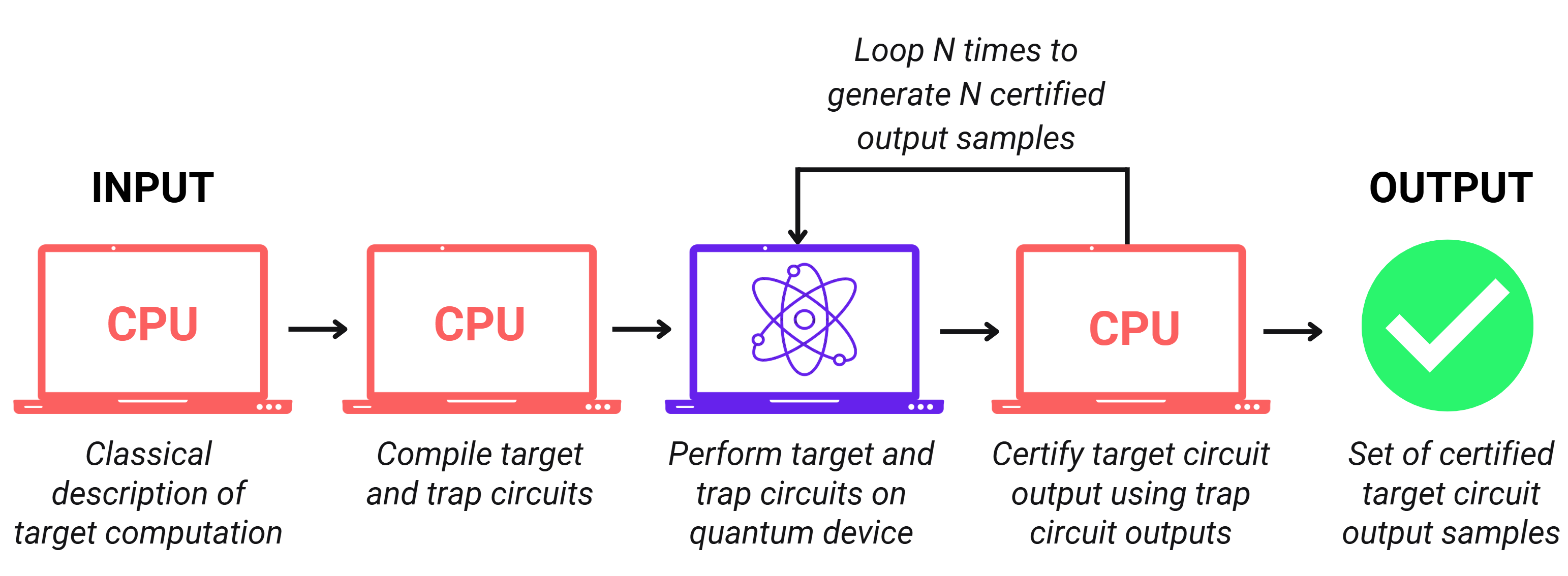}
\caption{}
\end{subfigure}
\caption{
(a) The fault-tolerant quantum computing stack with the logical accreditation layer positioned between the logical control and the quantum algorithm layers.
(b) Logical accreditation is performed with a quantum device and a classical processing unit (CPU). 
The input is a classical description of the target computation. 
This is used to compile both the target and trap circuits according to the required circuit structure and gate set. 
These circuits are run on the quantum device in a randomised order, and the measured bit strings are recorded on the CPU. 
The trap circuit outputs are then post-processed on the CPU to certify the target circuit output. 
To obtain $N$ certified samples, the sampling procedure is repeated $N$ times. 
The output is a certified set of $N$ bit string samples from the target circuit.
}
\label{fig:simple_schematic_diagram}
\end{figure*}

Experimental realisations of quantum error correction have reached a turning point \cite{acharya_quantum_2025, bluvstein_logical_2024, paetznick_demonstration_2024}. 
With the ongoing improvement of physical qubit quality and system scale, we are beginning to enter the regime where quantum computations can be carried out using encoded logical qubits. 
However, we are still far from realising the large-scale, fully fault-tolerant architectures required for universal computation using high-distance codes and complete fault-tolerant gate sets. For the foreseeable future, we will be operating in an intermediate regime of \textit{early fault tolerance}, where resource constraints limit both the error-correcting codes and the fault-tolerant protocols that can be implemented \cite{katabarwa_early_2024, suzuki_quantum_2022}. 
In this regime, it can be unclear whether encoding computations in logical qubits actually improves performance over direct computation with physical qubits.
This uncertainty poses a critical challenge for quantum error correction development: how can we assess the correctness of fault-tolerant computations under realistic conditions?

This challenge is not restricted to the near term. 
Even once it is possible to build large-scale fault-tolerant quantum devices, many assumptions underpinning quantum error correction, e.g., local, stochastic, or Markovian noise models, will remain problematic.
How can we be confident that these assumptions hold in practice, or that logical-level computations remain correct under deviations from particular noise models?
For example, correlated errors were observed in recent surface code experiments in superconducting qubit hardware, revealing a logical error floor that prevents further error suppression \cite{acharya_quantum_2025}.
Such factors limit the ability of classical simulations and analytical models to certify computational correctness at scale, motivating the need for alternative, device-independent methods of certification. 
A new layer in the fault-tolerant quantum computing stack is needed: a tool for certifying that encoded quantum computations behave as intended, even for general, possibly highly correlated, noise.

This paper introduces such a tool by bridging the fields of quantum error correction and computational certification. 
Specifically, we introduce \textit{logical accreditation} as a scalable framework for efficient certification of computations performed with logical qubits encoded using stabiliser codes.
Our protocol brings techniques from the field of certification (also known as verification or accreditation) \cite{gheorghiu_verification_2019}, and extends them to the fault-tolerant regime by applying them to logical circuits implemented using stabiliser codes. 
The sampling overhead of logical accreditation is independent of both the number of logical qubits and the circuit depth.
This inherent scalability means that the framework can be applied to certify computational correctness in regimes that exceed the capabilities of the most advanced classical simulators.

Fig. \ref{fig:simple_schematic_diagram} (a) shows the positioning of the certification framework within the fault-tolerant quantum computing stack, while Fig. \ref{fig:simple_schematic_diagram} (b) provides a schematic overview of the framework.
In logical accreditation, the target computation is embedded within an ensemble of \textit{trap} computations. 
The trap computations are designed to be structurally identical to the target computation but are configured to have deterministic outputs in the absence of noise. 
This deterministic trap circuit behaviour can be leveraged to provide a rigorous upper bound on the total variational distance (TVD) between the noisy and ideal output distributions of the target computation.

A key advantage of logical accreditation is its ability to certify the accuracy of a \textit{specific} computation performed on quantum hardware. 
This distinguishes it from methods that measure average performance, like the gate fidelity information provided by randomised benchmarking \cite{knill_randomized_2008, magesan_scalable_2011, helsen_general_2022} or the holistic device performance measured by quantum volume \cite{cross_validating_2019, jurcevic_demonstration_2021, pelofske_quantum_2022}. 
Crucially, logical accreditation provides these guarantees under far more general noise assumptions than are typically considered in the analysis of quantum error correction protocols.

One technical contribution of this work is a compilation scheme, inspired by randomised compiling \cite{wallman_noise_2016, hashim_randomized_2021}, that transforms general logical errors into stochastic Pauli noise. 
Within this scheme, we introduce a novel method that
resolves the open question of how to twirl non-transversal logical gates beyond the $T$ gate \cite{piveteau_error_2021}. 

This work builds on ideas from the family of protocols known as accreditation, originally
developed to certify circuits executed on noisy physical qubits in both the digital and analog settings \cite{ferracin_accrediting_2019,ferracin_experimental_2021, jackson_accreditation_2024, jackson_improved_2025}, as well as from cryptographic verification protocols \cite{fitzsimons_unconditionally_2017, gheorghiu_verification_2019, leichtle_verifying_2021}.
Another methodology closely related to accreditation is mirror circuit fidelity estimation, which estimates circuit fidelity using randomised mirror circuits \cite{proctor_establishing_2022}. 
This provides a scalable approach to estimating average circuit fidelity over many runs and has primarily been developed for unencoded physical quantum circuits, although it may be extendable to logical circuits. 
In contrast, logical accreditation is designed for encoded logical circuits and enables a probabilistic guarantee on the correctness of a specific run of the target circuit, where the target circuit execution position is uniformly randomised as part of the protocol. 
The two approaches therefore address complementary aspects of establishing trust in noisy quantum computations.

We demonstrate the utility of the logical accreditation framework through numerical simulation of instantaneous quantum polynomial-time (IQP) and Trotterised circuits, also using noisy intermediate-scale quantum (NISQ) accreditation to certify unencoded physical computations \cite{ferracin_experimental_2021}. 
This enables a direct comparison that indicates the crossover point where fault-tolerance delivers a practical advantage over unencoded quantum computation. 
The framework can also be used to extend entropy density benchmarking \cite{stilck_franca_limitations_2021,demarty_entropy_2024} to the fault-tolerant regime, assess whether quantum error mitigation techniques can be efficiently applied to specific logical circuits, and provide an upper bound on the infidelity of the output state of a logical computation.

\vspace{-1.3em}
\section{Preliminaries}

\subsection{Computation with logical qubits encoded using stabiliser codes}

An $[[n,k, d]]$ stabiliser code encodes $k$ logical qubits using $n$ physical qubits, with a code distance $d$, where the code distance is the minimum weight with which any non-trivial logical operator acts. 
A stabiliser group $\mathcal{S}$ is a commutative subgroup of the Pauli group not containing the operator $-I^{\otimes n}$, and its elements form an Abelian subgroup of the Pauli group.
The stabiliser group has a minimal representation in terms of a set of independent generator elements of size $n-k$ i.e. $\mathcal{G}=\{g_1, \ldots,g_{n-k}\}$.
The simultaneous $+1$ eigenspace of a given stabiliser group defines the code-space of a QEC code.
The code-space is then a $2^k$-dimensional subspace of the larger $2^n$-dimensional Hilbert space, and its orthogonal complement is the error-space.
A logical basis of the form $\{\ket{0}_L,\ket{1}_L\}^{\otimes k}$ can be defined in the code-space, along with logical Pauli operators that act on the encoded qubits.

If a state, $\ket{\psi}_L$, is within the code-space of the code, then $P_i\ket{\psi}_L = \ket{\psi}_L $ if $P_i \in \mathcal{S}$.
During cycles of error correction, $n-k$ stabiliser measurements are made, using the elements of the generator group $\mathcal{G}$.
Depending on whether errors have occurred, each stabiliser measurement projects the logical state onto either the $+1$ or $-1$ eigenspace of the measurement operator, where the measurement projector is $P_{g_i, s} = \frac{1}{2}(I + (-1)^s g_i)$ for $g_i \in \mathcal{G}$ and syndrome measurement output $s \in \{0,1\}$.
If the syndrome measurement outputs of all the stabiliser generators are $0$, this indicates the state is within the code-space.
If any syndrome measurements output $1$ this indicates the state is in the error-space.
Repeated rounds of syndrome measurements, with corrections either applied or tracked as errors occur, can be used to protect quantum information stored in the logical code-space for use as a quantum memory.

\begin{figure*}[!]
    \centering
    \begin{subfigure}[t]{0.47\textwidth}
        \centering
        \includegraphics[width=\textwidth]{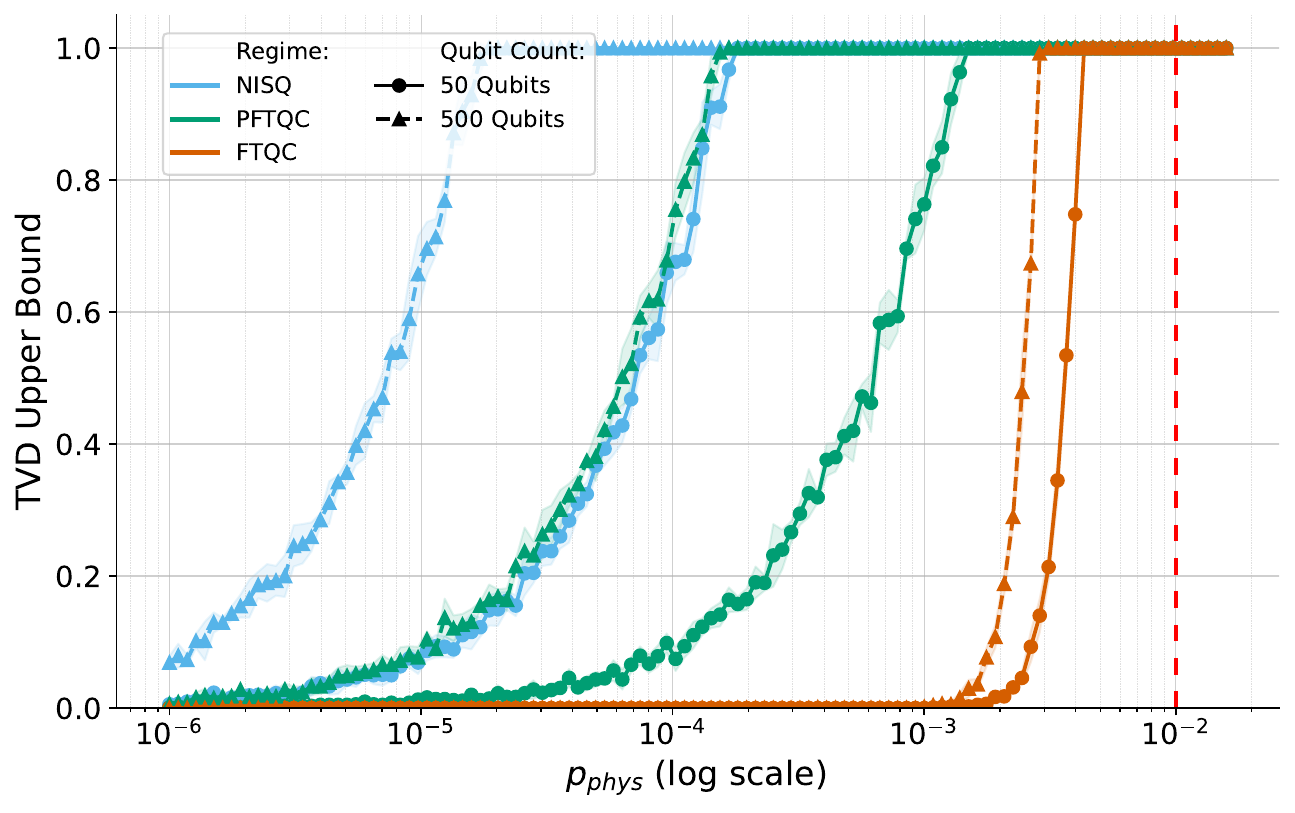}
        \caption{IQP circuits}
        \label{fig:iqp_fixed_layers}
    \end{subfigure}
    \begin{subfigure}[t]{0.47\textwidth}
        \centering
        \includegraphics[width=\textwidth]{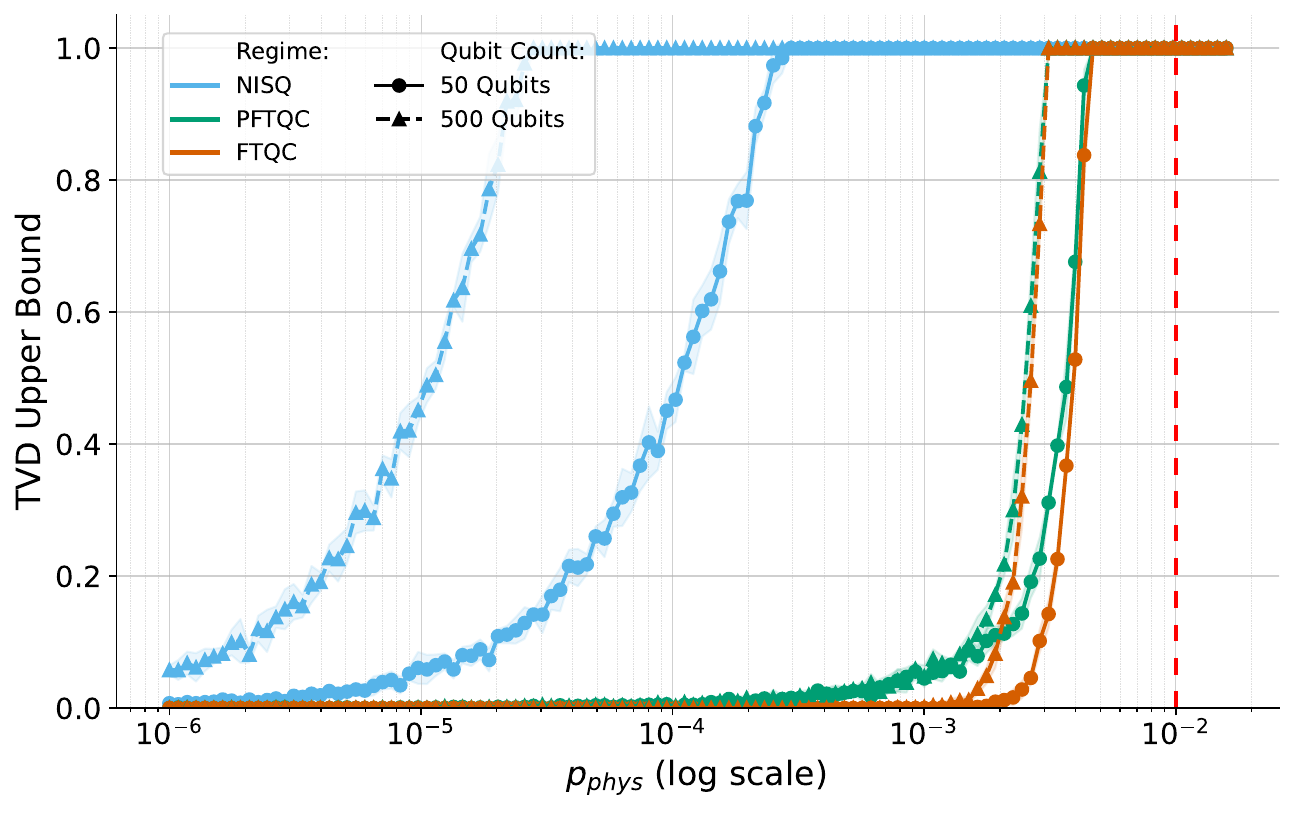}
        \caption{Trotterised circuits}
        \label{fig:chem_fixed_layers}
    \end{subfigure}   
    \begin{subfigure}[t]{0.47\textwidth}
        \centering
        \includegraphics[width=\textwidth]{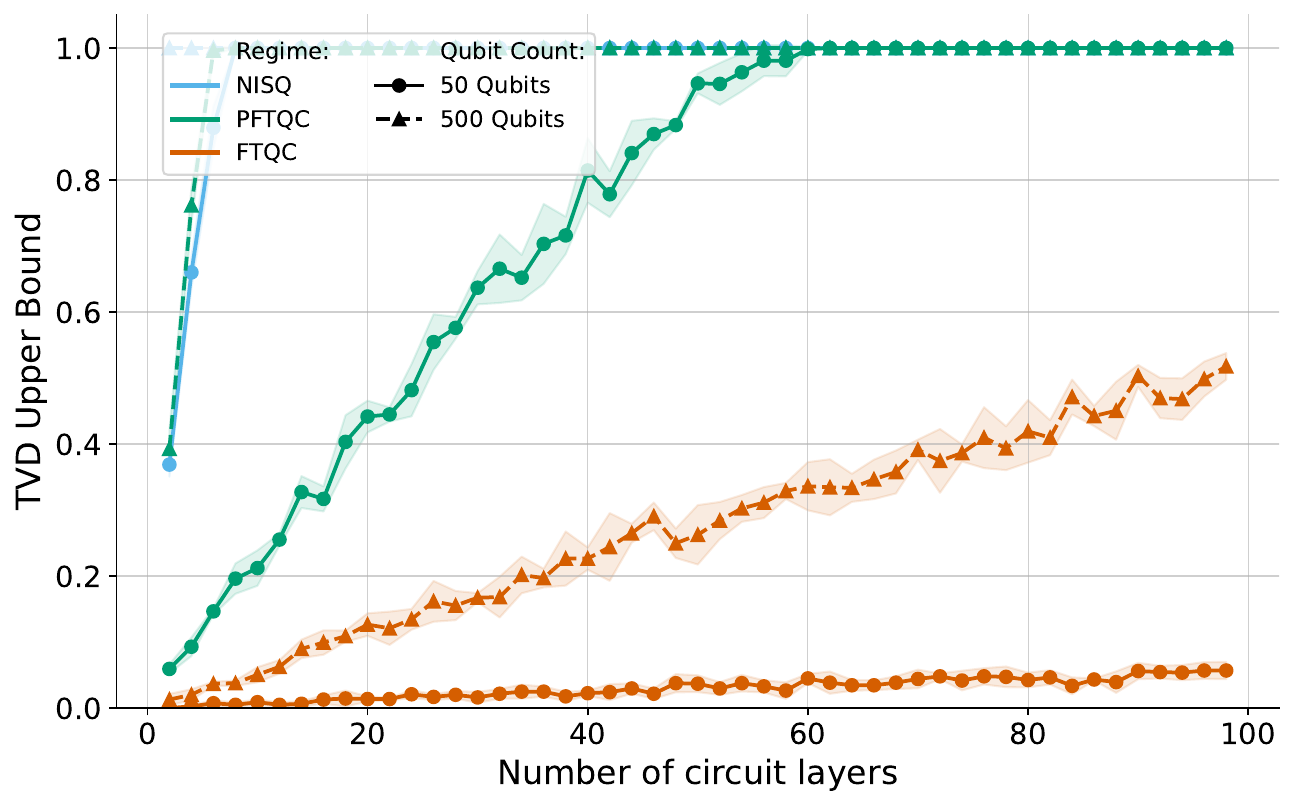}
        \caption{IQP circuits}
        \label{fig:iqp_varying_layers}
    \end{subfigure}
    \begin{subfigure}[t]{0.47\textwidth}
        \centering
        \includegraphics[width=\textwidth]{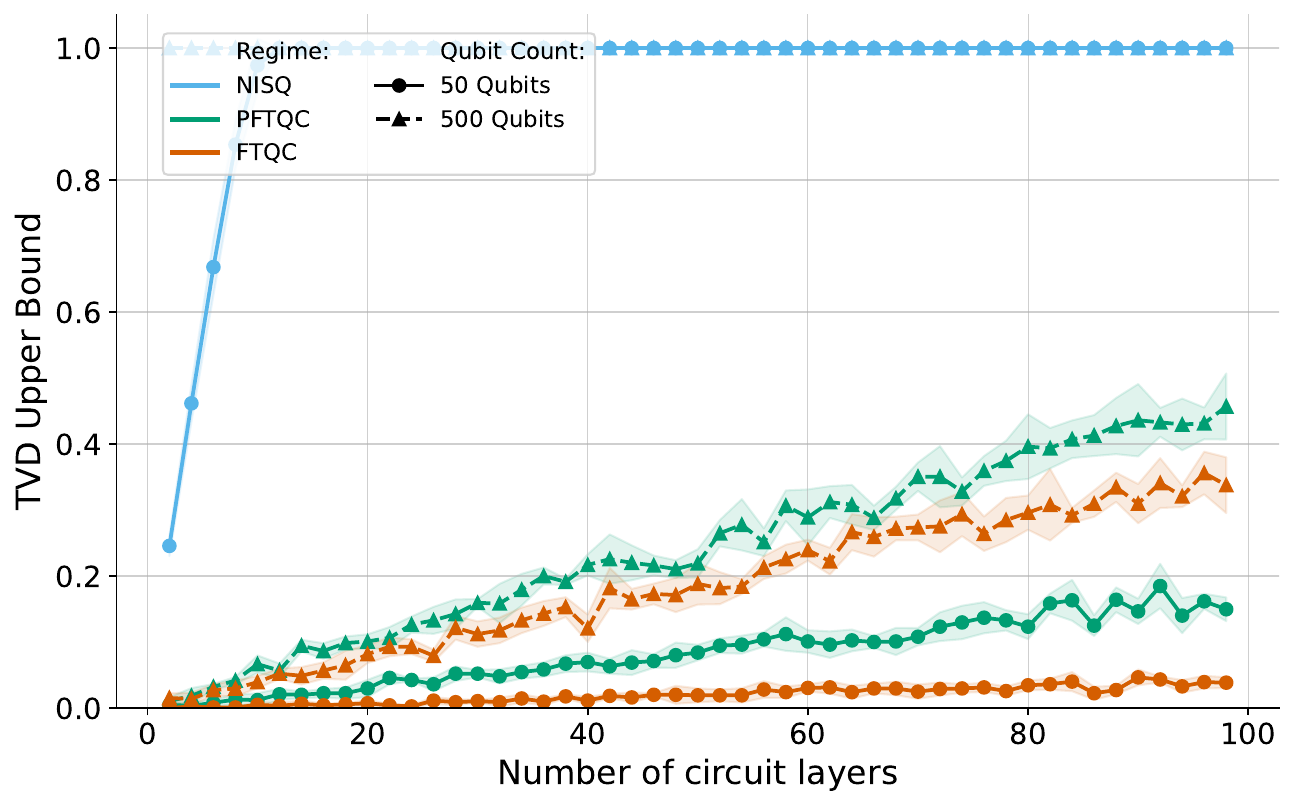}
        \caption{Trotterised circuits}
        \label{fig:chem_varying_layers}
\end{subfigure}   
    \caption{\emph{Numerical simulation results of logical accreditation applied to IQP and Trotterised circuits.} 
    Circuits were run in three regimes: (i) noisy physical qubits (labelled NISQ), (ii) logical qubits without magic state purification (labelled PFTQC), and (iii) logical qubits with magic state purification (labelled FTQC).
    Logical error rates were modelled using surface code parameters with code distance $d=11$. 
    Each protocol run used $500$ trap circuits.
    The method from \cite{ferracin_experimental_2021} was applied to obtain the TVD bounds for NISQ circuits, while for PFTQC and FTQC circuits logical accreditation was instead used. 
    In (a) and (b), TVD upper bounds are plotted against physical error rate for IQP and Trotterised circuits respectively; the number of circuit gate layers was fixed at 40.
    In (c) and (d), TVD upper bounds are plotted against the number of circuit gate layers again for each circuit type; the physical error rate was fixed at $p_{\text{phys}} = 10^{-3}$.}
    \label{fig:iqp_chem_combined}
\end{figure*}

To perform computation with logical states, it is necessary not only to reliably store quantum information in the code-space but also to apply logical gate operations. 
For a given quantum error-correcting code, a subset of logical gates known as transversal gates can be implemented fault-tolerantly without incurring additional qubit overhead. 
The Eastin-Knill theorem states that no quantum error-correcting code has a set of transversal gates with which universal computation can be performed \cite{eastin_restrictions_2009}.

In the logical accreditation framework, we assume that qubits are encoded with a stabiliser code where Clifford operations may be performed transversally, or using alternative fault-tolerant methods like lattice surgery \cite{horsman_surface_2012}.
Non-Clifford gates, required for universal computation, are performed through the consumption of magic states using gate teleportation or projective measurement.
Magic states are quantum states prepared using ancillary logical qubits, and in this work, are assumed to be of the form
\begin{equation}
\begin{split}
\ket{\theta} &:= \frac{1}{\sqrt{2}}\big(\ket{0} + e^{i \theta}\ket{1}\big).\\
\end{split}
\end{equation}
If a magic state is phase rotated by angle $\theta = m \cdot \pi/4$ for $m \in \mathbb{Z}$, then a single round of gate teleportation may be used to apply the desired gate.
If $\theta \neq m \cdot \pi/4$, a repeat-until-success (RUS) approach may instead be used \cite{akahoshi_partially_2024,akahoshi_compilation_2025}.
In RUS gates, the correct rotation by angle $\theta$ is applied with probability 1/2.
If the incorrect rotation is applied, a new magic state rotated by $2\theta$ is prepared and the gate is retried; this process is repeated until the intended rotation is achieved.
The average number of repetitions for RUS gate success is
\begin{equation} \label{RUS_gate_sum}
  1\times \frac{1}{2}+2 \times \frac{1}{4} + \ldots=  \sum_{i=1}^{\infty} \frac{i}{2^i} = 2.
\end{equation} 
A projective Pauli measurement approach can be applied to use magic states to perform high-weight logical Pauli rotations. 
Projective measurements may be performed fault-tolerantly using lattice surgery, and the number of high-weight logical rotation operations can be optimised during circuit compilation \cite{litinski_game_2019}.

\subsection{Fault-tolerant quantum computation} \label{model_logical_noise_section}

A logical qubit encoded using an $[[n, k, d]]$ stabiliser code can tolerate up to $\floor{(d-1)/2}$ physical errors. 
If this bound is exceeded, the decoder may apply an incorrect correction, resulting in a logical error. 
Logical errors are unwanted operations that alter the encoded state without taking it out of the code-space. 
Since they do not produce a detectable syndrome, they are not corrected during rounds of error correction.

The space-time constraints of early fault-tolerant quantum devices will necessitate trade-offs that balance error suppression with computation depth.
These constraints will mean that logical error rates cannot be arbitrarily reduced as the scale of computations increases.
For example, decreasing the code distance and degree of magic state purification will allow larger computations to be performed, but at the cost of increasing logical error rates.
Lower code distances decrease the number of qubits that need to be affected by errors before a logical error can occur, while reducing the degree of magic state purification increases the noise introduced into the logical computation each time a non-transversal non-Clifford gate is performed.
Computational frameworks suitable for early fault-tolerant devices have been proposed in which magic states are not purified, but are prepared and immediately used to perform noisy logical non-Clifford operations \cite{piveteau_error_2021, akahoshi_partially_2024}.

We will distinguish between two regimes of logical computation: \textit{partial fault-tolerance} and \textit{full fault-tolerance}.
Following the convention used in \cite{toshio_practical_2025} and \cite{ akahoshi_partially_2024}, we will use the term partial fault-tolerance to mean logical computation performed without purification of magic states, 
while by full fault-tolerance, we refer to logical computation where magic state purification is implemented so that logical non-Clifford gates are performed with approximately the same error rates as logical Clifford gates.
In Appendix \ref{frameworks_logical_computation}, we describe the Clifford plus $T$ framework for fault-tolerant computation and the Clifford plus noisy $T$ framework for partially fault-tolerant computation. 
We also describe an alternative method for partially fault-tolerant computation that uses a gate set composed of Clifford gates and noisy analog rotation gates.

\vspace{-0.7em}
\section{Logical Accreditation Framework}

\subsection{Overview}

Logical accreditation is a framework for certifying fault-tolerant computations, where logical qubits are encoded using quantum stabiliser codes.
The input to the logical accreditation framework is a classical description of the computation to be run on the quantum device.
In the first step of the protocol, the input computation is compiled into a logical circuit, termed the \textit{target computation}.
The target circuit is compiled using a repeating structure of gate layers containing single- or multi-qubit non-Clifford gates and C$Z$ gates, sandwiched between gate layers containing single-qubit Clifford gates.
Then $M$ distinct logical \textit{trap circuits} are compiled. 
These circuits are designed to mirror the structure of the target circuit but produce a known deterministic bit string output in the absence of noise.
The logical target and trap circuit structures are shown in Fig. \ref{fig:target_and_trap_circuit_structure} (a) and (b), respectively.

Within the framework, a circuit compilation scheme termed logical randomised compiling is introduced to twirl all logical circuit noise into logical stochastic Pauli noise.
If one of the trap circuits is affected by noise, its construction guarantees that, conditional on an error, the ideal output is not measured with a probability lower bounded by a constant, as described in Appendices \ref{trap_detect_errors} and \ref{bounds_on_prob_of_error_canc}.

During logical accreditation, the target and trap circuits are run in a random order on the quantum device using logical qubits encoded using a stabiliser code with the chosen logical gate set.
The measurement outcomes of the trap circuits are then used to certify the accuracy of the target circuit output.
Specifically, the accuracy of the target circuit output is quantified in terms of an upper bound $\gamma$ on the TVD between the experimental target circuit output distribution, $\mathcal{D}_{exp}$, and its ideal output distribution, $\mathcal{D}_{ideal}$.

\vspace{-1em}
\subsection{Logical circuit noise} \label{log_circ_noise_sect}
\vspace{-0.6em}

We will use the term \textit{logical noise} to refer to the effective noise acting on logical qubits after a full error correction cycle (including syndrome extraction and decoding), arising from physical errors that are not successfully corrected.
The logical accreditation protocol makes the following assumptions on the structure of the noise affecting the quantum computation:

\begin{enumerate}[label=(A\arabic*)]
    \item Physical single-qubit gate noise within a single circuit run is gate-independent, but can depend on gate positioning within the circuit.
    \item Logical noise affecting single-qubit logical Clifford gates
    is gate-independent, but can depend on the gate positioning within the circuit.
    \item For each logical circuit run, the joint evolution of the
logical qubits and any environment is described by a completely
positive trace-preserving (CPTP) map, and the noise affecting
distinct logical circuit runs is independent.
\end{enumerate}
In the fully fault-tolerant setting, where magic states are prepared fault-tolerantly, an additional assumption is required for the trap circuit construction:
\begin{enumerate}[label=(A\arabic*), resume]
    \item The overall operation used to combine two target magic states into a single trap magic state is non-noise-decreasing.
    That is, the error rate of the output trap magic state is no smaller than the combined error rates of the two input target magic states.
\end{enumerate}
The above assumptions are mild in the sense that they allow for very general noise to affect the logical circuits during the protocol.
This can be noise that is non-local and highly correlated, including the most common types of quantum device noise such as cross-talk effects, qubit decoherence and amplitude damping, and gate calibration errors.
The robustness of the method to violation of these assumptions is discussed in Section \ref{robustness_sect}.

The differences between the target and trap circuits are limited to modifications of physical and logical single-qubit gates. 
Provided these assumptions are valid, the target and trap circuits are vulnerable to the same types of noise, despite the differences in gate types within the circuits.
Assumption A1 may be physically justified, as single-qubit gates generally have the lowest error rates compared to other quantum operations, making variation between their noise profiles marginal.
If single-qubit logical Clifford gates are performed transversally using only physical single-qubit gates, then Assumption A2 follows as a consequence of Assumption A1.
It is true, however, that for most quantum codes arbitrary single-qubit Clifford gates cannot be implemented transversally, and for these Assumption A2 should be interpreted as an approximate modelling assumption, analogous to the gate-independent noise assumptions commonly used in randomised benchmarking \cite{knill_randomized_2008}. 
This is most physically justifiable when logical Clifford gates are implemented using similar fault-tolerant primitives, such that the dominant noise mechanisms are largely shared across gates. 
The protocol is robust to weak violations of this assumption, as shown in Theorem \ref{robustness_theorem}, so moderate gate-dependent variations in logical noise do not significantly affect the resulting accreditation bound. 
In practice, the validity of this approximation could be assessed experimentally before implementing logical accreditation on a quantum device by characterising and comparing logical error rates across different gates. 
Recent demonstrations of encoded operations suggest that gates implemented using similar fault-tolerant constructions can exhibit comparable logical error rates \cite{erhard_entangling_2021, zhang_demonstrating_2025}.
These works do not establish the gate-independence of logical Clifford noise for a given code, but they do indicate that logical operations implemented using similar fault-tolerant constructions can exhibit logical error rates of comparable magnitude.
It is likely that, as the quality of quantum hardware improves, the logical error rates of these operations will decrease, and any differences between distinct logical operations will become smaller.

The independence assumption of A3 is used to guarantee the convergence of the error bound estimator.
This might be relaxed if some other type of convergence guarantee is provided.
Whether the noise is assumed to be Markovian or non-Markovian determines which soundness bound is used for the protocol; this is discussed in Section \ref{sec:comp_and_sound}.
Assumption A3 does not require the noise to be stationary across circuit runs, but rather that the outputs of distinct runs can be modelled as independent random variables. 
This allows for time-dependent noise, such as drift, provided it does not introduce correlations between circuit runs. 
The randomisation of target and trap circuit ordering ensures that the target and trap circuit performance reflects average device behaviour.
Provided that the circuit outputs can be modelled as independent (though not necessarily identically distributed) random variables, the estimator produced by the protocol remains well-defined and corresponds to an average over the associated time-dependent noise instances.
Under these conditions, standard concentration bounds such as Hoeffding's inequality apply. 
Temporal correlations between runs would violate this assumption and affect the convergence guarantees of the protocol.

Approximations made during circuit compilation can be another source of computational error.
Approximation errors can occur if, for example, the Solovay-Kitaev algorithm is used to approximate arbitrary unitary gates to within an operator norm distance of $\epsilon>0$ from the ideal unitary, with a gate sequence of length $O(\log^c(1/\epsilon))$, for a constant $c>1,$ using only a discrete gate set \cite{kitaev_quantum_1997, kitaev_classical_2002}.
Compilation errors will also occur when approximation methods such as Trotterisation are used to simulate the evolution of a Hamiltonian with time \cite{suzuki_general_1991, berry_efficient_2007, poulin_trotter_2015}.
Although the logical accreditation framework is compatible with such methods, it cannot detect errors caused by compilation approximations.
However, bounds on approximation errors of this kind may often be efficiently computed using classical methods, allowing them to be straightforwardly accounted for.

\vspace{-0.7em}
\subsection{Certification framework}

We now describe the main result of this work.
In the logical accreditation framework, a target circuit and $M$ trap circuits are performed with a random ordering on a quantum device using encoded logical qubits.
During the protocol, logical noise twirling is applied to the target and trap circuits using a random compilation scheme, as detailed in Section \ref{Log_rand_com_section}.
This compilation is designed to convert general logical circuit noise, such as that arising from incorrect decoding or from imperfect magic state preparation, into logical stochastic Pauli noise.
For each possible circuit position $i$, the output state that
would be generated if the target circuit were executed in that
position may be decomposed as the convex combination
\begin{equation}
    \label{eqaccreditation}
    \rho_{out}^{(i)}=(1-p^{(i)}_{err})\rho_{out,id}^{(i)} + p_{err}^{(i)} \rho_{noisy}^{(i)},
    \end{equation}
where the index $i\in \{1,...,M+1\}$ indicates the circuit position, $p_{err}^{(i)}$ is the logical circuit error rate, $\rho_{out,id}^{(i)}$ is the ideal output state in the absence of noise, and $\rho_{noisy}^{(i)}$ is the output state encompassing the effects of logical circuit noise.

The trap circuits are designed to deterministically output a known bit string in the absence of noise.
If a trap circuit is run on the quantum device and the measured output bit string differs from the ideal output, this is recorded as a trap circuit failure.
The logical noise information derived from the trap circuit failures is used to upper bound the TVD of the measured output of the target circuit from the ideal output.

The framework requires a classical description of the target computation as input.
In the first step of logical accreditation, the target computation is compiled into a logical circuit using a gate set that can include the following logical gate operations: single-qubit Clifford gates, C$Z$ gates, and Pauli operator rotation gates of arbitrary weight and rotation angle.
Non-Clifford gates are performed by consuming magic states, using either gate teleportation or a projective measurement scheme.
A repeating circuit structure is used for the circuit compilation, consisting of layers of single-qubit Clifford gates and layers of multi-qubit Clifford gates and non-Clifford single- or multi-qubit Pauli rotation gates.
Each logical qubit within one logical gate layer is acted on by at most one logical gate.
Next, $M$ randomly constructed trap circuits are generated. 
These circuits replicate the structure of the target circuit, but with all non-Clifford logical circuit components replaced by Clifford operations. 
Each state injection routine, such as those used to implement $T$ gates, is modified to inject a logical $S$ state. 
These substitutions are designed to preserve the logical noise profile of the target circuit while ensuring that each trap circuit yields a deterministic output. 
The target and trap circuit structures are shown in Fig. \ref{fig:target_and_trap_circuit_structure} (a) and (b), respectively, and the trap circuit construction is discussed in Section \ref{sec:construction-trap-computations}.

The target and trap circuits are run on the quantum device with a random ordering, and their measured bit string outputs are recorded. 
The measured output bit strings of the trap circuits are used to compute an upper bound on the TVD between the experimental and ideal target circuit output distributions.

The output of logical accreditation is summarised by the following statement:
\vspace{0.1em}
\vspace{3em}\begin{theorem} \label{main_theorem}
The logical accreditation protocol provides an upper bound on the total variational distance (TVD) between the experimental target circuit output distribution and the ideal output distribution of the form
\begin{equation}\frac{1}{2}\sum_{s \in \{0,1\}^n} |p_{exp}(s) - p_{ideal}(s)| \leq \gamma, 
\end{equation}
where $\gamma$ is experimentally estimated from the measured outputs of the logical trap computations.
The upper bound holds with $(1-\alpha)$-confidence when the number of trap circuits used for the protocol satisfies the inequality $M\geq\frac{1}{2\epsilon^2}\log(\frac{2}{\alpha})$.
The TVD bound holds under the previously stated assumptions \textnormal{A1-A3}.
\end{theorem}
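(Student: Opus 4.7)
The plan is to chain three estimates: (i) a TVD bound in terms of the target circuit's logical error probability $p_{err}^{(T)}$, (ii) an identification of $p_{err}^{(T)}$ with the corresponding error probability of every trap, up to a constant-factor correction for undetected errors, and (iii) a Hoeffding-style concentration argument that turns the observed empirical trap failure rate into a high-confidence estimate of $\gamma$.

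First I would exploit Eq.~(2). Taking $i$ to label the target circuit and measuring in the computational basis, the mixture structure
\begin{equation*}
\rho_{out}^{(T)}=(1-p_{err}^{(T)})\rho_{out,id}^{(T)} + p_{err}^{(T)} \rho_{noisy}^{(T)}
\end{equation*}
combined with convexity of the $\ell_1$ norm on probability distributions immediately yields
\begin{equation*}
\tfrac{1}{2}\sum_{s} |p_{exp}(s) - p_{ideal}(s)| \;\leq\; p_{err}^{(T)}.
\end{equation*}
Certifying the TVD therefore reduces to upper bounding the target's logical-error probability.

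Next I would argue that the traps faithfully probe $p_{err}^{(T)}$. Each trap mirrors the layered structure and Pauli framing of the target, differing only in that state-injection routines prepare identity magic states so that the ideal trap output is a deterministic bit string. Under assumptions A1--A3, logical noise is gate-independent and position-dependent only, so the twirled noise channel acting on each circuit layer is statistically identical in the target and in every trap; consequently $p_{err}^{(T)} = p_{err}^{(j)}$ for each trap $j$. I would then invoke the trap construction of Appendix~\ref{trap_detect_errors} together with the cancellation estimates of Appendix~\ref{bounds_on_prob_of_error_canc} to conclude that a randomly compiled trap flips its deterministic output with probability at least $c \cdot p_{err}^{(j)}$, for an explicit constant $c$ coming from the random Pauli frames. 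Defining $\gamma$ to be the rescaled estimator built from the sampled trap failures then makes the TVD bound hold deterministically once $p_{fail}$ is accurately estimated.

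Finally I would convert this into a finite-sample guarantee. Assumption A3 makes the individual trap outcomes independent Bernoulli variables with common mean $p_{fail}$, so Hoeffding's inequality bounds the probability that the empirical estimator $\hat{p}_{fail}$ deviates from $p_{fail}$ by more than $\epsilon/2$ by $2\exp(-M\epsilon^2/2)$. Requiring this to be at most $(1-\alpha)/2$ and union-bounding over the two sources of uncertainty (the empirical failure-rate estimate and its rescaling into a TVD bound) delivers the stated sample complexity $M>\frac{2}{\epsilon^2}\log\!\frac{4}{1-\alpha}$. The hard part will be Step~2: pinning down the explicit constant $c$ governing trap sensitivity, which requires a case analysis over where in the layered structure a logical error can appear, showing that the random compilation suppresses the systematic Pauli cancellations that would otherwise hide errors from the trap's deterministic output. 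This is where almost all of the real technical work of the theorem resides, whereas the TVD-convexity and Hoeffding pieces are essentially bookkeeping.
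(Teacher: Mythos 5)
Your overall architecture matches the paper's: bound the TVD by $p_{err}$ via the convex decomposition of Eq.~(2) (the paper does this through the trace distance, which is equivalent), lower-bound the trap failure probability by $p_{err}\cdot(1-p_{canc})\cdot p_{det}$ using the detection lemma ($p_{det}\geq 1/2$) and one of the cancellation lemmas ($p_{canc}\leq\beta$), and finish with Hoeffding. You are also right that the real technical weight sits in the detection/cancellation constants rather than in the bookkeeping.

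There is, however, a genuine gap in your concentration step, and it is exactly where the constants $\frac{2}{\epsilon^2}$ and $\log\frac{4}{1-\alpha}$ come from. You assume the trap outcomes are i.i.d.\ Bernoulli with a \emph{common} mean $p_{fail}$, asserting $p_{err}^{(T)}=p_{err}^{(j)}$ for every trap $j$. But assumption A3 only grants \emph{independence} of the noise across distinct circuit runs, not identical noise: the per-trap failure probabilities $p_{inc}^{(i)}$ are themselves random draws from a distribution of noise behaviours, and the target is only guaranteed to see the \emph{expectation} of that distribution. The paper therefore splits $|\bar X_M - \mathbb{E}(p_{inc})|$ by the triangle inequality into a sampling-error term $|\bar X_M - \mathbb{E}(\bar X_M)|$ and an ensemble-fluctuation term $|\mathbb{E}(\bar X_M) - \mathbb{E}(p_{inc})|$, applies Hoeffding to each at accuracy $\epsilon/2$, and union-bounds to get confidence $1-4e^{-\epsilon^2 M/2}$, which is what produces $M>\frac{2}{\epsilon^2}\log\frac{4}{1-\alpha}$. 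Under your i.i.d.\ assumption a single Hoeffding at accuracy $\epsilon$ would suffice and would give the strictly smaller requirement $M>\frac{1}{2\epsilon^2}\log\frac{2}{1-\alpha}$ --- so your setup does not actually reproduce the stated sample complexity. The union bound you do invoke, over ``the empirical failure-rate estimate and its rescaling into a TVD bound,'' cannot repair this: the rescaling by $\big((1-p_{canc})p_{det}\big)^{-1}$ is a deterministic operation on the estimator, not a second probabilistic failure event, so it contributes no factor of $2$ to the confidence. The missing idea is that the second failure event is the deviation of the realised ensemble of noise instances from its population mean.
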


The derivation of this bound can be found in Appendix \ref{computing_bound_app}. 
Logical accreditation can be applied to compute both distributions and expectation values with certified error bounds.
The protocol can be repeated an arbitrary number of times to generate a certified set of output bit strings for the target computation, as shown in Fig. \ref{fig:simple_schematic_diagram} (b).
Alternatively, the framework could be modified to certify multiple target circuit samples at each execution of logical accreditation, rather than just one.
Logical accreditation also provides a practical means to certify
observable estimation. For any Hermitian observable $O$, the
expectation-value error satisfies
\begin{equation}
\left|
    \langle O\rangle_{\rho_{\mathrm{exp}}}
    -
    \langle O\rangle_{\rho_{\mathrm{id}}}
\right|
\leq
2\gamma\|O\|_{\infty}
\end{equation}
with confidence at least $1-\alpha$, where
$\rho_{\mathrm{exp}}$ is the experimental target output state
averaged over the uniformly random target execution position,
$\rho_{\mathrm{id}}$ is the ideal target output state, and
$\|O\|_{\infty}$ denotes the operator norm, also called the
spectral norm.
Furthermore, it can be used to upper-bound the infidelity of the logical output state of the target circuit, as:
\begin{equation} \label{main_text_infidelity_bound}
1 - F(\rho_{out,id}, \rho_{out}) \leq \gamma,
\end{equation}
where $F(\cdot,\cdot)$ denotes the fidelity between two quantum states, and $\rho_{out,id}$ and $\rho_{out}$ are the ideal and experimentally generated target circuit output states; this result is derived in Appendix \ref{sec:upper-bound-infidelity}, along with numerical analysis plotted in Fig. \ref{fig:tvd-tcr-infidelity} comparing the eqn. \ref{main_text_infidelity_bound} bound with the infidelity of states generated using IQP circuits.

\vspace{-0.7em}
\subsection{Robustness} \label{robustness_sect}
\vspace{-0.5em}

If the previously stated noise assumptions are violated, the bound provided by logical accreditation is then of the form: $\gamma' = \gamma + \epsilon_d$, where $\gamma$ is the upper bound provided by the protocol if the assumptions hold,  $\gamma'$ is the upper bound if the assumptions are violated, and $\epsilon_d$ is the difference between these two induced by violation of the assumptions.

The following result is derived in Appendix \ref{robustness_derivation}:
\vspace{2em} \begin{theorem} \label{robustness_theorem}
    If Assumptions A1 and A2 are violated, the absolute difference in the computed upper bound, $\gamma'$, from the upper bound computed where the assumptions hold, $\gamma$, is bounded
     \begin{equation} \label{robustness_bound}
 | \gamma - \gamma'| \leq [M(1-\beta)]^{-1}\sum_k\sum_j || \mathcal{E}_{j}^{(k)} - \mathcal{E}_{j}^{(k)}{'}||_{\diamond} ,
 \end{equation}
 where $\{\mathcal{E}_{j}^{(k)}\}_j$ and $\{\mathcal{E}_{j}^{(k)}{'}\}_j$ denote the noise channels affecting the $k$-th trap circuit where the protocol assumptions hold, and where the assumptions are violated, respectively.
\end{theorem}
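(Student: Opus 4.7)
The plan is to reduce the bound on $|\gamma - \gamma'|$ to a bound on the per-trap failure probabilities, and then bound those using the diamond-norm triangle inequality for sequential compositions of CPTP maps.

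First I would observe that, by the construction of the protocol and the derivation of Theorem \ref{main_theorem}, the quantity $\gamma$ is a linear function of the empirical trap-failure frequencies $\{v_k\}_{k=1}^M$, and in expectation this reduces to $\gamma = f(\bar{p}_{\text{err}})$ with $\bar{p}_{\text{err}} := M^{-1} \sum_{k=1}^M p_{\text{err}}^{(k)}$, where $p_{\text{err}}^{(k)}$ is the probability that the $k$-th trap circuit produces the wrong bit string as per eqn.~\eqref{eqaccreditation}. Since the map from average trap failure probability to $\gamma$ is Lipschitz with constant $1$ (the confidence interval from Theorem~\ref{main_theorem} contributes only lower-order terms controlled separately by $\epsilon$ and $\alpha$), one obtains $|\gamma - \gamma'| \leq M^{-1}\sum_k |p_{\text{err}}^{(k)} - p_{\text{err}}^{(k)}{'}|$.

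Next I would bound $|p_{\text{err}}^{(k)} - p_{\text{err}}^{(k)}{'}|$ for each $k$. Model the $k$-th trap circuit as a sequence of ideal (unitary or measurement) operations interleaved with noise channels: under the valid assumptions these are $\{\mathcal{E}_j^{(k)}\}_j$, and under violation they become $\{\mathcal{E}_j^{(k)}{'}\}_j$. The trap failure probability equals $\operatorname{Tr}(\Pi \, \Phi^{(k)}(\rho_0))$ for some POVM element $\Pi$ with $0 \preceq \Pi \preceq I$, where $\Phi^{(k)}$ is the full noisy channel on the code space (and analogously $\Phi^{(k)}{'}$ for the violated case). By the standard operational interpretation of the trace distance, $|p_{\text{err}}^{(k)} - p_{\text{err}}^{(k)}{'}| \leq \tfrac{1}{2}\|\Phi^{(k)}(\rho_0) - \Phi^{(k)}{'}(\rho_0)\|_1 \leq \tfrac{1}{2}\|\Phi^{(k)} - \Phi^{(k)}{'}\|_\diamond$.

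The remaining step is a telescoping argument: writing $\Phi^{(k)} - \Phi^{(k)}{'}$ as a sum of terms, each of which replaces one $\mathcal{E}_j^{(k)}{'}$ by $\mathcal{E}_j^{(k)}$, and using that $\|\cdot\|_\diamond$ is multiplicative under tensor products and submultiplicative (in fact contractive, $\|\mathcal{A}\circ\mathcal{B}\|_\diamond \leq \|\mathcal{A}\|_\diamond \|\mathcal{B}\|_\diamond$, with $\|\mathcal{A}\|_\diamond = 1$ for CPTP $\mathcal{A}$) under composition, yields
\begin{equation}
\|\Phi^{(k)} - \Phi^{(k)}{'}\|_\diamond \leq \sum_j \|\mathcal{E}_j^{(k)} - \mathcal{E}_j^{(k)}{'}\|_\diamond.
\end{equation}
Combining the three steps and absorbing the factor of $\tfrac{1}{2}$ into the stated (loose) bound gives exactly eqn.~\eqref{robustness_bound}.

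I expect the main technical obstacle to be the first step: carefully justifying that $\gamma$, as actually computed by the protocol from the raw trap data, is close enough to a linear function of the true per-trap error probabilities that no additional diamond-norm terms appear. In particular one must check that the averaging over the random compilation (the logical randomised compiling layer introduced in Appendix \ref{log_circ_twirling}) commutes with the comparison between the two noise models, so that the reduction to a sum over fixed noise channels $\{\mathcal{E}_j^{(k)}\}_j$ is valid; once that is in place, the remaining telescoping and diamond-norm manipulations are textbook.
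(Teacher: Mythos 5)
Your proposal is correct and follows essentially the same route as the paper: the telescoping/one-channel-at-a-time replacement in step three is exactly the paper's Lemma~\ref{diamond_norm_lemma} (subadditivity of the diamond norm under composition of CPTP maps) applied iteratively to the layered circuit, and the data-processing step from output statistics to the diamond norm, followed by averaging over the $M$ traps, matches the paper's argument. The only caveat, shared by the paper itself, is that the Lipschitz constant relating $\gamma$ to the average trap-failure probability is $\tfrac{2}{1-\beta}$ rather than $1$, so the cancellation against the factor $\tfrac{1}{2}$ from the trace distance is exact only in the $\beta=0$ case.
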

This implies that any deviation in the bound due to violation of the assumptions depends only linearly on the diamond norm distance of each of the corresponding noise channels affecting each circuit.
If the assumptions are only weakly violated, i.e. each term $|| \mathcal{E}_{j}^{(k)} - \mathcal{E}_{j}^{(k)}{'}||_{\diamond} $ is small for all $j$ and $k$, the outputs of the trap circuits remain close to those obtained when the assumptions hold.
Therefore, the protocol is robust to any small violations of the assumptions, since these will have a small effect on the computed upper bound.

\begin{figure}
    \begin{subfigure}[t]{0.47\textwidth}
        \includegraphics[width=\textwidth]{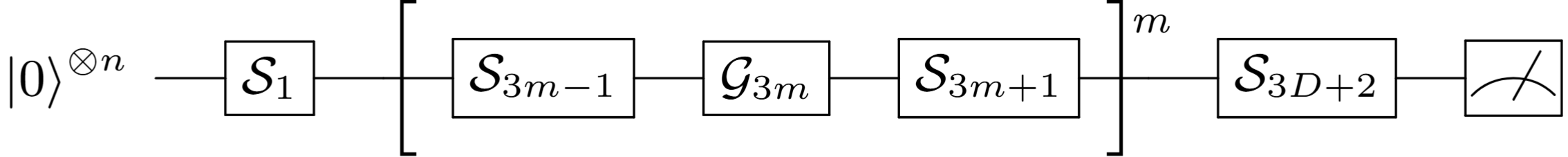}
       \caption{}
    \end{subfigure}
    \hspace{4em}
    \begin{subfigure}[t]{0.47\textwidth}   
            \includegraphics[width=\textwidth]{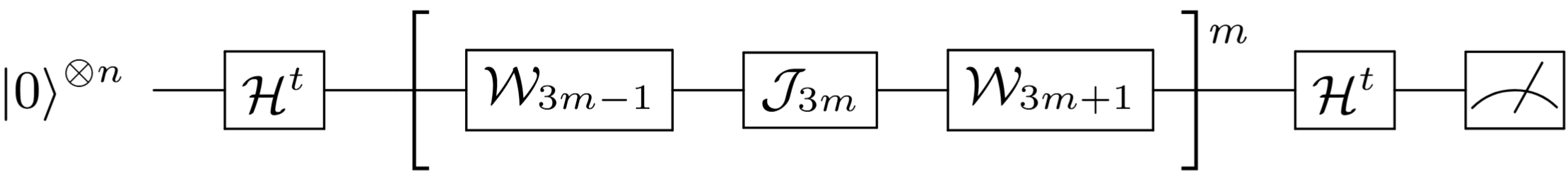}
       \caption{}
    \end{subfigure}
\caption{\textit{The logical circuit structure used for compilation of target and trap circuits, including state preparation and measurement.} 
The logical gate layers are numbered by subscript with their ordering in time.
The bracketed parts of the circuits indicate a repeated structure, with $m \in \{1,\ldots,D\}$, where the gates within each of the layers can change between repetitions.
The circuit diagram in (a) shows the structure used to compile the target circuit, where the notation $\mathcal{S}$ denotes a layer of single-qubit Clifford gates, and $\mathcal{G}$ a gate layer that can contain multi-qubit Clifford gates, non-Clifford gates and identity gates. 
The circuit diagram in (b) shows the trap circuit structure used for compilation, where $\mathcal{H}$ denotes a layer of Hadamard gates, with $t \in \{0,1\}$, $\mathcal{W}$ a single-qubit randomising gate layer containing $S$, $S^{\dagger}$ and $H$ gates, and $\mathcal{J}$ a gate layer that can contain multi-qubit Clifford gates, trap circuit versions of the corresponding non-Clifford gates found in the target circuit, and identity gates. 
The logical randomised compiling gate layers are omitted from these circuit diagrams for the sake of simplicity.
}
\label{fig:target_and_trap_circuit_structure}
\end{figure}

\vspace{-0.8em}
\subsection{Completeness and soundness} \label{sec:comp_and_sound}
\vspace{-0.5em}

If the measured output of a trap circuit deviates from the ideal output, a logical error is guaranteed to have occurred.
Therefore, the false negative rate of the protocol is 0, and it has perfect completeness.
There are two factors to be considered for the protocol soundness:
(1) The probability that if an error affects a trap circuit that it will be detected.
(2) The probability that if multiple errors affect a trap circuit, they cancel and so are not detected.
In Appendix \ref{trap_detect_errors}, we show that any single logical Pauli error of arbitrary weight occurring at any single time-step during a trap circuit is detected with probability $p_{det} \geq 1/2$.
We derive upper bounds on the probability of error cancellation for three different scenarios.
Non-Markovian noise can result in classically correlated errors that may cancel in the trap circuits.
The three scenarios reflect different assumptions that can be made about the Markovianity of the noise.
For each scenario, an upper bound on the probability of error cancellation is provided.
First, we show that if logical error rates are sufficiently low and noise is Markovian then error cancellation can be neglected. 
Second, if the dominant sources of error are multi-qubit Clifford gates and gates requiring magic states, then the probability of error cancellation is bounded by $p_{canc} \leq 1/2$.
This bound is suitable under the assumption that the dominant gate noise is non-Markovian, i.e. the noise from multi-qubit Clifford gates and gates requiring magic states, while single-qubit Clifford gate noise is Markovian or sufficiently weak that cancellation effects can be neglected.
And third, we show that with a slight modification to the trap circuit construction, the probability of any error cancellation is bounded by $p_{canc} \leq 7/8$.
For this result, no assumptions are made about the Markovianity of the noise.
These results are derived in Appendix \ref{bounds_on_prob_of_error_canc}.

Combining the bound for error detection probability with each of the three bounds for error cancellation probability yields three different leading-order soundness parameter values: $\delta_1 = 1/2$, $\delta_2 = 3/4$, and $\delta_3 = 15/16$.
The approximation errors associated with $\delta_1$ and $\delta_2$ are bounded in Lemmas \ref{neglect_canc} and \ref{only_J_canc_bound}, respectively.
These soundness parameters upper-bound the false positive rate of the trap circuits, and may be used to derive different TVD bounds. 
This is described in Appendix \ref{computing_bound_app}.
Fig. \ref{fig:heatmap-fp-pftqc} shows a heatmap of false positive rates for different numbers of qubits and physical error rates when performing IQP circuit sampling in the regime of full fault-tolerance. 
The simulations were run for $40$-layer circuits, varying the number of logical qubits from 5 to 50 and the physical error rate from $10^{-3}$ to $5 \times 10^{-2}$. 
In all cases, the experimentally observed false positive rates are well below the theoretical soundness bounds.

\begin{figure}
    \centering
    \includegraphics[width=1\linewidth]{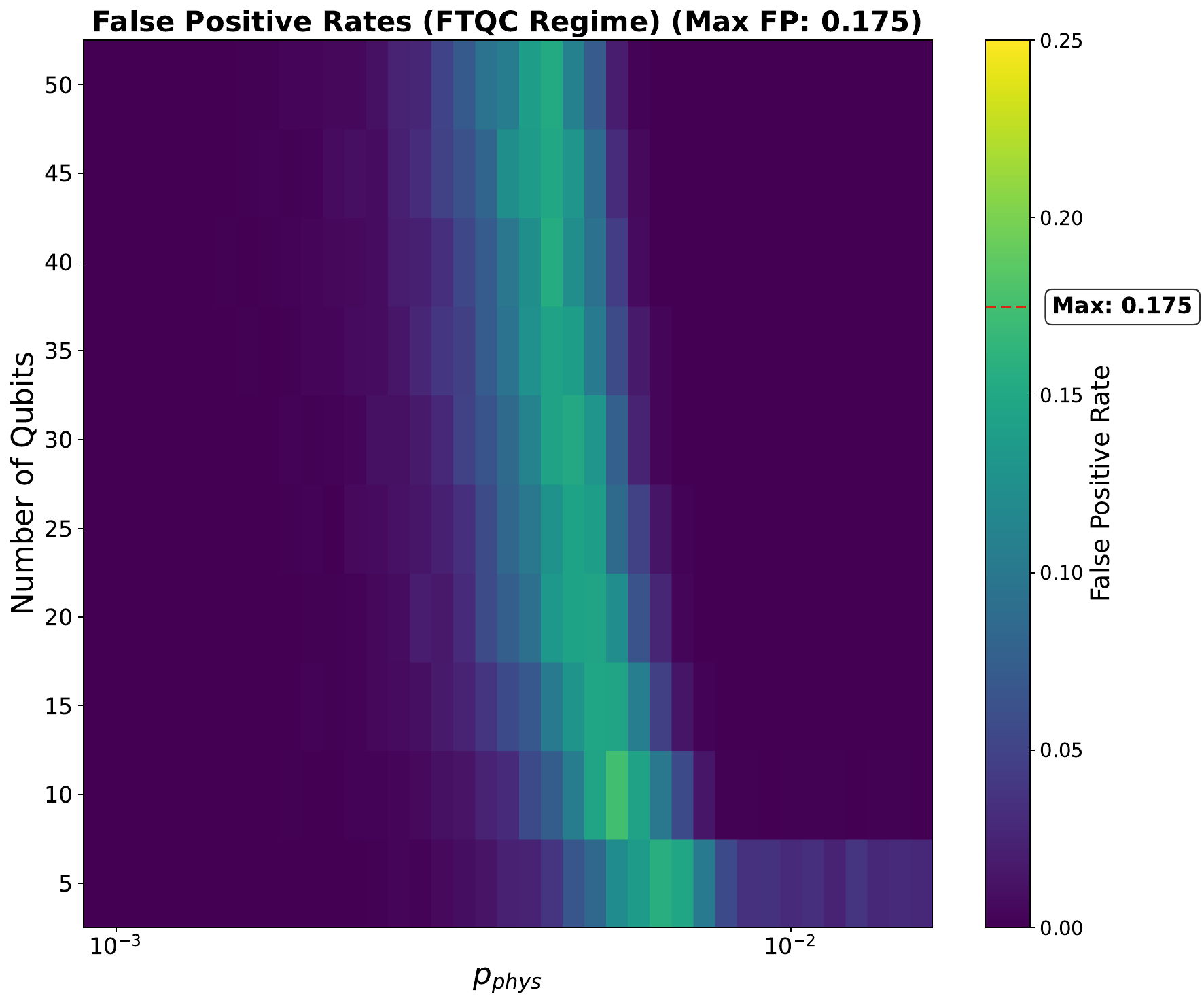}
    \caption{\emph{Soundness heatmap for IQP circuit sampling.} 
    This heatmap shows the trap circuit false positive rates for different numbers of logical qubits and physical error rates for IQP circuit sampling in the regime of full fault-tolerance (FTQC). 
    A logical depolarising noise model was used for the numerical simulations, and the false positive events resulted from error stabilisation and cancellation effects. 
    The false positive rates observed are well below the analytical soundness bounds provided in Section \ref{sec:comp_and_sound}. }
    \label{fig:heatmap-fp-pftqc}
\end{figure}

\vspace{-1.4em}

\subsection{Trap circuit construction}
\label{sec:construction-trap-computations}

\vspace{-0.6em}
The randomly generated trap circuits have the same structure as the target circuit, and stabilise the initialised logical state in the absence of logical errors.
Each of the logical gate operations performed during a trap circuit either compiles to an identity operation, or to an operation that stabilises the input logical state. 
This means the trap circuits provide a deterministic bit string output in the absence of noise.

In the trap circuits, the positioning and type of the C$Z$ gates are identical to those in the target circuit, but are sandwiched uniformly at random by $S$ and $H$ logical gates.
The randomly applied $S$ and $H$ operations, combined with the C$Z$ gates, are logically equivalent to randomly oriented CNOT gates.
Layers of logical Hadamard gates are included at the beginning and end of the trap circuits with probability $1/2$, otherwise layers of identity operations are applied. 
The randomly chosen $S$ and $H$ gates prevent the same Pauli noise from cancelling in different trap circuits, meaning the trap circuits detect logical Pauli errors affecting the trap circuits with probability lower-bounded by a constant.

Those gates in the trap circuits corresponding to logical single- and multi-qubit Pauli rotation gates in the target circuit are modified to stabilise the initialised logical state.
This is primarily achieved through the application of different magic state preparation procedures.
We will refer to magic states used in the target circuit as target magic states, and magic states used in the trap circuits as trap magic states.
If no magic state purification is used for the target and trap computations, the protocol overhead consists solely of an additional sampling cost, which scales independently of system size.
Two versions of each trap circuit are run on the quantum device. 
The first involves only preparation of magic states of the form $\ket{\pi}$, and the second only of the form $\ket{\pi/2}$.
In the first case, a fault-tolerant $S^\dagger$ gate is then applied immediately before injection, converting each state into a $\ket{\pi/2}$ state.
If either version of the trap circuit records an error when run on the quantum device, this is recorded as failure for that trap circuit.
Together, the $\ket{\pi}$ and $\ket{\pi/2}$ magic states are vulnerable to all possible logical Pauli errors that might occur during imperfect magic state preparation.

If methods for the fault-tolerant preparation of magic states are applied, including transversal, cultivation-based, distillation-based, and code-switching methods, a different procedure is used to generate the trap magic states.
If there are $K$ non-Clifford gates in the target circuit, each requiring a magic state, then in the trap circuits gate teleportation is applied to pairs of fault-tolerantly prepared $\ket{\pi/4}$ magic states to produce $K/2$ $\ket{\pi/2}$ states. 
These are assigned uniformly at random to the corresponding gates in the trap circuit.
The remaining $K/2$ gates requiring magic states in the trap circuits are performed using $\ket{\pi/2}$ states prepared directly in a fault-tolerant manner, i.e. without the need for purification, cultivation, or code switching methods.
This trap circuit construction has no additional magic state or sampling overhead. 
However, it does require an additional assumption regarding the noise behaviour of the magic state combination step, namely that the operation used to combine two  $\ket{\pi/4}$ states into a single trap magic state is non-noise-decreasing. 
And so the effective error rate of each resulting trap magic state is no smaller than the combined error rates of its two input target magic states. 
This ensures that the trap circuits may be used to provide a conservative bound on the target circuit logical error rate.
If it is assumed that fault-tolerantly prepared $\ket{\pi/4}$ and $\ket{\pi/2}$ states are of the same quality, then no additional overhead or assumption is required.

Under the previously stated noise assumptions, the trap circuit construction ensures that both the target and trap circuits are subject to the same distribution of possible logical noise channels.
More details are provided in Appendix \ref{trap_circuits_section}.

\subsection{Logical randomised compiling} \label{Log_rand_com_section}

The logical accreditation protocol employs \textit{logical randomised compiling}, a compilation scheme we introduce for Pauli twirling logical noise channels.
Inspired by the NISQ circuit compilation method of randomised compiling \cite{wallman_noise_2016}, this scheme applies to encoded logical qubits and gates performed by the consumption of magic states.
Logical Pauli twirling operations are directly integrated into the target and trap circuits without changing their computational logic. 
The twirling induces effective noise channels, transforming general logical noise, including, for example, coherent noise that can adversely affect logical qubits \cite{bravyi_correcting_2018}, into stochastic logical Pauli noise. 
The logical circuits then produce measurement outcomes consistent with their being subject to the effective stochastic Pauli channels induced by the twirling.

We also introduce a novel method for twirling arbitrary logical non-Clifford gates, addressing an open problem originally outlined in Piveteau et al.~\cite{piveteau_error_2021}. 
This involves independently twirling the magic state preparation noise and the noise from the injection gadget.
The method described in \cite{piveteau_error_2021} for twirling $\ket{\pi/4}$ magic states using the gate set $\{I, e^{-i\pi/4}SX\}$ relied on the $T$ gate being in the third level of the Clifford hierarchy, while our method avoids this requirement.
The only non-Clifford component of magic state injection is magic state preparation, with all other operations being Clifford operations.
Each component of magic state injection is Pauli twirled at the logical level.
For non-fault-tolerantly prepared arbitrarily phase-rotated magic states, the rotation angle is updated depending on each twirling instance.
We require that the phase rotation of $\ket{\theta}$ magic states depends on physical single-qubit rotation gates, rather than physical multi-qubit rotation gates.
Using Assumption (A1), the phase-rotation of the non-fault-tolerantly prepared magic states can then be updated without changing the state-preparation noise.
As the fault-tolerantly prepared magic states do not require any updates to the state preparation,  
the logical randomised compiling method is compatible with standard fault-tolerant magic state preparation methods, including transversal, distillation-based, and cultivation protocols. 
All implementation details of the logical randomised compiling scheme are provided in Appendix~\ref{log_rand_comp}.

There is evidence that for high code distances, logical noise affecting qubits encoded using the surface code converges toward logical stochastic Pauli noise \cite{bravyi_correcting_2018}.
However, if the number of available qubits is limited, this will restrict the achievable code distance.
This necessitates the use of logical randomised compiling. 
Recent work showed that it is possible to twirl over a subset of the Clifford group that commutes with a non-Clifford logical gate operation \cite{tsubouchi_symmetric_2025}.
We note that the approach that we propose could be adapted to perform logical non-Clifford gate twirling over the full Clifford group.

\subsection{Overhead} \label{overhead_section}

We now provide details regarding the resource overhead required to implement logical accreditation in terms of additional sampling cost and circuit depth.

The sampling overhead of logical accreditation comes from estimating the trap circuit failure rate to a desired statistical precision. 
Increasing the number of trap circuit runs reduces the statistical uncertainty in the estimated $\gamma$ bound. Estimating the mean trap failure probability to within additive error $\epsilon$ with confidence at least $1 - \alpha$ requires $M = O\big( \epsilon^{-2} \log(1/\alpha) \big)$ random trap circuit runs. 
Importantly, this sample complexity is independent of system size and depends only on the desired statistical precision.

Recompiling a given target circuit into the logical circuit structure shown in Fig.~\ref{fig:target_and_trap_circuit_structure} incurs at most a constant-factor increase in logical depth. For example, in a Pauli-based computation, all operations contributing to the logical unitary are contained within the $\mathcal{G}$-type layers. The inclusion of additional $\mathcal{S}$-type layers therefore increases the depth by at most a constant factor (at most three in the construction considered here).
In practice, for standard compilation schemes (e.g. Clifford+$T$), the increase can be smaller as such circuits already involve alternating layers of Clifford and non-Clifford operations.

In partially fault-tolerant regimes, where magic states are prepared directly, the protocol incurs a constant factor of two increase in the number of samples. 
This is because two versions of the trap magic states are required, increasing the number of trap circuit runs by a factor of two.
In fully fault-tolerant settings, including transversal, cultivation-based, distillation-based, and code-switching approaches to magic state preparation, the protocol can be implemented with no additional magic state or sampling overhead. 
This is achieved by preparing the same number of magic states as required for the target circuit and combining pairs of these states to form trap magic states, which are then assigned uniformly at random to the relevant trap circuit gates.

Overall, logical accreditation introduces at most constant overheads in all relevant resources.

\vspace{-0.7em}

\section{Numerical simulations}\label{sec:numerics}

The ability to quickly assess the accuracy of applications run on fault-tolerant quantum devices is essential to their practical utility.
Instantaneous Quantum Polynomial (IQP) circuit sampling and Hamiltonian simulation are two tasks for which certification of computational accuracy is crucial when it comes to assessing whether a quantum advantage has been achieved \cite{shepherd_temporally_2009,bremner_achieving_2017,hangleiter_fault-tolerant_2025}.
We apply the logical accreditation framework in numerical simulations of logical qubits encoded using the surface code to certify the accuracy of IQP circuit sampling and Hamiltonian simulation.
The results of these numerical experiments are presented in the following two subsections.

\vspace{-0.7em}

\subsection{Simulation details}

We numerically simulate partially and fully fault-tolerant circuits to test the logical accreditation framework. 
For comparison, we also use an existing NISQ accreditation protocol to certify computation on physical qubits \cite{ferracin_experimental_2021}.
For the simulations of partially and fully fault-tolerant circuits, we assume our computation is protected using surface code quantum error correction.
Following \cite{acharya_quantum_2025}, the logical error rate $p_L$ for the surface code is modelled using the expression
\begin{equation} \label{p_l_error_rate_eqn}
p_L \propto  \left(\frac{p_{phys}}{p_{th}}\right)^{\frac{d+1}{2}}
\end{equation}
where the physical error rate is denoted by $p_{phys}$, the code threshold error rate is denoted by $p_{th}$, and the code distance is denoted by $d$. 
In all simulations, we assume the surface code threshold error rate is $p_{th} = 0.01$, and use a constant prefactor of $c=0.03$ in the above expression \cite{beverland_assessing_2022}. 

Circuit noise is modelled using depolarising channels, applied either at the physical or logical level depending on the circuit type. 
In NISQ simulations, all gate operations are affected by depolarising noise with noise parameter $p_{phys}$. 
In partially fault-tolerant simulations, Clifford gates are assumed to be encoded and are subject to logical noise at rate $p_L$, while non-Clifford gates are assumed to rely on unpurified magic states and so are subject to physical noise at rate $p_{phys}$. 
In fully fault-tolerant simulations, all gates are subject to logical noise at rate $p_L$, regardless of type.

Each target computation is certified under all three frameworks:
\begin{enumerate}
\item NISQ accreditation using $n$ physical qubit circuits,
\item Logical accreditation with $n$ partially fault-tolerant logical qubit circuits,
\item Logical accreditation with $n$ fully fault-tolerant logical qubit circuits.
\end{enumerate}
This yields three TVD bounds for direct comparison between computation frameworks.
By examining these bounds, we can identify the operating regimes in which NISQ, partially fault-tolerant, or fully fault-tolerant computation is most effective.
Unless otherwise stated, we use $M=500$ trap circuits to compute the TVD bound. Each data point in our plots represents the mean and standard deviation of this bound, calculated over five independent runs, where $500$ trap circuits are independently sampled for each run.

\subsection{Certifying IQP circuit sampling}\label{sec:certifying_iqp_sampling}

IQP circuit sampling is a restricted model of quantum computation believed to be hard to simulate classically \cite{shepherd_temporally_2009}. 
An $n$-qubit IQP circuit starts with all qubits initialised in the $\ket{0}^{\otimes n}$ state, followed by a layer of Hadamard gates, a unitary made from randomly selected diagonal gates, another layer of Hadamard gates, and measurement in the computational basis. 
Approximating the output distribution of IQP circuits, even in the presence of noise, is $\#$P-hard \cite{bremner_average-case_2016, bremner_achieving_2017}.
This makes IQP circuits strong candidates for demonstrating quantum advantage on early fault-tolerant devices \cite{hangleiter_fault-tolerant_2025}.

Logical accreditation can be used to certify that logical error rates are below the threshold required for quantum advantage. 
According to \cite{bremner_average-case_2016}, if an IQP output distribution can be classically sampled to within 1-norm error of $1/192$, then the Polynomial Hierarchy would collapse to the third level. 
Certifying a TVD that satisfies this threshold would therefore provide strong evidence of quantum advantage.
Fig. \ref{fig:IQP} shows how this error threshold constrains the number of possible $T$ gates that can be used in a circuit such that the threshold is satisfied, for a range of $T$ gate noise rates, assuming perfect Clifford gates. 
The highlighted region indicates where error rates are sufficiently low and qubit counts are sufficiently high that a quantum advantage may be achievable.

Results from our simulations are presented in Fig. \ref{fig:iqp_chem_combined}.
In (a), TVD bounds from logical accreditation are plotted for circuits with 40 gate layers and $50$ or $500$ qubits. 
Each circuit is implemented in three versions: NISQ, partially fault-tolerant, and fully fault-tolerant. For the fault-tolerant circuits, we use distance-11 surface codes to model logical noise. 
This distance is chosen to explore a realistic noisy regime where the benefits of fault tolerance become clear.
We find that as physical qubit error rates improve, the TVD bound decreases. 
Once the error rate drops below the code threshold, the fault-tolerant circuits quickly start to outperform the unencoded circuits. 
Fully fault-tolerant circuits perform best overall in terms of the TVD bound over the range of parameters, followed by partially fault-tolerant circuits, and then NISQ circuits.

In the experiments plotted in (c), we fix the physical error rate at $p_{phys} = 10^{-3}$ and vary the number of gate layers, keeping qubit and trap circuit counts fixed as before. 
As circuit depth increases, the TVD bounds increase. 
NISQ circuits quickly approach TVD bound $\approx 1$, while partially and fully fault-tolerant circuits degrade more slowly. 
The TVD bounds increase with circuit depth faster for circuits with more qubits.

We also explore how performance changes with code distance and magic state quality. 
Fig. \ref{fig:iqp-vary-distances} shows that increasing code distance (from 3 to 13) reduces logical error rates and tightens the TVD bound, but at the cost of higher resource overhead.
Finally, we test how the fidelity of magic states affects performance. 
In Fig.~\ref{fig:vary-non-clifford} (a), we fix the physical error rate (and therefore the Clifford error rates) and vary only the $T$ gate error rate arising from imperfect magic states.
Our results show that improving magic state fidelity reduces the TVD, though the effect saturates at low code distances where Clifford noise becomes the dominant source of error.

In Appendix~\ref{apx:resource-analysis-for-IQP-advantage}, we estimate the minimum physical resources needed to achieve a certified TVD below the quantum advantage threshold for IQP sampling on superconducting devices using surface codes.

\begin{figure}
\centering
\includegraphics[width=0.48\textwidth]{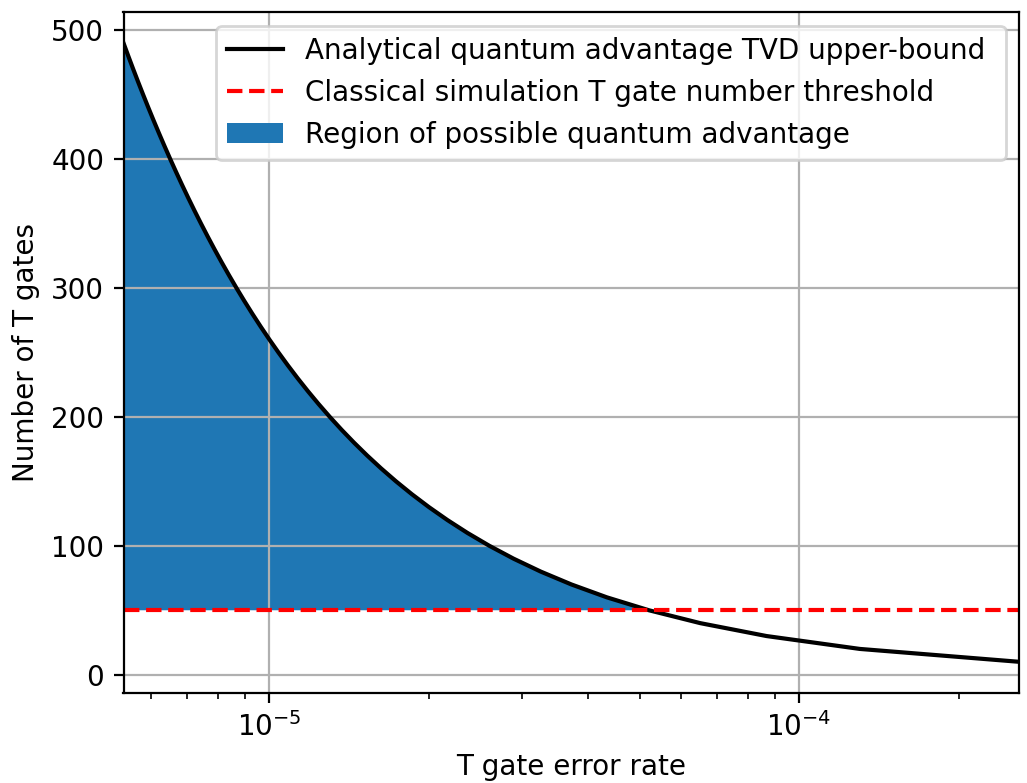}
\caption{\textit{Region of possible quantum advantage for IQP circuit sampling depending on $T$ gate noise and number of $T$ gates.}
The solid line on the plot is the maximum number of noisy $T$ gates such that the quantum advantage threshold derived in \cite{bremner_average-case_2016} is satisfied, plotted as a function of the $T$ gate error rate. 
It is assumed that Clifford gates are error-free, and circuit noise originates solely from $T$ gates. 
The best-known classical algorithms can simulate quantum circuits with up to roughly 50 $T$ gates \cite{bravyi_improved_2016, pashayan_fast_2022}. 
This threshold number of $T$ gates is included as the dashed horizontal line on the plot. 
The shaded region between the dashed and solid lines indicates the parameter regime where a quantum advantage is possible.}
\label{fig:IQP}
\end{figure}

\vspace{-0.8em}
\subsection{Certifying Trotterised Hamiltonian simulation}

We now consider the problem of certifying quantum Hamiltonian simulation with Trotterised circuits.
Hamiltonian simulation is generally considered a promising candidate application for fault-tolerant quantum computation.
For example, simulating the one-dimensional nearest-neighbour Heisenberg model with a random $Z$-field is of broad interest, particularly in the context of self-thermalisation and many-body localisation \cite{nandkishore_many-body_2015}.
Due to its interest in the condensed matter community, it has been used to benchmark various methods of quantum simulation \cite{childs_theory_2021}.

Hamiltonian evolution can be simulated on a quantum computer using Trotterised time evolution \cite{toshio_practical_2025}.
This method has been found to display better relative performance, when applied to achieve a finite target precision, than other asymptotically better scaling techniques \cite{kivlichan_improved_2020}.
This makes it a promising method for simulating quantum evolution on early fault-tolerant devices \cite{toshio_practical_2025, akahoshi_compilation_2025}.

The Trotter-Suzuki decomposition allows the time evolution operator $e^{-iHt}$ to be approximately decomposed into a sequence of $N$ discrete time-steps \cite{kivlichan_improved_2020, toshio_practical_2025}.
The second-order Trotter-Suzuki decomposition approximates these time-step evolution terms as 
\begin{equation}
\Big(e^{-i(\sum_{i=1}^{L} a_i P_i)t/N}\Big)^N \approx \bigg(\prod_{i=1}^{L}e^{-i (\frac{a_it}{2N})P_i} \prod_{i=L}^{1} e^{-i (\frac{a_it}{2N})P_i} \bigg)^N.
\end{equation}
If the Hamiltonian is expressible as a linear combination of Pauli operators, the previous expression may be written as a sequence of multi-qubit Pauli rotations
\begin{equation} \label{trotter_circuit}
\begin{split}
\bigg(\prod_{i=1}^{L} e^{-i (\frac{a_it}{2N})P_i} &\prod_{i=L}^{1} e^{-i (\frac{a_it}{2N})P_i} \bigg)^N \\
&= \bigg(\prod_{i=1}^{L} R_{P_i}\Big( \frac{a_it}{N}\Big) \prod_{i=L}^{1} R_{P_i}\Big( \frac{a_it}{N}\Big) \bigg)^N.
\end{split}
\end{equation}
Consequently, the Trotterised circuit consists of a sequence of Pauli rotations.
To run this circuit using surface code logical qubits, each high-weight logical Pauli rotation gate may be performed via lattice surgery techniques \cite{horsman_surface_2012}, with each non-Clifford Pauli rotation requiring the consumption of a single magic state.
Trotterisation induces an approximation error that may be efficiently bounded, but the computational effects of device noise are more difficult to quantify.

Logical accreditation is compatible with Trotterised target circuits composed of logical multi-qubit Pauli rotation gate operations.
The framework can be used to derive an upper bound on the TVD of the output from the ideal output when measuring an observable of the time-evolved state, as well as an error bound for the corresponding expectation values.
Although most applications involving circuit Trotterisation have some computational steps after the Trotterised evolution, for example, the QCELS algorithm \cite{toshio_practical_2025}, for the sake of generality we only consider bounding the error of the Trotterised circuit.
The approach used in the numerical experiments may straightforwardly be adapted for a particular application of Trotterisation.

\begin{figure*}[t!]
    \centering
    \begin{subfigure}[b]{0.48\textwidth}
        \centering
        \includegraphics[width=\textwidth]{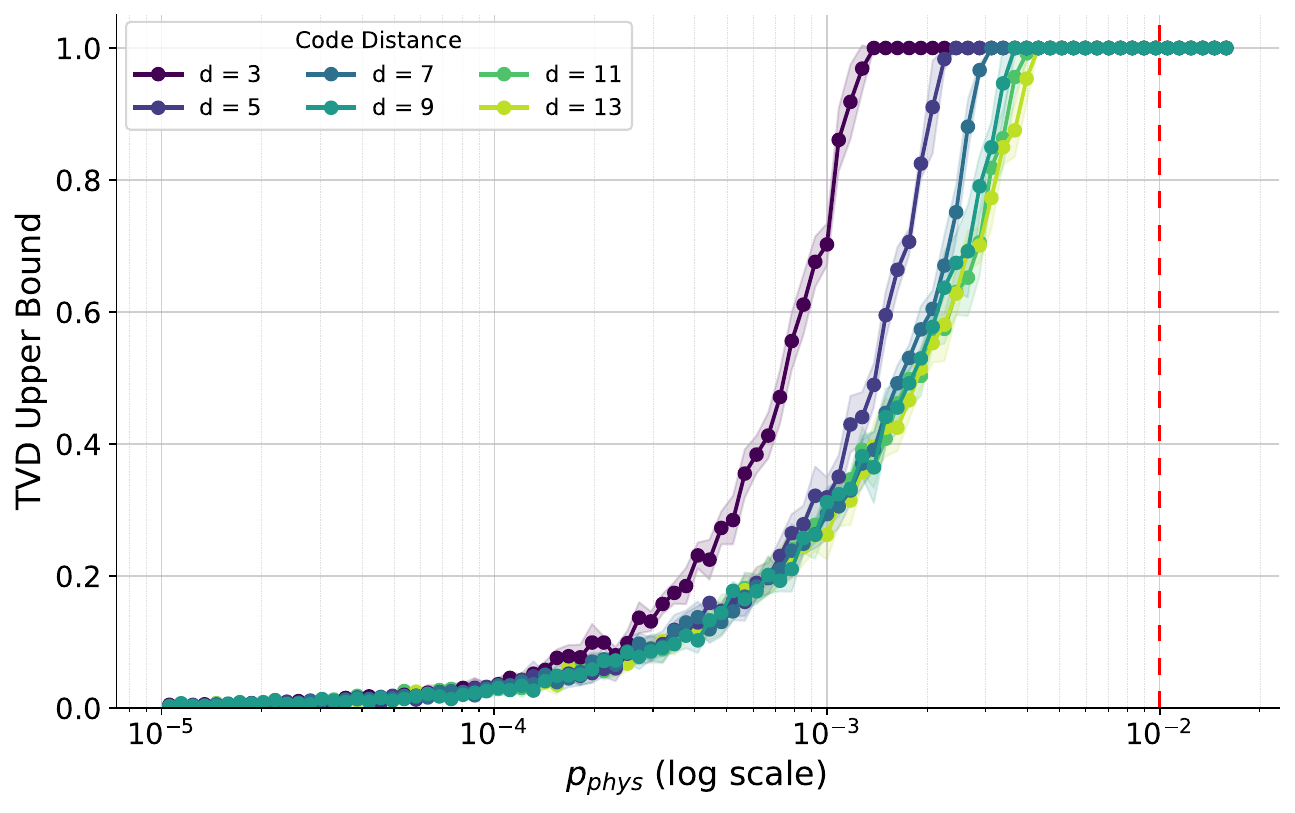}
        \subcaption{Partial fault tolerance}
        \label{fig:iqp-pft}
    \end{subfigure}
    \hfill
    \begin{subfigure}[b]{0.48\textwidth}
        \centering
        \includegraphics[width=\textwidth]{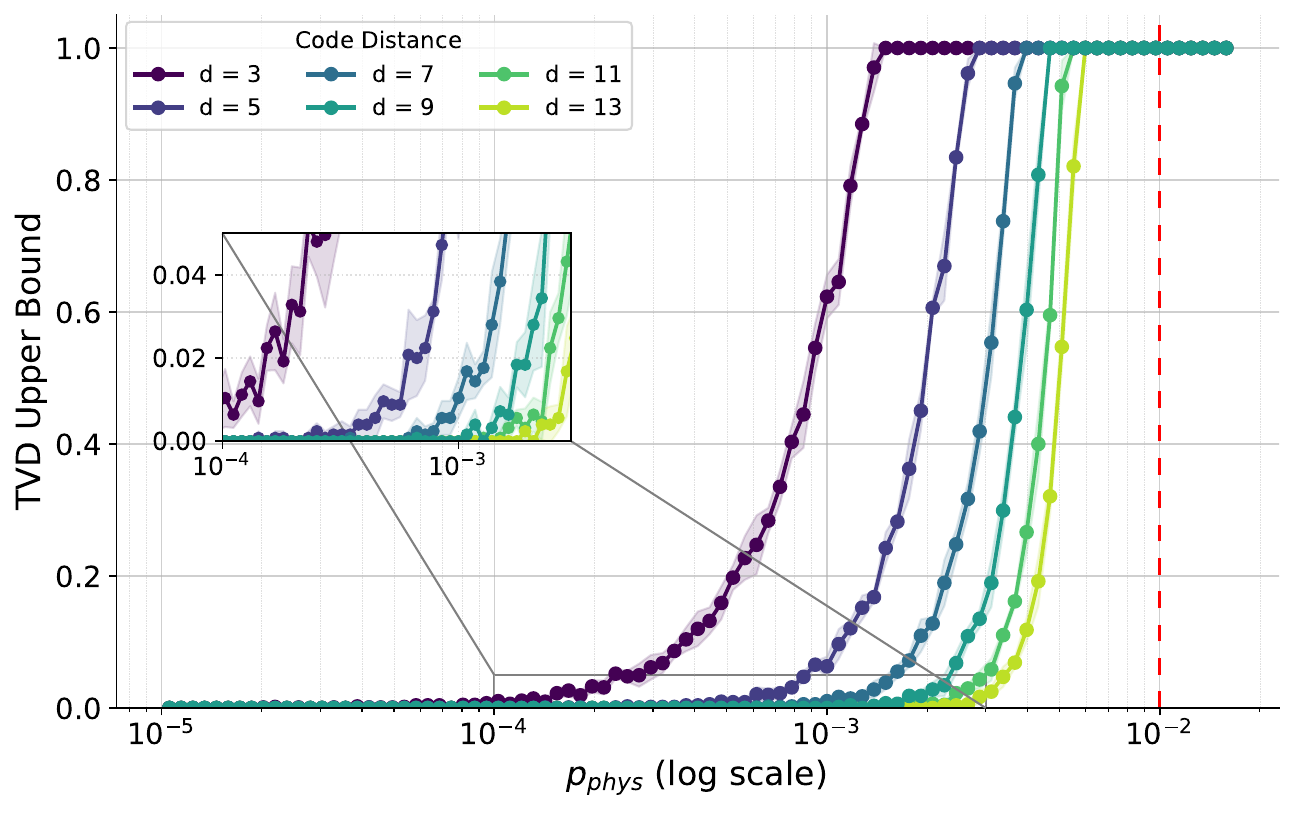}
        \subcaption{Full fault tolerance}
        \label{fig:iqp-full}
    \end{subfigure}
    \caption{\emph{Effect of code distance on the TVD bound for IQP circuits.} 
    Results for numerical simulations of IQP circuits with 15 logical qubits and 40 logical gate layers under (a) partial and (b) full fault tolerance. 
    The TVD upper bound is plotted against physical error rate.
    The fully fault-tolerant setting exhibits stronger dependence on code distance and faster convergence of the TVD bound as the physical error rate decreases. 
    The vertical dashed line marks the surface code threshold. 
    }
    \label{fig:iqp-vary-distances}
\end{figure*}

\begin{figure*}[t!]
    \centering
    \begin{subfigure}[b]{0.48\textwidth}
        \centering
        \includegraphics[width=\textwidth]{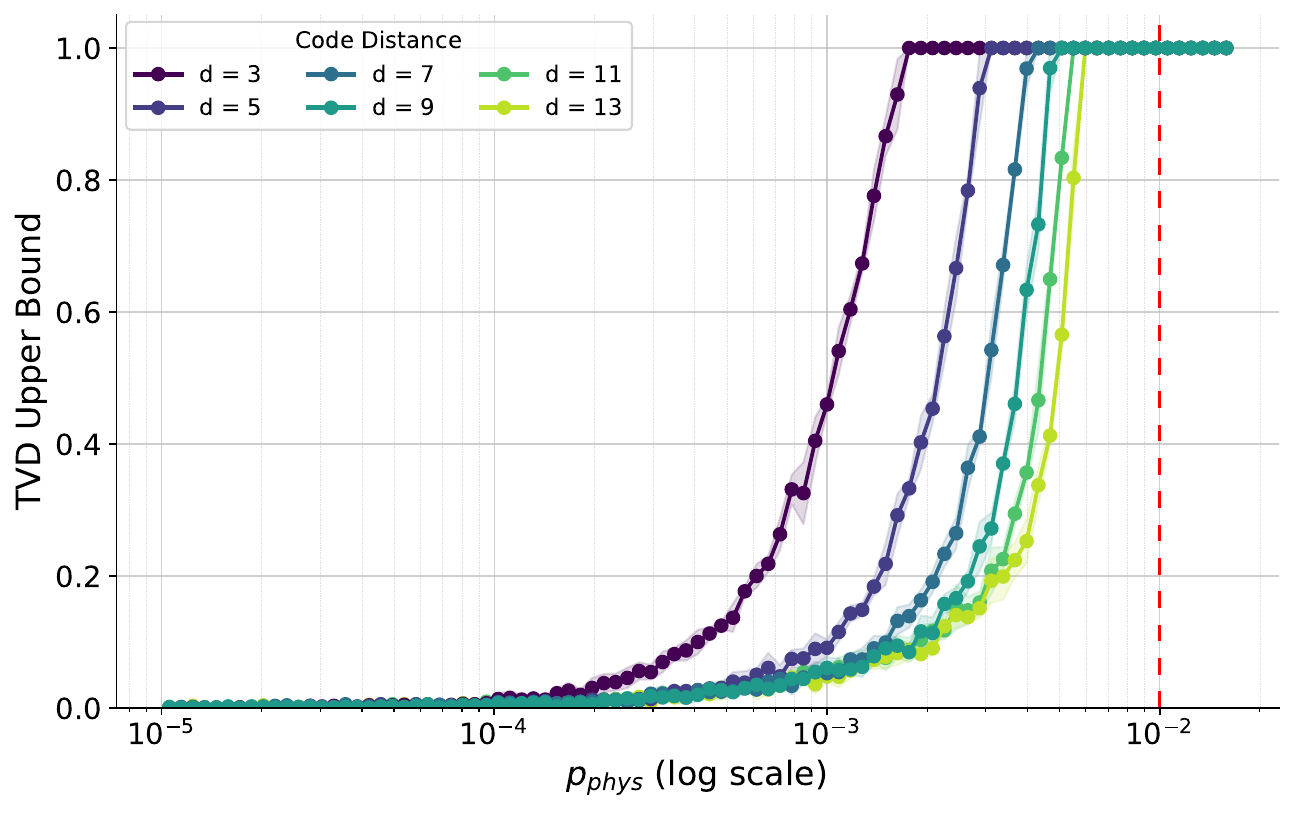}
        \subcaption{Partial fault tolerance}
        \label{fig:chem-pft}
    \end{subfigure}
    \hfill
    \begin{subfigure}[b]{0.48\textwidth}
        \centering
        \includegraphics[width=\textwidth]{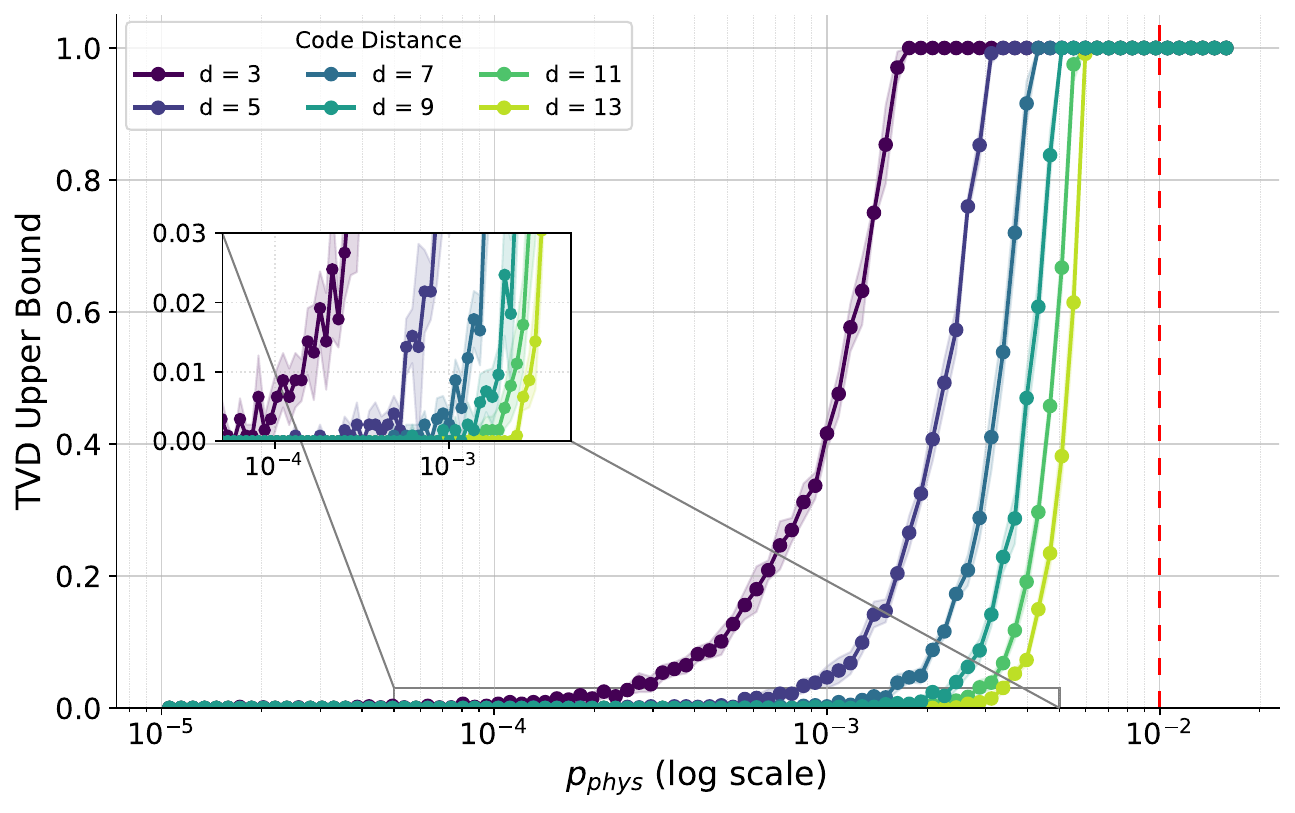}
        \subcaption{Full fault tolerance}
        \label{fig:chem-full}
    \end{subfigure}
    \caption{\emph{Effect of code distance on the TVD bound for Trotterised circuits.} 
    Results for numerical simulations of Trotterised circuits with 15 logical qubits and 40 logical gate layers under (a) partial and (b) full fault tolerance. 
    The TVD upper bound is plotted against physical error rate.
    Compared with partial fault tolerance, the fully fault-tolerant regime exhibits greater sensitivity to code distance and the TVD bound converges faster as the physical error rate decreases. 
    The vertical dashed line marks the surface code threshold.}
    \label{fig:chem-vary-distances}
\end{figure*}

\begin{figure*}[t!]
    \centering
    \begin{subfigure}[b]{0.48\textwidth}
        \centering
        \includegraphics[width=\textwidth]{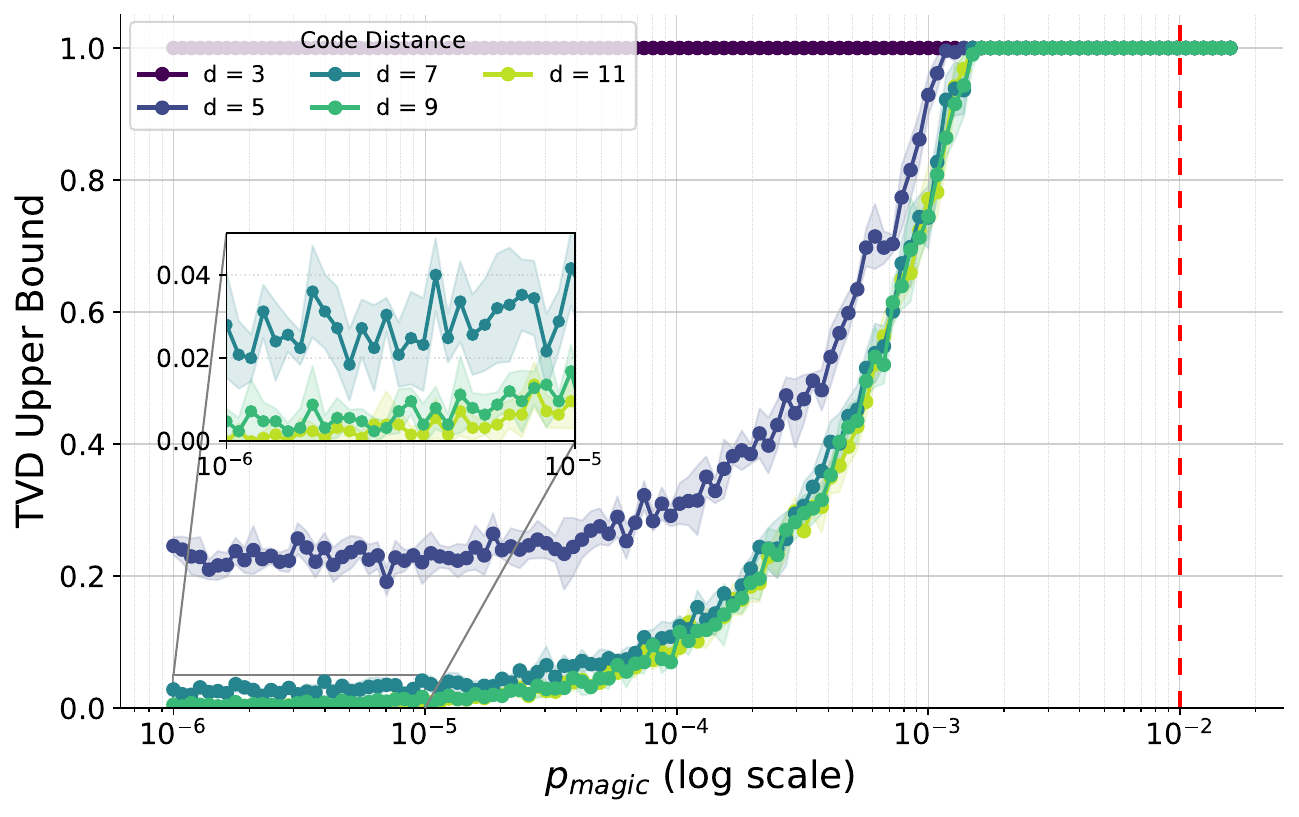}
        \subcaption{IQP circuits}
        \label{fig:iqp-vary-non-clifford}
    \end{subfigure}
    \hfill
    \begin{subfigure}[b]{0.48\textwidth}
        \centering
        \includegraphics[width=\textwidth]{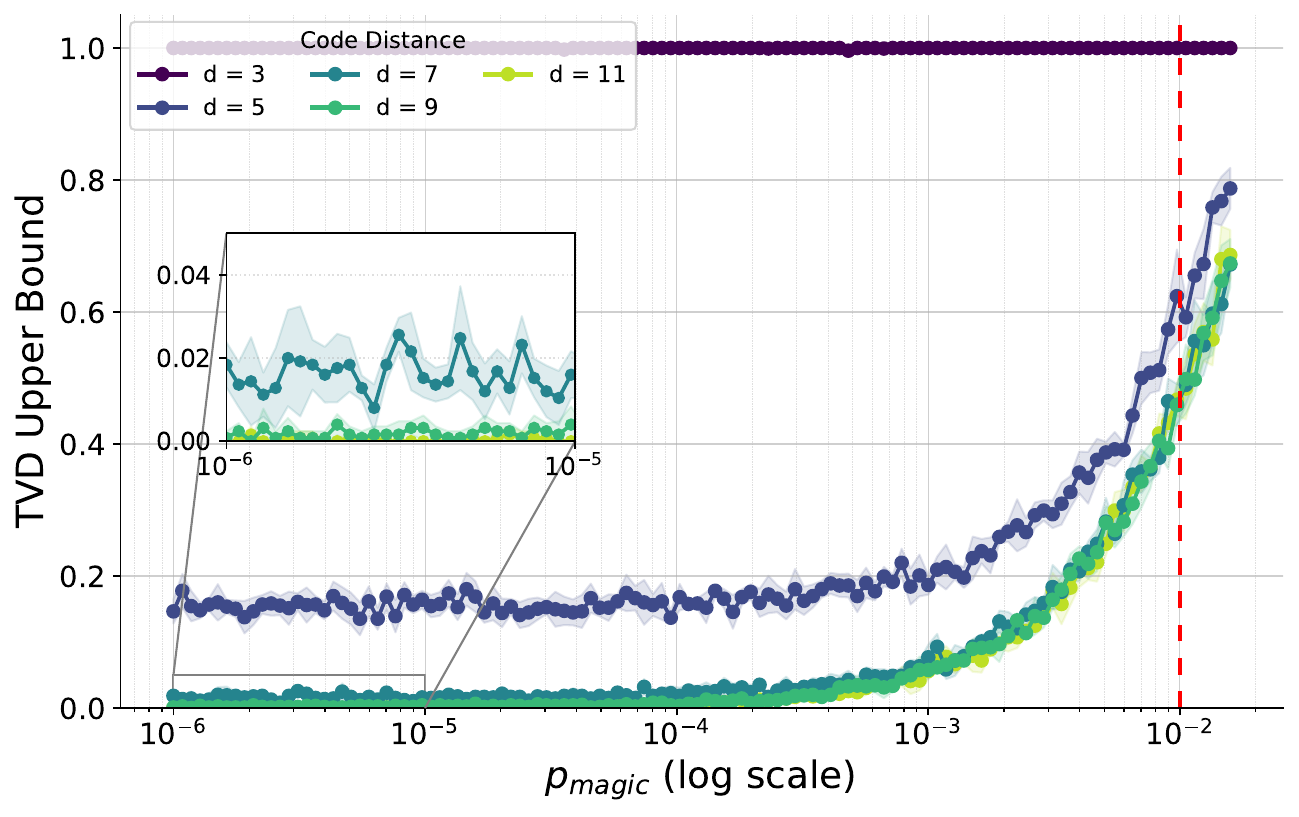}
        \subcaption{Trotterised circuits}
        \label{fig:chem-vary-non-clifford}
    \end{subfigure}
   \caption{
   \emph{Effect of magic state quality on TVD bounds for a range of code distances.} 
   Results from simulations of IQP circuits are shown in (a), and for Trotterised circuits are shown in (b), where the TVD bound from logical accreditation is plotted against magic state error rate.
   All simulations were for 50 logical qubits and 40 logical gate layers, and the physical error rate was fixed at $p_{\text{phys}} = 10^{-3}$. 
   The Clifford gate error rate is determined by the surface code distance, while the magic state error rate is varied. 
   Residual Clifford gate noise imposes a non-zero constant TVD floor, even as magic-state fidelity is improved.
   }
    \label{fig:vary-non-clifford}
\end{figure*}

We applied the logical accreditation framework in numerical simulations of generic Trotterised circuits. 
In the first set of experiments, we vary the physical error rate from $10^{-6}$ to $10^{-2}$ for NISQ, partially fault-tolerant, and fully fault-tolerant circuits. 
Noise from the high-weight Pauli rotation gates is modelled using global depolarising noise, and each logical gate layer is assumed to involve the same number of error correction cycles with the same associated logical noise.
As shown in Fig.~\ref{fig:iqp_chem_combined} (b), the TVD bound increases most rapidly for NISQ circuits, followed by partially fault-tolerant and then fully fault-tolerant circuits. As physical error rates approach the surface code threshold, the bounds for fault-tolerant circuits saturate near 1.

We then fix the physical error rate at $10^{-3}$ and vary the circuit depth up to 100 rotation layers. 
Fig.~\ref{fig:iqp_chem_combined} (d) shows that the TVD bound again grows fastest for NISQ circuits, with fault-tolerant circuits degrading more slowly.
We also examine how varying the surface code distance from 3 to 13 affects certification (Fig.~\ref{fig:chem-vary-distances}). 
Fully fault-tolerant circuits benefit more from increased code distance, with sharper reductions in TVD at lower physical error rates.
Finally, Fig.~\ref{fig:vary-non-clifford} (b) shows the effect of varying magic state fidelity on circuit performance. 
As in the IQP case, reducing $T$ gate noise tightens the TVD bound until Clifford noise becomes dominant and the improvement plateaus.
We note that although IQP circuits generally possess a simpler circuit structure than Trotterised Hamiltonian circuits, in the implementation used in these experiments they have a higher density of non-Clifford gates. 
This is particularly relevant in the partially fault-tolerant regime, where errors associated with non-Clifford operations dominate. 
Consequently, the IQP circuits exhibit higher logical error rates, which is reflected in the larger values of the corresponding TVD upper bounds in the numerical simulations. 
In contrast, the Trotter circuits exhibit smaller TVD upper bounds, indicating lower effective error rates. This can be seen by comparing results in Fig.~\ref{fig:iqp_chem_combined}(a) and (b), and (c) and (d).

To certify the full Hamiltonian simulation accuracy, the logical accreditation error bound must be combined with bounds on the Trotterisation error. 
For second-order Trotterisation, this error scales as $Wt^3/N^2$, where $W$ is a Hamiltonian-dependent constant \cite{kivlichan_improved_2020, childs_theory_2021, campbell_early_2022}.

\vspace{-0.8em}
\section{Applications}

\vspace{-0.4em}
\subsection{Entropy density benchmarking for fault-tolerant computation}
\vspace{-0.2em}

The logical accreditation framework can provide a means of extending benchmarking methods originally developed for NISQ devices to the fault-tolerant regime. 
To demonstrate this, we now show how it can be used to extend entropy density benchmarking to logical circuits.
The method of entropy density benchmarking was developed to assess the quality of NISQ circuits in terms of entropy accumulation from circuit noise \cite{stilck_franca_limitations_2021}.
Recent work has extended this approach to create heuristic models of entropy accumulation in NISQ circuits run on currently available quantum hardware, using these to compute circuit-size thresholds beyond which quantum advantage is unattainable \cite{demarty_entropy_2024}.
The second-order R\'enyi entropy density of the $n$-logical-qubit
experimental target-circuit output state, $\rho_{\mathrm{out}}$, averaged
over the uniformly random target execution position, is defined as
\begin{equation}
    n^{-1}S^{(2)}\!\left(\rho_{\mathrm{out}}\right)
    :=
    -n^{-1}
    \log_{2}\!\left(
        \operatorname{Tr}\!\left[\rho_{\mathrm{out}}^{2}\right]
    \right),
\end{equation}
where $n$ denotes the number of logical qubits used in the computation
and $S^{(2)}(\cdot)$ denotes the second-order R\'enyi entropy. The value
of $\gamma$ computed by logical accreditation may be used to derive the
following upper bound on the entropy density of the target-circuit
output state:
\begin{equation}
    \label{main_entropy_bound}
    n^{-1}S^{(2)}\!\left(\rho_{\mathrm{out}}\right)
    \leq
    -n^{-1}
    \log_{2}\!\left[
        1-2\gamma+\gamma^{2}\left(1+2^{-n}\right)
    \right].
\end{equation}
This bound, derived in Appendix \ref{entropy_bound_derivation}, holds with the same confidence as the TVD bound provided by the logical accreditation framework.
Logical accreditation can, therefore, be used to efficiently compute an upper bound on the Rényi entropy density of the target circuit, extending the entropy density benchmarking method to circuits performed using encoded logical qubits.
It would be interesting to further develop this connection, perhaps using logical accreditation as a means of creating heuristic models for entropy accumulation in logical circuits, similar to the approach followed in \cite{demarty_entropy_2024} for NISQ circuits.

\vspace{-0.9em}
\subsection{Certifying the practicality of applying quantum error mitigation to logical circuits}
\vspace{-0.4em}

Much recent work has focused on combining quantum error mitigation with quantum error correction for logical computations on early fault-tolerant quantum devices \cite{piveteau_error_2021, lostaglio_error_2021, suzuki_quantum_2022, dutkiewicz_error_2025}.
Error mitigation techniques have been proposed as a means of extending the computational performance of logical computations with unpurified magic states \cite{piveteau_error_2021}, and where code distances are too low to suppress noise sufficiently to achieve desired logical error rates \cite{suzuki_quantum_2022}.
It has been asserted that quantum error mitigation is primarily useful in the regime $\epsilon_GN_G=O(1)$, where $\epsilon_G$ is the independent physical gate error rate and $N_G$ is the number of physical gates in the circuit \cite{endo_hybrid_2021, zimboras_myths_2025}.
In previous work concerning the mitigation of gate errors for noisy physical qubits, the overall amplification in the estimator variance of the mitigated output was computed to be $(1+2\epsilon_G)^{2N_G} \sim e^{4N_G\epsilon_G}$ \cite{endo_practical_2018}.
If the value of $N_G\epsilon_G$ is too large, error mitigation cannot practically be applied due to the exponential scaling of the sampling overhead.
Similar arguments can be made concerning the efficiency of applying error mitigation to computations run on encoded logical qubits, where instead $\epsilon_G$ is the logical gate error rate and $N_G$ is the number of logical gates in the circuit.

Since logical accreditation provides a bound on the total circuit error rate, it can be used to bound the value of $N_G\epsilon_G$.
For example, if the maximum acceptable overhead to apply error mitigation to a logical circuit is a factor of 30 increase in the required number of samples relative to the unmitigated estimator, this condition is satisfied when $N_G\epsilon_G \leq 0.8503$.
The probability of no errors occurring due to gate noise for a circuit with $N_G$ gates is $(1 - \epsilon_G)^{N_G}$, which is approximately $e^{-\epsilon_G N_G}$. 
Assuming errors only occur during the circuit due to independent gate noise, then the total circuit error rate is $p_{err} \approx 1-  e^{-\epsilon_G N_G} $.
Setting $\epsilon_G N_G = 0.8503$ results in an upper bound on the circuit error rate where mitigation can be efficiently applied of $p_{err} \leq 1 - e^{-0.8503}$.
Since the $\gamma$ value provided by logical accreditation is an upper bound for both the TVD and the total circuit error rate, it can be used to check whether the error mitigation efficiency condition is satisfied.
This follows as a corollary if the inequality $ \gamma \leq 1 - e^{-0.8503}$ is experimentally satisfied.
In this manner, logical accreditation can be used to certify the practicality of applying quantum error mitigation to an encoded logical circuit.

\vspace{-1em}
\section{Conclusion}


In this work, we introduced logical accreditation as a framework for efficiently certifying fault-tolerant quantum computations. 
The protocol runs trap computations alongside the target computation, and 
analysis of the trap computation measurement outcomes allows the accuracy of the target computation output to be  efficiently certified.
In contrast to traditional performance analyses of quantum error correction, this framework is sensitive to a broad range of noise models.
It can be applied to certify the computational accuracy of any fault-tolerant circuit that is compiled into the requisite form and run on a quantum device.

We demonstrated the protocol through numerical simulations of IQP circuit sampling and Trotterised quantum circuits. 
We also outlined some potential applications, including: (1) extending entropy density benchmarking to the fault-tolerant regime, and (2) assessing whether quantum error mitigation techniques can be efficiently applied to specific logical circuits.
In addition, we introduced a compilation strategy that transforms general decoding and magic-state preparation noise into logical stochastic Pauli noise. 
As part of this, we proposed a method for twirling arbitrarily rotated magic states, showing that non-transversal logical gates, including those beyond the $T$ gate, can also be twirled.

There are many opportunities to extend this work. 
With prior knowledge of noise behaviour, the trap circuit design could be altered to increase the detection probability for the most likely errors, thereby improving soundness.
It may be possible to optimise the protocol for specific quantum error-correcting codes, especially where code structure affects trap circuit design. 
The protocol could also be used to assess whether logical error rates are low enough that classical simulation algorithms, such as the one presented in \cite{schuster_polynomial-time_2025}, are inefficient.
Recently, verification and certification protocols have been adapted into error mitigation protocols on NISQ devices \cite{mezher_mitigating_2022, harris_error_2025}, it may be possible to adapt logical accreditation in a similar way to mitigate logical circuit noise.
It is important to note that logical accreditation does not measure the impact of gate synthesis errors.
As these errors are likely to play a significant role in early fault-tolerant quantum computation, extending logical accreditation to account for them would be highly valuable.
Finally, the protocol could be used in experiments on near-term devices to test the building blocks of QEC and, as devices improve, to certify the accuracy of large-scale fault-tolerant computations.

\vspace{-1.1em}

\subsection*{Acknowledgments}
\vspace{-0.6em}

We would like to thank Andrew Jackson and Ra\'ul Garc\'ia-Patr\'on for reading earlier versions of this manuscript, for providing very helpful feedback for its improvement, and for interesting and enjoyable discussions.
All authors acknowledge support from the EPSRC Quantum Advantage Pathfinder research program within the UK’s National Quantum Computing Center.
D.L. acknowledges support by the Autonomous Quantum Technologies (AutoQT) project (UKRI project 10004359). 
A.S. was supported by the Engineering and Physical Sciences Research Council (grant number EP/W524384/1).
J. R. is funded by an EPSRC Quantum Career Acceleration Fellowship (grant code: UKRI1224), and further acknowledges support from EPSRC grants EP/T001062/1 and EP/X026167/1.

\bibliography{main}

\appendix
\section*{Appendix}
\addcontentsline{toc}{section}{Appendix} 

\vspace{1em}

\section{Trap circuit construction} \label{trap_circuits_section}

The trap circuit construction now described is the one used in the numerics, and in the derivation of the $\beta=0$ and $\beta=1/2$ upper bounds on the trap circuit error cancellation probability.
In Appendix \ref{altered_trap_construction}, a modified trap construction is described that allows for a $\beta=7/8$ error cancellation probability upper bound, this construction takes into account all possible error cancellation events.

The logic of each of the operations performed during the trap circuits either compiles to an identity operation or to an operation that stabilises the input state. 
This ensures that the trap circuits provide a deterministic bit string output in the absence of errors.
The trap circuits are generated using the same circuit structure as the target circuits.
The positioning and type of the multi-qubit Clifford gates are kept the same.
The single-qubit Clifford gate layers are replaced with randomly chosen $S$, $S^{\dagger}$ and $H$ gates.
When randomly applied $S$, $S^{\dagger}$ and $H$ operations sandwich a C$Z$ gate, the resulting combined operation is logically equivalent to a randomly oriented CNOT gate, as shown in Fig. \ref{fig:CZ_rand}.

A layer of logical Hadamard gates is included at the beginning and end of the trap circuits with probability $1/2$.
An important function of the randomly chosen $S$ and $H$ operations is to randomly map logical Pauli error operators to different logical Pauli operators, thereby preventing the same types of error cancellation from happening in different trap circuits.
This property is used to derive the protocol guarantees in Appendix \ref{trap_detect_errors}.

We now describe the treatment of gates performed by the consumption of both unpurified and purified magic states in the framework.

\subsection{Logical gates performed using unpurified magic states}

In the trap circuits, gates corresponding to logical single- and multi-qubit Pauli rotation gates in the target circuit are modified to stabilise the logical state.
The magic state preparation procedure is different for the trap and target circuits.
We will refer to magic states used in the target circuit as \textit{target magic states}, and magic states used in the trap circuits as \textit{trap magic states}.

In the scenario where target and trap magic states are not purified before consumption, two versions of each trap circuit are run on the quantum device. 
If at least one of the two versions of the trap circuit detects an error, this is recorded as a failure for that trap circuit.
In one version, only $\ket{\pi}$ magic states are prepared for consumption, and fault-tolerant $S^\dagger$ gates are applied immediately before injection.
In the other version, only $\ket{\pi/2}$ states are prepared.
Although the state preparation is performed differently, both versions ultimately produce $\ket{\pi/2}$ states.
The reason for this approach is that $\ket{\pi}$ magic states are only vulnerable to $Y$ and $Z$ Pauli errors during state preparation, whereas $\ket{\pi/2}$ magic states are vulnerable to $X$ and $Z$ Pauli errors.
Gate injection with a $\ket{\pi/2}$ state combined with a fault-tolerant Clifford correction gate implements identity or a Pauli gate.
If injection results in a Pauli gate then a single fault-tolerant Pauli $\pi$-rotation gate is applied immediately after the gate is applied.
In the absence of errors, the combined logic of these two operations is an identity operation.
In practice, the Pauli correction gates are absorbed into the following logical randomised compiling Pauli gate layer.
When $\ket{\pi/2}$ states are injected, the same adaptive measurement and correction procedure used in the target circuit is also used in the trap circuits. Any error in the adaptive measurement or feed-forward correction induces a non-trivial operation in the trap circuit and is therefore detectable.

The target and trap magic states are prepared using protocols in which the phase rotation of the prepared states is determined by physical single-qubit Pauli rotation gates applied after the logical ancilla $\ket{+}$ state has been encoded.
A consequence of Assumption A1, that single-qubit physical gate noise is gate independent, is that magic state preparation noise is phase rotation-angle independent.
Two examples of logical gate operations for magic state preparation that apply an arbitrary logical phase rotation to a $\ket{+}$ state are shown in Fig. \ref{fig:analog-z-rotation-different-weights} (b) and (c).
These operations are from the partially fault-tolerant STAR framework \cite{akahoshi_partially_2024}, and apply a logical $R_Z(\theta)$ operation using a physical single-qubit Pauli rotation gate, along with C$Z$ and CNOT gates.
And so, under the assumption that physical single-qubit gate noise is gate-independent, the logical state-preparation noise for the target and trap magic states is the same.

Each gate applied using a magic state is randomly sandwiched by $S$, $S^{\dagger}$ and $H$ gates. 
For RUS gates, each gate repetition is sandwiched by these operations. The random sandwiching gate operations are used to randomise error propagation through the trap circuits, ensuring errors are detected with constant probability.

\begin{figure*}[ht]
\centering
\begin{subfigure}[b]{0.3\textwidth}
    \includegraphics[width=\textwidth]{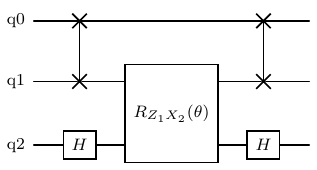}
    \caption{}
    \label{fig:subfig1}
\end{subfigure}
\hfill
\begin{subfigure}[b]{0.3\textwidth}
    \includegraphics[width=\textwidth]{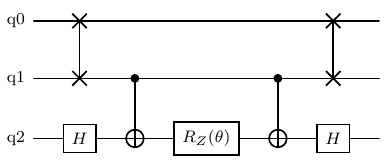}
    \caption{}
    \label{fig:subfig3}
\end{subfigure}
\hfill
\begin{subfigure}[b]{0.3\textwidth}
    \includegraphics[width=\textwidth]{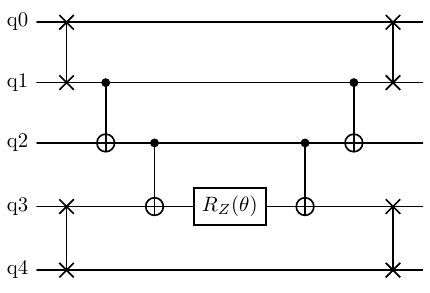}
    \caption{}
    \label{fig:subfig2}
\end{subfigure}
\caption{\textit{Examples of multi-qubit $Z$ rotations used to implement logical $Z$ rotations that are used in magic state preparation in the space-time efficient analog rotation scheme \cite{akahoshi_partially_2024, toshio_practical_2025}.} 
In (a) a weight-2 $ZZ$ rotation is applied to two physical qubits. 
In (b) and (c) a local $Z$ rotation is applied to a single qubit, that is then compiled through the use of additional CNOT operations into a $ZZ$ and a $ZZZ$ rotation, respectively. 
A requirement for implementing the protocol using arbitrarily phase-rotated magic states is that the magic state rotation angles be determined by the single application of a local rotation gate. 
This ensures that, assuming gate-independent noise on single physical-qubit gates, the noise affecting state preparation does not depend on the magic state phase-rotation angle.
The circuit shown in (a) is incompatible with the protocol; the circuits shown in (b) and (c) are compatible with it. }
\label{fig:analog-z-rotation-different-weights}
\end{figure*}

\subsection{Logical gates performed using purified magic states} \label{app_purification_section}

Fully fault-tolerant universal computation is possible if fault-tolerantly prepared magic states are used to perform those logical gates outside the gate set of the QEC code \cite{bravyi_universal_2005,campbell_roads_2017}.
One approach to fault-tolerantly preparing magic states is magic state purification.
The most well-known magic state purification protocol is magic state distillation \cite{bravyi_universal_2005, litinski_magic_2019}.
During distillation, a number of noisy input encoded magic states are purified, or distilled, to generate a smaller number of higher-quality states.
This is usually done by concatenating the code used to encode the logical qubits with another code possessing a transversal non-Clifford gate.
The concatenated code is used for error detection, postselecting on instances where no error is detected in order to generate higher quality magic states.
For the distillation to be successful, the fidelities of the input magic states must be greater than the fidelity threshold of the distillation method used.
If distillation is successfully repeated, at each repetition, magic states of progressively higher quality are generated. 
When a magic state of sufficiently high quality is produced it is then used to perform a non-Clifford gate on the computational logical qubits.
There are also a number of other methods for fault-tolerantly preparing magic states including transversal, cultivation-based, and code-switching approaches. 
Fault-tolerantly prepared trap magic states must have error rates similar to the target magic states for the accreditation bound to be valid.
We now describe how this can be achieved in the context of the Clifford+$T$ computational framework in a way that is compatible with all the methods of fault-tolerant magic state preparation mentioned.

For each trap circuit, the same number of $\ket{\pi/4}$ magic states are fault-tolerantly prepared as are used in the target circuit.
Let us say there are $K$ non-Clifford gates in the target circuit, each requiring a magic state.
Then the same number of the same type of magic states are prepared for each trap circuit.
However, prior to these being used to perform gates in the trap circuits, they are combined in pairs using gate teleportation to produce $K/2$ $\ket{\pi/2}$ states. 
These states are then assigned uniformly at random to the relevant gates performed by the consumption of these states in the trap circuit.
The remaining $K/2$ gates are performed using fault-tolerantly prepared $\ket{\pi/2}$ states.
This construction introduces no additional resource overhead. 
However, it requires an additional assumption regarding the noise behaviour of the magic state combination step, namely that the operation used to combine two $\ket{\pi/4}$ states into a single trap magic state is non-noise-decreasing. 
And so the effective error rate of each resulting trap magic state is no smaller than the combined error rates of its two input target magic states (up to higher-order corrections). This ensures that the trap circuits may be used to provide a conservative bound on the target circuit logical error rate.

In some cases, it may be reasonable to assume that fault-tolerantly prepared $\ket{\pi/2}$ or $\ket{\pi}$ states are of very similar quality to fault-tolerantly prepared $\ket{\pi/4}$ states.
This would depend on the type of magic state preparation method used and the form of the logical noise affecting the circuits.
The reason this assumption is avoided in the previously stated approach is that $\ket{\pi/4}$ states are vulnerable to $X$, $Y$ and $Z$ errors, while $\ket{\pi/2}$ states are stabilised by $Y$ errors and $\ket{\pi}$ states are stabilised by $X$ errors.
And, for example, purifying different magic states for target and trap circuits could then lead to appreciable differences in the fidelities of the purified target and trap magic states, potentially affecting the validity of the certification bound.
However, if this assumption is accurate, 
one could prepare and use different trap and target magic states.

The gates performed by consumption of the fault-tolerantly prepared trap magic states are immediately followed by a cancelling fault-tolerant Pauli $\pi$ rotation gate, resulting in overall identity operation.
This is absorbed into the following Pauli twirling logical gate layer.

\begin{figure*}
\centering
\includegraphics[width=0.65\textwidth]{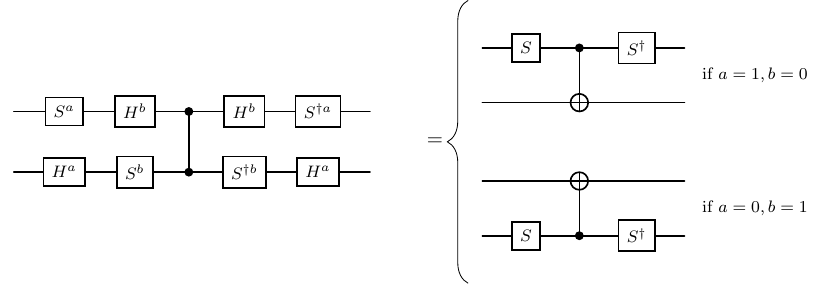}
\caption{In the trap circuits, each C$Z$ gate is randomly sandwiched by $S$ and $H$ gates such that the combined logic of the gates is a randomly oriented CNOT gate. 
This process is shown in the diagrammatic circuit equation, where the bit $b\in\{0,1\}$ is picked uniformly at random for each C$Z$ gate instance on the LHS, and $a=b\oplus1 \mod 2$.}
\label{fig:CZ_rand}
\end{figure*}

\section{Logical randomised compiling} \label{log_rand_comp}

We now describe the method we call \textit{logical randomised compiling}, a compilation technique that effectively transforms general logical circuit noise, from, for example, incorrect decoding or imperfect magic state preparation, into logical stochastic Pauli noise.
This is based on the NISQ circuit compilation technique called randomised compiling, where randomly chosen Pauli gates are compiled together with layers of single-qubit gates in order to Pauli twirl circuit noise into stochastic Pauli noise \cite{wallman_noise_2016}. 
We now describe how to compile the twirling operations into the logical circuits, and how these operations transform noise affecting different components of the logical circuit.
This includes the twirling of magic state preparation noise, both in the scenario where state purification is applied, and where it is not.

It was shown in
Refs.~\cite{winick_concepts_2022,liu_non-markovian_2024}
that randomised compiling may be extended to non-Markovian noise.
The following derivations explicitly treat the case in which the
logical noise at each circuit location is represented by a quantum
channel. For general non-Markovian noise, the complete multi-time
process may instead be represented by its Choi channel. Independently
applying logical Pauli twirls at each circuit location
Pauli-diagonalises this Choi channel, yielding a stochastic Pauli
process whose errors may remain classically correlated in time
\cite{winick_concepts_2022,liu_non-markovian_2024}.

Let
$\boldsymbol{P}=(P_0,\ldots,P_{D+1})$
denote a trajectory of Pauli errors acting at the different circuit
locations. The twirled process may be characterised by a joint
probability distribution
\begin{equation}
    \Pr(\boldsymbol{P})=c_{\boldsymbol{P}},
    \qquad
    c_{\boldsymbol{P}}\geq 0,
    \qquad
    \sum_{\boldsymbol{P}}c_{\boldsymbol{P}}=1.
\end{equation}
In general, this joint distribution need not factorise over the
circuit locations, and the Pauli errors may therefore be classically
correlated in time. Logical randomised compiling removes coherence
between distinct Pauli error trajectories, but does not necessarily
remove these classical temporal correlations. This extension assumes
that the twirling gates are sampled independently and that the noise
is independent of the choice of randomising gates, as specified by
the noise assumptions of the protocol. The implications of
temporally correlated Pauli errors for the soundness of the
certification protocol are discussed in Appendix~\ref{bounds_on_prob_of_error_canc}.

\subsubsection{Compiling together contiguous twirling operations} \label{log_circ_twirling}

Since the logical qubits are encoded using a QEC code, logical single-qubit gates and logical Pauli gates cannot be compiled together, as is done in NISQ randomised compiling \cite{wallman_noise_2016}.
We now describe how the logical randomised compiling operations are integrated into the logical target and trap circuits. The twirling operations are layers of logical single-qubit Pauli operations interleaving all the logical gate layers of the circuit.
Contiguous twirling operations are compiled together, so that a single logical layer of Pauli twirling operations includes both the undoing operation from the previous twirl and the operation for the following twirl.
For example, a single instance of a twirling gate layer $P_i'$ performed at an arbitrary point in the circuit combines both the undoing operation for the $(i-1)$-th twirl, $P_{i-1}^{(2)}$, and the application of the $i$-th twirl, $P_i^{(1)}$, that is
\begin{equation}
P_i' = P_i^{(1)} P_{i-1}^{(2)}.
\end{equation}
Then, rather than having a logical noise channel for both $P_i$ and $P_{i-1}^{\dagger}$, there is a single noise channel associated with the logical Pauli operation $P_i'$.
So the noisy application of $P_i'$ may be written
\begin{equation}
\tilde{P}_i' = \mathcal{E}_{{P}_i'} P_i^{(1)} P_{i-1}^{(2)},
\end{equation}
where the noise channel $\mathcal{E}_{{P}_i'}$ is twirling gate layer independent (see Assumption A2 in Section \ref{log_circ_noise_sect}) and so does not depend on the specific twirling operation being applied.

\begin{figure*}[thb]
\begin{subfigure}[b]{0.25\textwidth}
    \includegraphics[width=\textwidth]{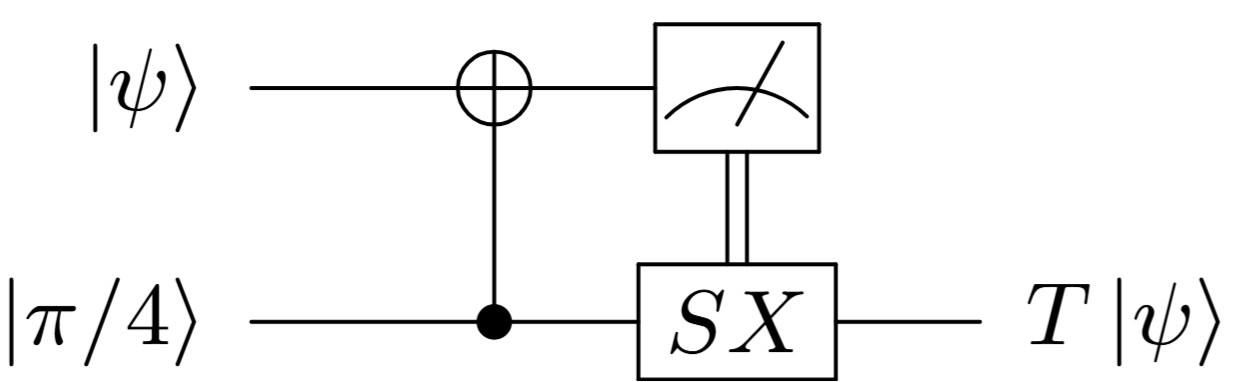}
    \caption{}
\end{subfigure}
\hspace{3.5cm}
\begin{subfigure}[b]{0.27\textwidth}
    \includegraphics[width=\textwidth]{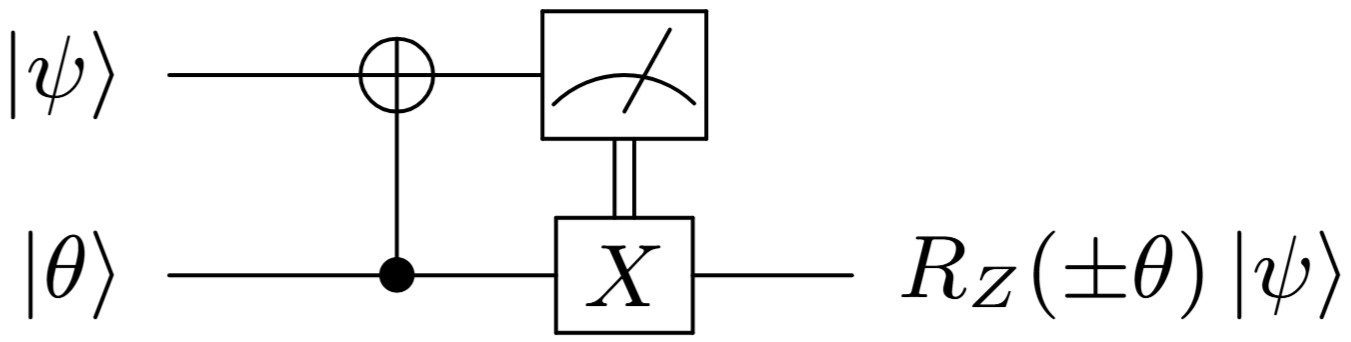}
    \caption{}
\end{subfigure}

\begin{subfigure}[b]{0.23\textwidth}
\vspace{1em}
    \includegraphics[width=\textwidth]{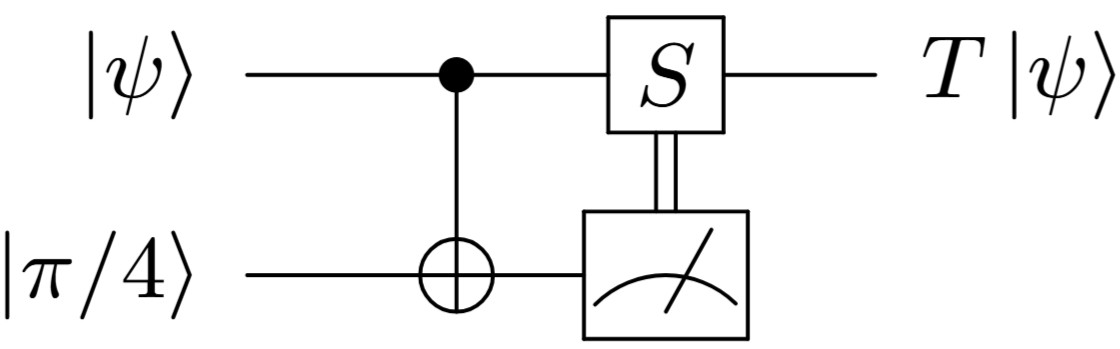}
    \caption{}
\end{subfigure}
\hspace{3.9cm}
\begin{subfigure}[b]{0.27\textwidth}
\vspace{1em}
    \includegraphics[width=\textwidth]{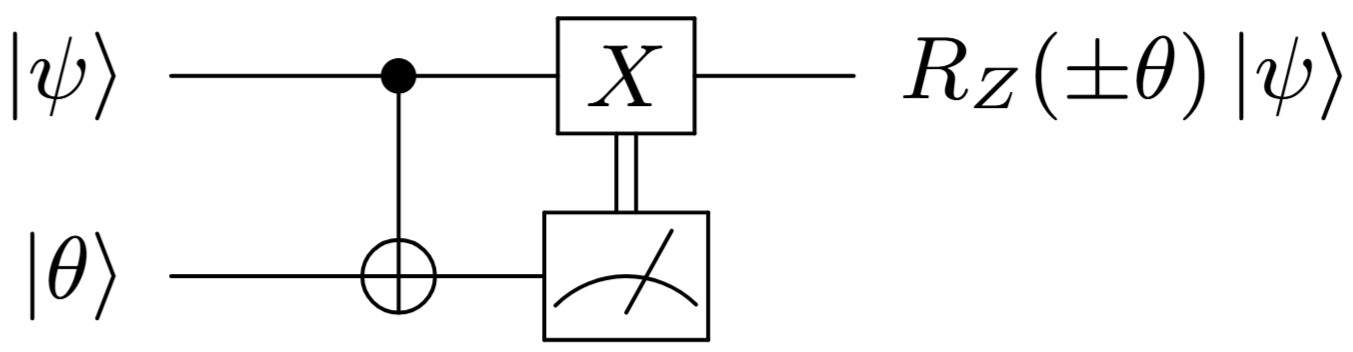}
    \caption{}
\end{subfigure}
\vspace{-0.2em}\caption{Quantum circuit diagrams showing different ways of performing $T$ gates, in (a) and (c), and RUS $R_Z(\theta)$ gates, in (b) and (d), by gate teleportation.
For the RUS gates shown in (b) and (d), if the measurement yields the eigenvalue -1 (i.e., measurement outcome bit “1”), the RUS gate is repeated.
}
\label{fig:tele_twirl}
\end{figure*}

\subsubsection{Twirling logical computational state preparation noise}

Computational logical qubits are initialised in the state $\ket{0^n}:=\ket{0}^{\otimes n}$.
If the prepared logical state is denoted by $\rho$, the logical state preparation error rate is defined as 
\begin{equation}
{\epsilon} :=1-\bra{0^n} \rho \ket{0^n}.
\end{equation}
The twirling gate set is chosen to be $\{I,Z\}^{\otimes n}$, and the twirled logical state may then be written
\begin{equation}
\begin{split}
\rho_{\mathcal{T}} &= 2^{-n}\cdot\sum_{P_i \in \{I,Z\}^{\otimes n}} P_i \rho P_i \\
&\hspace{-0.7em}= 2^{-n}\cdot\sum_{P_i \in \{I,Z\}^{\otimes n}} P_i \bigg( \sum_{x \in \{0,1\}^n }\sum_{y \in \{0,1\}^n } \alpha_{x,y} \ket{x} \bra{y}  \bigg) P_i. \\
\end{split}
\end{equation}
For each $\alpha_{x,y}\ket{x}\bra{y} $ term in the sum within the brackets, if $x \neq y$ then the term acquires a multiplicative prefactor of $-1$ from half the Pauli operators, and a prefactor of $+1$ from the other half.
If $x = y$ then the term will acquire a multiplicative prefactor of $+1$ from all the Pauli operators.
In consequence, the off-diagonal terms cancel, leaving only the diagonal terms in the sum
\begin{equation}
\begin{split}
\rho_{\mathcal{T}}
&= \sum_{x \in \{0,1\}^n}
\alpha_{x,x}\ket{x}\bra{x} \\
&= (1-\epsilon)\ket{0^n}\bra{0^n}
+ \epsilon
\cdot\sum_{s \in \{0,1\}^n\setminus\{0^n\}}
\widetilde{\alpha}_{s,s}\ket{s}\bra{s} \\
&= (1-\epsilon)\ket{0^n}\bra{0^n} \\
&\hspace{2em}
+ \epsilon
\cdot\sum_{P_i \in
\{I,X\}^{\otimes n}\setminus\{I^{\otimes n}\}}
\widetilde{\alpha}_{P_i}
P_i\ket{0^n}\bra{0^n}P_i ,
\end{split}
\end{equation}
where, for $\epsilon>0$,
$\widetilde{\alpha}_{s,s}:=\alpha_{s,s}/\epsilon$, and
$\widetilde{\alpha}_{P_i}$ denotes the corresponding coefficient
under the identification $P_i\ket{0^n}=\ket{s}$. Hence,
\[
\sum_{s \in \{0,1\}^n\setminus\{0^n\}}
\widetilde{\alpha}_{s,s}
=
\sum_{P_i \in
\{I,X\}^{\otimes n}\setminus\{I^{\otimes n}\}}
\widetilde{\alpha}_{P_i}
=1.
\]
When $\epsilon=0$, the state is simply
$\rho_{\mathcal{T}}=\ket{0^n}\bra{0^n}$. Thus, the state-preparation
noise is twirled into a stochastic logical Pauli-$X$ channel.

\subsubsection{Twirling logical Clifford gates}

Single and multi-qubit Clifford gates in the target and trap circuits may be efficiently Pauli twirled.
The Pauli gates applying each twirl are chosen uniformly at random, and, since the gates being twirled are Clifford, the gates resulting from propagating the initial Pauli gate operations through them may be efficiently classically computed.
The twirling operations leave the logical circuit unchanged, only transforming general decoding errors that might occur during the application of the logical Clifford gates into logical stochastic Pauli errors.
The twirling may be performed in a similar way to the randomised compiling technique proposed in \cite{wallman_noise_2016}, where physical Pauli operations are used to twirl noisy physical CNOT gates.

\subsubsection{Twirling logical non-Clifford Pauli rotation gates} \label{arbitrary_twirling}

The non-Clifford gates considered in the framework are performed by the consumption of magic states in magic state injection.
The only non-Clifford component of magic state injection is the magic state preparation, with all other operations involved being Clifford operations.
Each component of magic state injection is Pauli twirled at the logical level.
To simplify analysis of the effect of twirling on non-Clifford gate noise in this subsection, we treat the non-Clifford gate at the logical level without explicitly decomposing it into its injection gadget components.

An $n$-qubit logical Pauli rotation $R_P(\theta)$, where $P \in \{I,X,Y,Z\}^{\otimes n}$ and $\theta \in [-\pi,\pi]$, may be performed through the consumption of a $\ket{\theta}$ magic state.
We will now show that twirling such a gate using the approach described transforms any logical Pauli rotation gate noise from imperfect magic state preparation, teleportation or projective measurement into a logical stochastic Pauli channel.
We note that the method presented here provides one possible resolution to the open question posed by Piveteau et al., about how one might twirl the logical noise of non-transversal non-Clifford gates beyond the $T$ gate \cite{piveteau_error_2021}.

The binary symplectic representation of Pauli operators defined by $\phi(.)$ is the isomorphic map from the group of phaseless $n$-qubit Pauli operators to $2n$-bit binary vectors, that is $\mathcal{P}_n \rightarrow \mathbb{Z}^{2n}_2$.  
Where the $j$-th entry of the vector is equal to the power of the $X$ operator at the corresponding position in the Pauli operator, and the $(j+n)^{th}$ entry the power of the $Z$ operator in the corresponding position. The action of the map $\phi(.)$ on an arbitrary Pauli operator $P \in \mathcal{P}_n$ is
\begin{equation}
\begin{split}
\phi(P) &= \phi\big(\bigotimes_{j=1}^n X^{x_j} Z^{z_j} \big)\\
&= \big(\bigoplus_{j=1}^n x_j\big) \oplus \big(\bigoplus_{j=1}^n z_j\big)\\
&=(\mathbf{x}|\mathbf{z}).
\end{split}
\end{equation}
Now the commutation relation of two Pauli operators, $P_1 = X^{\mathbf{x_1}} Z^{\mathbf{z_1}}$ and $P_2 = X^{\mathbf{x_2}} Z^{\mathbf{z_2}}$, may be written as
\begin{equation}
(X^{\mathbf{x_1}} Z^{\mathbf{z_1}}) (X^{\mathbf{x_2}} Z^{\mathbf{z_2}}) = (-1)^{\mathbf{x_1}\cdot\mathbf{z_2} + \mathbf{x_2}\cdot\mathbf{z_1}}(X^{\mathbf{x_2}} Z^{\mathbf{z_2}})(X^{\mathbf{x_1}} Z^{\mathbf{z_1}}).
\end{equation}
The term dictating the power of $(-1)$ defines the symplectic product, this operation preserves the commutation properties of the Pauli group in this binary representation. The symplectic product is defined as
\begin{equation}
(\mathbf{x_1}|\mathbf{z_1})\odot(\mathbf{x_2}|\mathbf{z_2}) := \mathbf{x_1}\cdot\mathbf{z_2} + \mathbf{x_2}\cdot\mathbf{z_1}\text{ mod }2,
\end{equation}
and two Pauli operators commute iff
\begin{equation}
(\mathbf{x_1}|\mathbf{z_1})\odot(\mathbf{x_2}|\mathbf{z_2})=0,
\end{equation}
where the space $(\mathbb{F}^n_2\oplus\mathbb{F}^n_2, \odot)$ is sometimes referred to as a symplectic product space. 

Let the initial $m$-qubit quantum state, where $m \geq n$, be denoted by $\rho$.
And let the application of the noisy rotation gate be modelled as $\mathcal{E} \circ R_P(\theta) \circ (.)$, where $\mathcal{E}(.)$ is the logical noise channel associated with the gate due to the magic state injection gadget acting on the data qubits, excluding magic state preparation noise.
Applying the noisy Pauli rotation to the initial state $\rho$ then results in the state
\begin{equation}
\rho' = \mathcal{E} \circ R_P(\theta) \circ (\rho).
\end{equation}
When logical Pauli twirling is applied to the noisy rotation gate, the resulting state is
\begin{equation}
\begin{split}
\rho_{\mathcal{T}} &= 4^{-n} \bigg(\sum_{P_i \in \{I,X,Y,Z\}^{\otimes n}}P_i \circ \mathcal{E} \circ R_P(\theta) \circ P_i \bigg)\circ (\rho) \\
&= 4^{-n} \bigg(\sum_{P_i \in \{I,X,Y,Z\}^{\otimes n}} P_i \circ \mathcal{E} \circ P_i \\
& \hspace{3em} \circ R_P\big((-1)^{b_{P_i} \odot b_P}\cdot\theta\big)\bigg) \circ (\rho), \\
\end{split}
\end{equation}
where $b_{P_i}$ and $b_P$ are the binary symplectic representations of the Pauli operators $P_i$ and $P$ respectively.
To get the second equality, the Pauli operators have been propagated through the $R_P(\theta)$ gate to isolate the logical noise channel for the Pauli twirling.
However, the Pauli operators that do not commute with $P$ pick up a phase of $-1$.
The action of these phase factors resulting from the Pauli conjugation can be understood in terms of considering twirling at the level of the decomposition of the gate into its magic state injection components. 
In particular, when expressed at the level of the injection gadget, phase changes arising from anticommuting instances of the Pauli twirling of the logical gate are effectively absorbed into the twirling of the magic state preparation noise.
This is done either by updating the angle of the phase-rotation for non-fault-tolerant magic state preparation, or by including this in the twirling gate set of fault-tolerantly prepared magic states.
For example, the twirling gate set for $\ket{\pi/4}$ is $\{I, SX\}$, and so the $X$-twirling term is applied along with a phase-correction in the form of the $S$ gate.
The methods for twirling magic state preparation noise are described in Appendix \ref{app_section_magic_state_twirling}.

For gates performed using non-fault-tolerantly prepared $\ket{\theta}$ magic states, we require that the phase rotation depends on physical single-qubit rotation gates, rather than physical multi-qubit rotation gates.
Using Assumption (A1), the phase-rotation of the non-fault-tolerantly prepared magic states can then be updated during magic state twirling without changing the state-preparation noise.
This aspect of the compilation of the logical rotation gate used to prepare the magic states is illustrated in Fig. \ref{fig:analog-z-rotation-different-weights} (b) and (c), where multi-qubit Pauli rotations are compiled using physical single-qubit rotation gates and CNOT gates.
These are contrasted with the direct application of a multi-qubit Pauli rotation gate in Fig. \ref{fig:analog-z-rotation-different-weights} (a).
Fault-tolerantly prepared magic states do not require updates to the state-preparation procedure during twirling, as described in Appendix \ref{app_section_magic_state_twirling}. 
Consequently, the logical randomised compiling method is compatible with standard fault-tolerant magic-state preparation methods, including transversal, distillation-based, and cultivation protocols.

This means that we can remove phase changes induced by anticommuting Pauli terms, and the twirled state is
\begin{equation}
\begin{split}
\rho_{\mathcal{T}} &= 4^{-n} \bigg(\sum_{P_i \in \{I,X,Y,Z\}^{\otimes n}}P_i \circ \mathcal{E} \circ \\
&\hspace{3em} R_P\big((-1)^{b_{P_i} \odot b_P }\cdot\theta\big) \circ P_i \bigg)\circ (\rho) \\
&= 4^{-n} \bigg(\sum_{P_i \in \{I,X,Y,Z\}^{\otimes n}} P_i \circ \mathcal{E} \circ P_i \bigg) \\
& \hspace{3em} \circ R_P(\theta) \circ (\rho), \\
\end{split}
\end{equation}
The action of the twirled logical gate on the quantum state $\rho$ is then
\begin{equation}
\begin{split}
\rho_{\mathcal{T}}' &=  \bigg(4^{-n} \cdot \sum_{P_i \in \{I,X,Y,Z\}^{\otimes n}} P_i \circ \mathcal{E} \circ P_i \bigg) \circ R_P(\theta) \circ (\rho). \\
\end{split}
\end{equation}
The term inside the brackets is in the standard form required for Pauli twirling a quantum channel.
Suppose the channel $\mathcal{E}$ has Kraus operators $\{B_k\}_k$, that obey the trace-preserving property $\sum_k B_k^{\dagger}B_k=I$, known as the completeness condition. 
Channels that may be represented by a single Kraus operator are known as coherent, and when more than one Kraus operator is required, as incoherent. 
The evolution of state $\rho$ under the action of $\mathcal{E}$ may be modelled using the operator sum representation
\begin{equation}
\mathcal{E}(\rho) = \sum_k B_k \rho B_k^{\dagger}.
\end{equation}
The $n$-qubit Pauli basis is a complete basis for $n$-qubit quantum channels. Therefore, each Kraus operator in the set $\{B_k\}_k$ may be decomposed in the Pauli basis such that 
\begin{equation}
B_k = \sum_{i}\gamma_i^k P_i ,
\end{equation}
where $\{P_i\}_i$ is the set of all $n$-qubit Pauli operators and $|\{P_i\}_i|=4^n$. 
Now the operator-sum representation of the state evolution may be decomposed in the Pauli basis as
\begin{equation}
\begin{split}
\mathcal{E}(\rho) &= \sum_k \bigg(\sum_{i}\gamma_i^k P_i\bigg) \rho \bigg(\sum_{j}\gamma_j^k P_j\bigg)^{\dagger}\\
&= \sum_{k}\sum_{i,j}\big(\gamma_i^k P_i\big) \rho \big(\gamma_j^k P_j\big)^{\dagger}\\
&= \sum_{i,j}\sum_{k}\gamma_i^k\gamma_j^k{}^*   P_i \rho P_j\\
&= \sum_{i,j}\chi_{i,j}   P_i \rho P_j,\\
\end{split}
\end{equation}
where in the final line we have defined $\chi_{i,j}:=\sum_{k}\gamma_i^k\gamma_j^k{}^*$. 
Indeed, this is the well-known process or $\chi$-representation, and the matrix defined by $\chi_{i,j}$ is known as the  $\chi$-matrix. 
The relation $\sum_{i,j}\chi_{i,j}P_jP_i=I$ follows from the trace-preserving property of the set of Kraus operators representing the quantum channel. 
From which it follows that $\text{tr}(\sum_{i,j}\chi_{i,j}P_jP_i)=2^n$, or, equivalently, 
\begin{equation}
\begin{split}
\sum_{i}\chi_{i,i} &= \sum_{i,k}\gamma_i^k\gamma_i^k{}^*\\
&= \sum_{i,k}|\gamma_i^k|^2 \\
&= 1.
\end{split}
\end{equation}
Therefore, the diagonal entries of the $\chi$-matrix are real and sum to 1.
Now
\begin{equation}
\begin{split}
4^{-n}&\sum_{l} P_l P P_l \rho P_l P' P_l \\
&= 4^{-n}\sum_{l}  (-1)^{b_{P_l} \odot b_P}P \rho  (-1)^{b_{P_l} \odot b_{P'}}P'  \\
&= 4^{-n}\sum_{l}  (-1)^{b_{P_l} \odot b_P + b_{P_l} \odot b_{P'}}P \rho  P'  \\
&= 4^{-n}\sum_{l}  (-1)^{b_{P_l} \odot (b_P + b_{P'}) }P \rho  P'.  \\
\end{split}
\end{equation}
The distributivity of the symplectic inner product is used in the final equality.
Now if $P = P'$, then the symplectic product becomes $b_{P_l} \odot 0 = 0$ and the result is 
\begin{equation}
4^{-n}\sum_{l} P_l P P_l \rho P_l P P_l = P \rho P.
\end{equation}
Whereas if $P \neq P'$, then  for half of the Pauli group $b_{P_l} \odot (b_P + b_{P'}) = 0$ and half $b_{P_l} \odot (b_P + b_{P'}) = 1$ resulting in overall cancellation and the elimination of the term. So that
\begin{equation}
4^{-n}\sum_{l} P_l P P_l \rho P_l P' P_l = 0 \hspace{1.5em} \forall P \neq P'.
\end{equation}
An intuitive interpretation of this is that, if $P \neq P'$, the sum mod 2 of two binary representations $b_P + b_{P'}$ defines a new $n$-qubit Pauli operator, and this operator commutes with half of the Pauli group and anticommutes with the other half, resulting in cancellation.

The Pauli twirling eliminates all of the off-diagonal components of the Pauli decomposition of the channel, leaving only the diagonal elements. As these elements are real, and sum to 1, the transformed channel is a stochastic Pauli channel. Taking the Pauli twirl of the channel $\mathcal{E}$ transforms it to
\begin{equation}
\begin{split}
 \mathcal{E}' (\rho) &= |\{P_l\}_l|^{-1}\sum_{l} \sum_{i,j}\chi_{i,j}   P_l P_i P_l\rho P_l P_j P_l\\
&= \sum_{i}\chi_{i,i}    P_i \rho P_i \\
&= \sum_{i}c_{P_i}    P_i \rho P_i. \\
\end{split}
\end{equation}
Where we have defined the set of Pauli operator probability coefficients $\{c_{P_i}\}_i$, such that the coefficient for Pauli operator $P_i$ is $c_{P_i}$.
Therefore, the logical non-Clifford Pauli rotation gate noise is transformed by the twirling into stochastic logical Pauli noise.

\subsubsection{Logical measurement twirling}

Logical measurement noise may be Pauli twirled by applying random Pauli operations immediately prior to logical measurement, and then undoing the action of the Paulis after measurement through the use of classical postprocessing.
Noise affecting the measurement outcome $\ket{z}$, where $z \in \{0,1\}^n$, is modelled as $ \mathcal{E}_M  \circ (\ket{z}\bra{z}) $ where the logical measurement noise is denoted by $\mathcal{E}_M$. 
The measurement noise may then be Pauli twirled by applying a layer of logical Pauli operations immediately prior to measurement, and a virtual Pauli gate layer after measurement, positioned between the noise channel $\mathcal{E}_M$ and the ideal measurement outcome $\ket{z}\bra{z}$.
The virtual gate layer is performed by classical postprocessing of the measured output bit string, conditionally flipping the output bits to undo the effect of the logical Pauli layer on the measurement outcome.
That is, if the $i$-th term in the tensor product of the Pauli operator applied before measurement is an $X$ or a $Y$ operator, the $i$-th bit in the output bit string is flipped.
Whereas if it is an $I$ or a $Z$ operator, the bit is left unchanged.
This may be summarised by the relation $P_i\ket{z}\bra{z}P_i = \ket{z \oplus x_{P_i}}\bra{z\oplus x_{P_i}}$, where the sum is mod 2 and the bits of the ordered bit string $x_{P_i}$ are $1$s where the logical $n$-qubit Pauli operator $P_i$ acts on the corresponding logical qubit with an $X$ or $Y$ operator, and $0$s otherwise.
The measurement noise is then transformed into a logical stochastic Pauli channel in the following way.

The twirling operations are applied immediately before measurement,
and the twirled measurement outcome may then be written
\begin{equation}
\begin{split}
M_{\mathcal{T}}&= 4^{-n}
\sum_{P_i \in \{I,X,Y,Z\}^{\otimes n}}P_i \circ \mathcal{E}_M\circ\left(\ket{z\oplus x_{P_i}}\bra{z\oplus x_{P_i}}\right)\\
&= 4^{-n}\sum_{P_i \in \{I,X,Y,Z\}^{\otimes n}}P_i \circ \mathcal{E}_M\circ\left(P_i\ket{z}\bra{z}P_i\right)\\
&= 4^{-n}\sum_{P_i \in \{I,X,Y,Z\}^{\otimes n}}P_i \circ \mathcal{E}_M \circ P_i\circ\left(\ket{z}\bra{z}\right)\\
&= (1-\epsilon)\ket{z}\bra{z}\\
&\hspace{3em}+\epsilon\sum_{\substack{P_j\in\{I,X,Y,Z\}^{\otimes n}\\
x_{P_j}\neq 0^n}}\alpha_{P_j}P_j\circ\left(\ket{z}\bra{z}\right).
\end{split}
\end{equation}
The post-twirl measurement error rate is then defined as $\epsilon:=1-\bra{z}M_{\mathcal{T}}\ket{z}$.
The coefficients satisfy $\alpha_{P_j}\geq 0$ and
\[
\qquad
\sum_{\substack{
P_j\in\{I,X,Y,Z\}^{\otimes n}\\
x_{P_j}\neq 0^n
}}
\alpha_{P_j}=1.
\]
The relation $P_i\ket{z}\bra{z}P_i=\ket{z\oplus x_{P_i}}\bra{z\oplus x_{P_i}}$
is used in the second line, where the notation $\oplus$ denotes
addition modulo two. 
This operation can be achieved by reassigning bits in the measured outcomes.

\subsubsection{Compiling the twirling gate operations into the logical circuits}

The Pauli twirling operations are included in the target and trap circuits by effectively interleaving gate layers with Pauli twirling gate layers, with the first Pauli layer of each circuit used to twirl state-preparation noise and the last to twirl measurement noise.
The $i$-th logical circuit, $\mathcal{C}^{(i)}$, may be written as a sequence of $D$ logical gate layers and logical Pauli twirling gate layers as
\begin{equation}
\begin{split}
\tilde{\mathcal{C}}^{(i)} {}'
&= \mathcal{P}_{D+1}^{(1)}  \bigcirc_{j=1}^D (\mathcal{P}_{j}^{(2)} \circ \mathcal{L}_j^{(i)} \circ \mathcal{P}_{j}^{(1)}) \circ \mathcal{P}_{0}^{(2)},\\
\end{split}
\end{equation}
where $\mathcal{L}_j^{(i)}$ denotes the $j$-th logical gate layer of the $i$-th circuit.
For the $j$-th Pauli twirl, the notation $\mathcal{P}_{j}^{(1)}$ and $\mathcal{P}_{j}^{(2)}$ indicates the Pauli twirling operation being applied and then undone, respectively.
The $\mathcal{P}_{0}^{(1)}$ and $\mathcal{P}_{D+1}^{(1)}$ Pauli gate layers twirl the logical state preparation noise and the logical computational qubit measurement noise.
Contiguous twirling operation layers may be compiled together into a single Pauli layer.
The result of the Pauli twirling operations is that all logical state-preparation, gate and measurement noise is effectively transformed into logical stochastic Pauli noise.

\subsection{Twirling magic state preparation noise} \label{app_section_magic_state_twirling}

We distinguish between two frameworks for logical computation, describing means of twirling magic state preparation noise into logical stochastic Pauli noise in each instance.
Firstly, in the case where target magic states are of the form $\ket{\pi/4}$, these states can be purified or unpurified.
And secondly, in the case where target magic states are arbitrarily phase-rotated and are of the form $\ket{\theta}$, these states are not purified.

\vspace{0.1em}
\subsubsection*{\texorpdfstring{\textbf{Twirling scheme for computations involving $\ket{\pi/4}$ target magic states}}{\textbf{Twirling scheme for computations involving |pi/4> target magic states}}}

If $\ket{\pi/4}$ magic states are used in the target circuit, the following magic state twirling methods are used.
This approach is compatible with the inclusion of magic state purification in the computation.

We now describe the procedures for twirling the state-preparation noise of the $\ket{\pi/4}$ target magic states, and the $\ket{\pi/2}$ and $\ket{\pi}$ trap magic states into stochastic logical Pauli noise.
Although the proofs demonstrating that twirling the state-preparation noise of the $\ket{\pi/2}$ and $\ket{\pi}$ magic states results in stochastic $Z$-channels are similar to the one for $\ket{\pi/4}$ magic states, we include them for the sake of completeness.

The initialised state in each instance is a logical $\ket{+}$ state; and the magic state state-preparation noise may be twirled using the gate set $\{I,X\}$ into a logical stochastic Pauli-$Z$ channel.

\vspace{1em}
\subsubsubsection{\texorpdfstring{Twirling $\ket{\pi/4}$ magic states}{Twirling |pi/4> magic states}}

We now show that twirling $\ket{\pi/4}$ magic states transforms any noise affecting these states into stochastic logical $Z$ noise.
These states are twirled with respect to the gate set $\{I, A\}$, where $A=\ket{\pi/4}\bra{\pi/4}\ - Z\ket{\pi/4}\bra{\pi/4}Z$. 

The logical error rate is defined 
\begin{equation}
{\epsilon} :=1-\bra{\pi/4} \rho \ket{\pi/4}.
\end{equation}
Noting that $\ket{\pi/4}\bra{\pi/4}\ - Z\ket{\pi/4}\bra{\pi/4}Z = e^{-i\pi/4}SX$ and that 
\begin{equation}
e^{-i\pi/4} SX = e^{-i\pi/4} \begin{pmatrix}
0 & 1 \\
i & 0
\end{pmatrix} = \begin{pmatrix}
0 & e^{-i\pi/4} \\
 e^{i\pi/4} & 0
\end{pmatrix}.
\end{equation}
Let $\ket{w}:=Z\ket{\pi/4}$, the twirled state is then
\begin{widetext}
\begin{align}
\frac{1}{2}(\rho + A\rho A^\dagger) &= \frac{1}{2}\big((\alpha_1\ket{\pi/4}\bra{\pi/4}
+ \alpha_2\ket{\pi/4}\bra{w} + \alpha_3\ket{w}\bra{\pi/4} + \alpha_4\ket{w}\bra{w}) +A(\alpha_1\ket{\pi/4}\bra{\pi/4} + \alpha_2\ket{\pi/4}\bra{w}\\
&\hspace{3em}+ \alpha_3\ket{w}\bra{\pi/4} + \alpha_4\ket{w}\bra{w} ) A^\dagger)\\
&= \frac{1}{2}\big(\alpha_1(\ket{\pi/4}\bra{\pi/4} + A\ket{\pi/4}\bra{\pi/4}A^{\dagger})+\alpha_4(\ket{w}\bra{w} + A\ket{w}\bra{w}A^{\dagger}) \big)\\
&= \frac{1}{2}\big(\alpha_1(\ket{\pi/4}\bra{\pi/4} + \ket{\pi/4}\bra{\pi/4}) +\alpha_4(\ket{w}\bra{w} + \ket{w}\bra{w}) \big)\\
&= \alpha_1\ket{\pi/4}\bra{\pi/4} + \alpha_4\ket{w}\bra{w}\\
&= (1-{\epsilon})\ket{\pi/4}\bra{\pi/4} + {\epsilon}\ket{w}\bra{w}.
\end{align}
\end{widetext}
We have decomposed $\rho$ in the basis of the orthogonal states $\ket{\pi/4}$ and $\ket{w}$ with $\alpha_i\in\mathbb{C}$, $\alpha_1+\alpha_4=1$, and $\alpha_3=\alpha_2^*$.
We then used that
\begin{equation}
\begin{split}
A\ket{\pi/4}&\bra{w}A^{\dagger} + \ket{\pi/4}\bra{w}\\
&= (\ket{\pi/4}\bra{\pi/4}\ - \ket{w}\bra{w}) \ket{\pi/4}\bra{w}(\ket{\pi/4}\bra{\pi/4}\ \\
&\hspace{3em}- \ket{w}\bra{w})^{\dagger} + \ket{\pi/4}\bra{w}\\
&= -\ket{\pi/4}\bra{\pi/4}\ket{\pi/4}\bra{w}\ket{w}\bra{w}+ \ket{\pi/4}\bra{w} \\
&=0
\end{split}
\end{equation}
to get the second equality.
The twirled noisy state may then be written as
\begin{equation}
\begin{split}
\mathcal{T}(\rho) &= (1-{\epsilon})\ket{\pi/4}\bra{\pi/4} + {\epsilon}\ket{w}\bra{w}\\
&=(1-{\epsilon})\ket{\pi/4}\bra{\pi/4} + {\epsilon}Z\ket{\pi/4}\bra{\pi/4}Z.\\
\end{split}
\end{equation}
And so the twirled noise is a stochastic $Z$ channel.

\vspace{1em}
\subsubsubsection{\texorpdfstring{Twirling $\ket{\pi/2}$ magic states}{Twirling |pi/2> magic states}}

We now show that twirling $\ket{\pi/2}$ magic states transforms any noise affecting these states into stochastic logical $Z$ noise.
These states are twirled with respect to the gate set $\{I, B\}$, where $B = ZX$. 
The logical error rate is defined 
\begin{equation}
{\epsilon} :=1-\bra{\pi/2} \rho \ket{\pi/2}.
\end{equation}
Let $\ket{w}:=Z\ket{\pi/2}$, the twirled state is then
\begin{widetext}
\begin{align}
\frac{1}{2}(\rho + B\rho B^\dagger) 
&= \frac{1}{2}\big( (\alpha_1\ket{\pi/2}\bra{\pi/2}
+ \alpha_2\ket{\pi/2}\bra{w} + \alpha_3\ket{w}\bra{\pi/2} + \alpha_4\ket{w}\bra{w})+B(\alpha_1\ket{\pi/2}\bra{\pi/2} + \alpha_2\ket{\pi/2}\bra{w}\\
&\hspace{3em}+ \alpha_3\ket{w}\bra{\pi/2}) + \alpha_4\ket{w}\bra{w})  B^\dagger \big)\\
&= \frac{1}{2}\big(\alpha_1(\ket{\pi/2}\bra{\pi/2} + B\ket{\pi/2}\bra{\pi/2}B^{\dagger})+\alpha_4(\ket{w}\bra{w} + B\ket{w}\bra{w}B^{\dagger}) \big)\\
&= \frac{1}{2}\big( \alpha_1(\ket{\pi/2}\bra{\pi/2} + \ket{\pi/2}\bra{\pi/2}) +\alpha_4(\ket{w}\bra{w} + \ket{w}\bra{w}) \big)\\
&= \alpha_1\ket{\pi/2}\bra{\pi/2} + \alpha_4\ket{w}\bra{w}\\
&= (1-{\epsilon})\ket{\pi/2}\bra{\pi/2} + {\epsilon}\ket{w}\bra{w}.
\end{align}
\end{widetext}
Similarly to before, $\rho$ is decomposed in the basis of the orthogonal states $\ket{\pi/2}$ and $\ket{w}$ with $\alpha_i\in\mathbb{C}$, $\alpha_1+\alpha_4=1$, and $\alpha_3=\alpha_2^*$.
We then used the relation
\begin{equation}
\begin{split}
B\ket{\pi/2}&\bra{w}B^{\dagger} + \ket{\pi/2}\bra{w}\\
&= (\ket{\pi/2}\bra{\pi/2}\ - \ket{w}\bra{w}) \ket{\pi/2}\bra{w}(\ket{\pi/2}\bra{\pi/2}\ \\
&\hspace{3em}- \ket{w}\bra{w})^{\dagger} + \ket{\pi/2}\bra{w}\\
&= -\ket{\pi/2}\bra{\pi/2}\ket{\pi/2}\bra{w}\ket{w}\bra{w}+ \ket{\pi/2}\bra{w} \\
&=0
\end{split}
\end{equation}
to get the third equality.
And so the twirled noise is a stochastic $Z$ channel with the noisy state expressed as 
\begin{equation}
\begin{split}
\mathcal{T}(\rho) &= (1-{\epsilon})\ket{\pi/2}\bra{\pi/2} + {\epsilon}\ket{w}\bra{w}\\
&=(1-{\epsilon})\ket{\pi/2}\bra{\pi/2} + {\epsilon}Z\ket{\pi/2}\bra{\pi/2}Z.\\
\end{split}
\end{equation}

\vspace{1em}
\subsubsubsection{\texorpdfstring{Twirling $\ket{\pi}$ magic states}{Twirling |pi> magic states}}

We now show that twirling $\ket{\pi}$ magic states transforms any noise affecting the state into stochastic logical $Z$ noise.
The logical error rate is defined as
\begin{equation}
{\epsilon} :=1-\bra{\pi} \rho \ket{\pi}.
\end{equation}
Let $\ket{w}:=Z\ket{\pi}$, the twirled state is then
\begin{widetext}
\begin{align}
\frac{1}{2}(\rho + X\rho X^\dagger) 
&= \frac{1}{2}\big( (\alpha_1\ket{\pi}\bra{\pi}
+ \alpha_2\ket{\pi}\bra{w} + \alpha_3\ket{w}\bra{\pi} + \alpha_4\ket{w}\bra{w})+X(\alpha_1\ket{\pi}\bra{\pi} + \alpha_2\ket{\pi}\bra{w}\\
&\hspace{3em}+ \alpha_3\ket{w}\bra{\pi}) + \alpha_4\ket{w}\bra{w})  X^\dagger \big)\\
&= \frac{1}{2}\big(\alpha_1(\ket{\pi}\bra{\pi} + X\ket{\pi}\bra{\pi}X^{\dagger})+\alpha_4(\ket{w}\bra{w} + X\ket{w}\bra{w}X^{\dagger}) \big)\\
&= \frac{1}{2}\big( \alpha_1(\ket{\pi}\bra{\pi} + \ket{\pi}\bra{\pi}) +\alpha_4(\ket{w}\bra{w} + \ket{w}\bra{w}) \big)\\
&= \alpha_1\ket{\pi}\bra{\pi} + \alpha_4\ket{w}\bra{w}\\
&= (1-{\epsilon})\ket{\pi}\bra{\pi} + {\epsilon}\ket{w}\bra{w}.
\end{align}
\end{widetext}
Similarly to before, $\rho$ is decomposed in the basis of the orthogonal states $\ket{\pi}$ and $\ket{w}$ with $\alpha_i\in\mathbb{C}$, $\alpha_1+\alpha_4=1$, and $\alpha_3=\alpha_2^*$.
We then used the relation
\begin{equation}
\begin{split}
X\ket{\pi}&\bra{w}X^{\dagger} + \ket{\pi}\bra{w}\\
&= (\ket{\pi}\bra{\pi}\ - \ket{w}\bra{w}) \ket{\pi}\bra{w}(\ket{\pi}\bra{\pi}\ \\
&\hspace{3em}- \ket{w}\bra{w})^{\dagger} + \ket{\pi}\bra{w}\\
&= -\ket{\pi}\bra{\pi}\ket{\pi}\bra{w}\ket{w}\bra{w}+ \ket{\pi}\bra{w} \\
&=0
\end{split}
\end{equation}
 to get the third equality.
And so the twirled noise is a stochastic $Z$ channel, and the twirled state may be expressed in the form 
\begin{equation}
\begin{split}
\mathcal{T}(\rho) &= (1-{\epsilon})\ket{\pi}\bra{\pi} + {\epsilon}\ket{w}\bra{w}\\
&=(1-{\epsilon})\ket{\pi}\bra{\pi} + {\epsilon}Z\ket{\pi}\bra{\pi}Z.\\
\end{split}
\end{equation}

\subsubsection*{\texorpdfstring{\textbf{Twirling scheme for computations involving $\ket{\theta}$ magic states}}{\textbf{Twirling scheme for computations involving |theta> magic states}}}

In the partially fault-tolerant regime one may use arbitrarily phase-rotated magic states.
If $\ket{\theta}$ magic states are used in the target circuit, the state preparation noise is transformed into logical stochastic Pauli noise using a Pauli twirling approach.
In this case, the gate applying the Pauli-$Z$ rotation is Pauli twirled, with the phase rotation angle reversed if the $X$ or $Y$ twirling operations are applied.
For this implementation we require that the phase rotation of the non-fault-tolerantly prepared magic states depends on physical single-qubit rotation gates, rather than physical multi-qubit rotation gates.
For example, the Pauli rotations shown in Fig. \ref{fig:analog-z-rotation-different-weights} (b) and (c) are valid, while Fig. \ref{fig:analog-z-rotation-different-weights} (a) is not.

Let the initial ancillary qubit logical state be denoted $\rho$.
In the absence of noise, this initial state is $\rho = \ket{+}\bra{+}$.
The state-preparation noise for this state is twirled using the gate set $\{I,X\}$, which transforms the logical state preparation noise into a Pauli-$Z$ channel.

To prepare the $\ket{\theta}$ state, a logical $Z(\theta)$ rotation must be applied to the ancillary qubit. 
Let the application of the noisy logical $Z$ rotation gate be modelled as $\mathcal{E} \circ Z(\theta) \circ (.)$, where $\mathcal{E}(.)$ is the logical noise channel associated with application of the logical $Z$ rotation gate.
Applying the noisy Pauli rotation gate to the state $\rho$ results in the state
\begin{equation}
\rho' = \mathcal{E} \circ Z(\theta) \circ (\rho).
\end{equation}
If Pauli twirling operations are included, sandwiching the noisy $Z(\theta)$ gate, instead the state is
\begin{equation}
\begin{split}
\rho_{\mathcal{T}} &= 4^{-1} \bigg(\sum_{P_i \in \{I,X,Y,Z\}}P_i \circ \mathcal{E} \circ Z(\theta) \circ P_i \bigg)\circ (\rho) \\
&= 4^{-1} \bigg(\sum_{P_i \in \{I,X,Y,Z\}}  P_i \circ \mathcal{E} \circ P_i \\
& \hspace{3em} \circ Z\big((-1)^{b_{P_i} \odot b_Z}\cdot\theta\big)\bigg) \circ (\rho), \\
\end{split}
\end{equation}
where $b_{P_i}$ and $b_Z$ are the binary symplectic representations of the Pauli operators $P_i$ and $Z$ respectively.
To get the second equality, the Pauli operators have been propagated through the $Z(\theta)$ gate to isolate the logical error channel for the Pauli twirling.
However, the Pauli operators that do not commute with $Z$ reverse the rotation angle of the $Z$ rotation gate.
This change in rotation angle can be counteracted by updating the rotation angle in each twirling instance.
Since the rotation angle depends on the action of physical rotation gates, and we are assuming that physical single-qubit gate noise is gate-independent, the updates can be made without changing the gate noise.
The twirling operation then becomes
\begin{equation}
\begin{split}
\rho_{\mathcal{T}} &= 4^{-1} \bigg(\sum_{P_i \in \{I,X,Y,Z\}}P_i \circ \mathcal{E} \circ Z\big((-1)^{b_{P_i} \odot b_Z }\cdot\theta\big) \circ P_i \bigg)\\
& \hspace{3em }\circ (\rho) \\
&= 4^{-1} \bigg(\sum_{P_i \in \{I,X,Y,Z\}}  P_i \circ \mathcal{E} \circ P_i \bigg)  \circ Z(\theta) \circ (\rho). \\
\end{split}
\end{equation}
The logical rotation gate channel has now been isolated from the gate, and the transformation to logical stochastic Pauli noise follows in the same way as in Appendix \ref{arbitrary_twirling}.
The Pauli twirling eliminates all of the off-diagonal components of the Pauli decomposition of the channel, leaving only the diagonal elements. 
Taking the Pauli twirl of the channel $\mathcal{E}$ transforms it to
\begin{equation}
\begin{split}
 \mathcal{E}' (\rho) 
&= \sum_{P_i \in \{I,X,Y,Z\}}c_{P_i}    P_i \circ Z(\theta) (\rho). \\
\end{split}
\end{equation}
Where we have defined the set of Pauli operator probability coefficients $\{c_{P_i}\}_i$, such that the coefficient for Pauli operator $P_i$ is $c_{P_i}$.
Therefore, the magic state preparation noise is transformed by the twirling into logical stochastic Pauli noise.

The corresponding trap magic states are twirled using the same approach.

\section{Computing the upper bound on the TVD of the experimental target circuit output from  the ideal output} \label{computing_bound_app}

We now derive Theorem \ref{main_theorem} from the main text.

To implement the certification protocol, the target and trap circuits are run using encoded logical qubits on the specified quantum device, and the measured outputs of the trap circuits are used to certify the target computation in the following way.
The logical twirling scheme described in Section \ref{log_circ_twirling} transforms general logical noise into stochastic Pauli noise. 
For each possible circuit position $i$, the output state that would be generated if the target circuit were executed in that
position may be decomposed as
\begin{equation}
    \label{eqaccreditation-1}
    \rho_{out}^{(i)}=(1-p_{err}^{(i)})\rho_{out,id}^{(i)} + p_{err}^{(i)} \rho_{noisy}^{(i)},
    \end{equation}
where $p_{\mathrm{err}}^{(i)}$ is the probability that errors affecting execution position $i$ would have a non-trivial effect on the target circuit if it were run in that position, and
$\rho_{\mathrm{noisy}}^{(i)}$ is the corresponding target output
state when such errors occur.
The gate operations of the ideal target circuit, $C_T$, may be
written as a sequence of $D$ logical gate layers,
\begin{equation}
    C_T=\prod_{j=1}^{D}L_j,
\end{equation}
where $L_j$ denotes the $j$-th logical gate layer of the target
circuit. Now, including the logical circuit noise associated with
execution position $i$, the previous term becomes
\begin{equation}
    \widetilde{C}_T^{(i)} = \mathcal{E}_{D+1}^{(i)} \cdot \left(  \prod_{j=1}^{D}  \mathcal{E}_{j}^{(i)}L_j  \right)  \cdot  \mathcal{E}_{0}^{(i)},
\end{equation}
where $\mathcal{E}_{j}^{(i)}$ denotes the logical effective Pauli noise channel associated with logical gate layer $\mathcal{L}_j$.
The noise channels $\mathcal{E}_{0}^{(i)}$ and $\mathcal{E}_{D+1}^{(i)}$ denote logical state preparation and measurement noise, respectively.
Noise arising from the use of noisy magic states and from incorrect decoding during cycles of error correction are included in the set of gate layer noise channels $\{\mathcal{E}_{j}^{(i)}\}_j$.
Under the stated noise assumptions, the same distribution of
position-dependent noise channels affects a trap circuit executed
in position $i$, although the resulting errors may propagate
differently through the trap circuit than the target circuit.
The value $p_{\mathrm{err}}^{(i)}$ therefore depends on the noise
affecting execution position $i$ and on its effect on the target
circuit.

After the randomly ordered target and trap circuits are run on the quantum device and the measured outputs are recorded, the outputs of the trap circuits are classically postprocessed to compute an upper bound on the error of the target circuit output.
Specifically, they are used to bound the TVD between the experimentally sampled target circuit output distribution, $\mathcal{D}_{exp}$, and the ideal distribution, $\mathcal{D}_{ideal}$.
If the states $\rho_{out}$ and $\rho_{out,id}$ are measured in the computational basis, for which the measurement projection operators are $\{\Pi_s\}_s$ where $\Pi_s = \ket{s}\bra{s}$ for $s \in \{0,1\}^n$, the TVD of the output distributions can be expressed as 
\begin{equation}
\begin{split}
\delta(\mathcal{D}_{exp},\mathcal{D}_{ideal}) &=  \frac{1}{2}\sum_{s \in \{0,1\}^n} \big|\text{Tr}\big[(\rho_{out} - \rho_{out,id})\Pi_s\big]\big|\\
&=\frac{1}{2}\sum_{s \in \{0,1\}^n} |p_{exp}(s) - p_{ideal}(s)|.
\end{split}
\end{equation} 
Let the
probability that the $i$-th trap circuit returns a bit string other
than $0^n$ be denoted by $p_{\mathrm{inc}}^{(i)}$. Let
$p_{\mathrm{canc}}^{(i)}$ denote the conditional probability that
errors which would have a non-trivial effect on the target circuit
cancel in the $i$-th trap circuit, and let
$p_{\mathrm{det}}^{(i)}$ denote the probability that the remaining
non-trivial error is detected.
The probability that a trap circuit does not output the bit string $0^n$, i.e. that errors occur and are detected, is lower-bounded by the probability that errors occur in a trap circuit, do not cancel, and are detected, so that 
\begin{equation} \label{p_inc_bound}
p_{inc}^{(i)} \geq p_{err}^{(i)}\cdot(1 - p_{canc}^{(i)})\cdot p_{det}^{(i)},
\end{equation}
which may be rearranged to 
\begin{equation} 
p_{err}^{(i)} \leq \frac{p_{inc}^{(i)}}{(1 - p_{canc}^{(i)})\cdot p_{det}^{(i)}}.
\end{equation}
The TVD can be upper-bounded using the inequality: $\delta(\mathcal{D}_{exp}^{(i)},\mathcal{D}_{ideal}^{(i)}) \leq D(\rho_{out}^{(i)},\rho_{out,id}^{(i)})$, where $D(.,.)$ is the trace distance.
The trace distance between the experimental and ideal target
output states, if the target circuit were executed in position
$i$, is
\begin{equation}
\begin{split}
D(\rho_{out}^{(i)},\rho_{out,id}^{(i)}) &= \frac{1}{2}\text{Tr}(|\rho_{out}^{(i)} - \rho_{out,id}^{(i)}|) \\
&= \frac{1}{2}\text{Tr}(|(1 - p_{err}^{(i)})\rho_{out,id}^{(i)}\\
&\hspace{3.5em}+ p_{err}^{(i)} \rho_{noisy}^{(i)} - \rho_{out,id}^{(i)}|) \\
&= \frac{1}{2}p_{err}^{(i)} \text{Tr}(|\rho_{noisy}^{(i)} - \rho_{out,id}^{(i)}|) \\
\end{split}
\end{equation}
And because $\frac{1}{2}\text{Tr}(|\rho_{noisy}^{(i)} - \rho_{out,id}^{(i)}|) \leq 1$, it follows that \begin{equation}
\label{eq:positionwise_trace_distance_bound}D(\rho_{out}^{(i)},\rho_{out,id}^{(i)})  \leq p_{err}^{(i)}.
\end{equation}
Therefore, the TVD that the target circuit would have if executed
in position $i$ is bounded by 
 \begin{equation}
     \label{eqtvd}
\delta(\mathcal{D}_{exp}^{(i)},\mathcal{D}_{ideal}^{(i)}) \leq \frac{p_{inc}^{(i)}}{(1 - p_{canc}^{(i)})\cdot p_{det}^{(i)}}.
 \end{equation}

The noise channels affecting different circuit runs can be different, and are assumed to be independent, as stated in assumption A3.
The output of the trap circuits may be used to upper-bound the average total circuit error rate over the ensemble of possible noise behaviours affecting the target and trap circuits.
The target circuit can be viewed, when averaged over the uniformly random circuit ordering, as subject to the distribution of all noise behaviours, with an overall error rate equal to the expected error rate of the distribution. 
Therefore, the upper bound on the total circuit error rate computed from the trap circuits may be applied to the target circuit.

Let $\beta$ be a uniform upper bound on the cancellation probabilities and $\omega$ be a uniform lower bound on detection probabilities, such that $p_{canc}^{(i)}\leq\beta$ and $p_{det}^{(i)}\geq \omega$ for every trap circuit index $i$.
Let $R\in\{1,\ldots,M+1\}$ denote the uniformly random
physical execution position assigned to the target circuit. For
each physical execution position $r$, let
$p_{\mathrm{inc}}^{(r)}$ denote the probability that a trap
circuit would return a bit string other than $0^n$ if it were
executed in position $r$, under the noise associated with that
position. Define the target output distribution averaged over the
uniformly random circuit ordering as
$ \overline{\mathcal{D}}_{\mathrm{exp}}
    :=
    \mathbb{E}_{R}\!\left[
        \mathcal{D}_{\mathrm{exp}}^{(R)}
    \right]$.
By convexity of the TVD and the bound in
Eq.~\eqref{eqtvd}, we have
\begin{align}
\delta\!\left(
    \overline{\mathcal{D}}_{\mathrm{exp}},
    \mathcal{D}_{\mathrm{ideal}}
\right)
&\leq
\mathbb{E}_{R}\!\left[
    \delta\!\left(
        \mathcal{D}_{\mathrm{exp}}^{(R)},
        \mathcal{D}_{\mathrm{ideal}}
    \right)
\right]
\nonumber\\
&\leq
\frac{
    \mathbb{E}_{R}\!\left[
        p_{\mathrm{inc}}^{(R)}
    \right]
}{
    (1-\beta)\omega
}
\nonumber\\
&=
\frac{1}{(1-\beta)\omega}
\mathbb{E}_{R}\!\left[
    \frac{1}{M}
    \sum_{r\neq R}
    p_{\mathrm{inc}}^{(r)}
\right].
\label{eq:target_tvd_random_order}
\end{align}
The final equality follows because
\begin{equation}
\mathbb{E}_{R}\!\left[
    p_{\mathrm{inc}}^{(R)}
\right]
=
\frac{1}{M+1}
\sum_{r=1}^{M+1}p_{\mathrm{inc}}^{(r)}
=
\mathbb{E}_{R}\!\left[
    \frac{1}{M}
    \sum_{r\neq R}p_{\mathrm{inc}}^{(r)}
\right].
\end{equation}
Thus, the expected target error probability and the expected mean
trap failure probability are averages over the same ensemble of
position-dependent noise behaviours.
For notational simplicity, we will henceforth write
$\mathcal{D}_{\mathrm{exp}}
:=\overline{\mathcal{D}}_{\mathrm{exp}}$, so that all subsequent
target output distributions and probabilities include the uniform
randomisation of the target execution position.

For a realised target circuit position
$R\in\{1,\ldots,M+1\}$, let
$\{X_r\}_{r\neq R}$ denote the outcomes of the $M$ trap circuit
runs, where $X_r=0$ indicates that the all-zero string $0^n$ is
measured and $X_r=1$ indicates that any other bit string is
measured. Conditioned on $R$, the variables
$\{X_r\}_{r\neq R}$ are independent Bernoulli random variables
satisfying $X_r\sim\operatorname{Bern}
    (p_{\mathrm{inc}}^{(r)}),
    $ and $
    \mathbb{E}[X_r\mid R]
    =
    p_{\mathrm{inc}}^{(r)}.$
Here, $p_{\mathrm{inc}}^{(r)}$ is the probability that a trap
circuit executed in physical position $r$ returns a bit string
other than $0^n$.
Define the empirical trap failure probability by
\begin{equation}
    \widehat{p}_{\mathrm{inc}}
    :=
    \frac{1}{M}
    \sum_{r\neq R}X_r
    =
    \frac{N_{\mathrm{inc}}}{M},
\end{equation}
and define the corresponding mean trap failure probability by
$ \mu_R := \frac{1}{M} \sum_{r\neq R}p_{\mathrm{inc}}^{(r)}$.
It follows that $\mathbb{E}\!\left[
        \widehat{p}_{\mathrm{inc}}
        \,\middle|\,R
    \right]
    =
    \mu_R$.
By Hoeffding's inequality,
\begin{equation}
    \Pr\!\left(
        \left|
            \widehat{p}_{\mathrm{inc}}-\mu_R
        \right|
        \leq\epsilon
        \,\middle|\,R
    \right)
    \geq
    1-2e^{-2\epsilon^2M}.
\end{equation}
Because this bound holds for every value of $R$, it also holds
after averaging over the uniformly random target position.
Define the position-averaged trap circuit failure probability appearing in
Eq.~\eqref{eq:target_tvd_random_order} by
\begin{equation}
    \overline{p}_{\mathrm{inc}}
    :=
    \mathbb{E}_R\!\left[
        p_{\mathrm{inc}}^{(R)}
    \right]
    =
    \frac{1}{M+1}
    \sum_{r=1}^{M+1}p_{\mathrm{inc}}^{(r)}.
\end{equation}
For every realised value of $R$,
\begin{align}
    \overline{p}_{\mathrm{inc}}-\mu_R
    &=
    \frac{
        p_{\mathrm{inc}}^{(R)}-\mu_R
    }{
        M+1
    },
\end{align}
and therefore
\begin{equation}
    \left|
        \overline{p}_{\mathrm{inc}}-\mu_R
    \right|
    \leq
    \frac{1}{M+1}.
\end{equation}
Consequently, with confidence at least $1-\alpha$, it holds that
   $\label{eq:finite_sampling_pinc_bound}\overline{p}_{\mathrm{inc}}
    \leq
    \widehat{p}_{\mathrm{inc}}
    +\epsilon
    +\frac{1}{M+1}$,
provided that
\begin{equation}
    M
    \geq
    \frac{1}{2\epsilon^2}
    \log\!\left(\frac{2}{\alpha}\right).
\end{equation}
Thus, the above condition on $M$ is sufficient for
$\widehat{p}_{\mathrm{inc}}$ to estimate $\mu_R$ to within
$\epsilon$ with confidence at least $1-\alpha$; the additional
term $1/(M+1)$ accounts for the execution position occupied by
the target circuit and therefore omitted from the observed trap
mean.

The probability that a single error affects a trap circuit and is detected, $p_{det}$, is lower-bounded according to the following lemma:
\begin{lemma}\label{single_detection_lemma}
    Any single logical Pauli error of arbitrary weight occurring at any single time-step during a randomly generated trap circuit is detected with probability $p_{det} \geq 1/2$.
\end{lemma}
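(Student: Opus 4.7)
The plan is to show that any single logical Pauli error $E$ of arbitrary weight occurring at any time-step propagates through the remainder of the trap circuit to a nontrivial Pauli $E''$ acting at the measurement stage, and that the probabilistic final Hadamard layer then catches this $E''$ with probability at least $1/2$. All operations following $E$ are Clifford --- the multi-qubit Clifford gates, the random single-qubit Clifford layers $\mathcal{W}$, the Pauli twirls, and the trap-version magic-state routines which compile to identity --- so, conditioning on all random compilation choices, $E$ evolves to a signed Pauli $E''$ just before the optional final Hadamard layer $\mathcal{H}$. Since the state being measured is $\ket{0^n}$ and operators in $\{I,Z\}^{\otimes n}$ stabilise it, the trap registers a failure if and only if the Pauli acting on $\ket{0^n}$ immediately before measurement lies outside $\{I,Z\}^{\otimes n}$, i.e.\ has an $X$ or $Y$ factor on some logical qubit.

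I would next exploit the fact that $\mathcal{H}$ is applied with probability $1/2$, and the identity layer with probability $1/2$, independently of all other randomisation in the trap construction. Conjugation by $H^{\otimes n}$ exchanges $X\leftrightarrow Z$ on each qubit and fixes $Y$ up to an irrelevant sign, so $H^{\otimes n} E'' H^{\otimes n}\in\{I,Z\}^{\otimes n}$ if and only if $E''\in\{I,X\}^{\otimes n}$. For nontrivial $E''$, at least one of $E''$ and $H^{\otimes n} E'' H^{\otimes n}$ must therefore lie outside $\{I,Z\}^{\otimes n}$: if both were inside, $E''$ would have to lie simultaneously in $\{I,Z\}^{\otimes n}$ and $\{I,X\}^{\otimes n}$, forcing $E''=I^{\otimes n}$. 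Averaging over the final Hadamard coin flip then gives a detection probability of at least $1/2$, conditioned on the other randomisation, and hence unconditionally by the tower property of expectation.

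The main obstacle I anticipate is verifying that the propagated operator $E''$ is indeed nontrivial. Clifford conjugation preserves nontriviality of Paulis (up to signs, which are irrelevant for detection), but one must check that the trap construction --- in particular the magic-state injection routines, where part of the circuit branches on a measurement outcome --- does not allow $E$ to be silently absorbed into a classically-controlled byproduct. This is exactly why the trap magic-state routines are engineered so that every measurement branch implements the identity operator: the post-injection circuit reduces to a sequence of Cliffords on the data qubits, so a nontrivial $E$ on data or ancilla propagates to a nontrivial Pauli at the measurement stage. Once this is in place, the final-Hadamard averaging above delivers the uniform bound $p_{\text{det}}\geq 1/2$, independently of the weight of $E$, its temporal position, and the other random compilation choices.
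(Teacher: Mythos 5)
Your proposal is correct and rests on the same mechanism as the paper's proof: the uniformly random bit $t$ selecting the initial/final Hadamard layers means the propagated Pauli must fail to stabilise at least one of $\ket{0^n}$ and $\ket{+^n}$, since a nontrivial Pauli cannot lie in both $\{I,Z\}^{\otimes n}$ and $\{I,X\}^{\otimes n}$. The paper organises this as a case analysis over error locations (state preparation, Hadamard layers, $\mathcal{W}$- and $\mathcal{J}$-layers, measurement), commuting each noise channel to sit between the two $\mathcal{H}^t$ layers, whereas you propagate forward to the final layer and invoke the complementarity argument once; the two presentations are equivalent in substance.
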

This result is derived in Appendix \ref{trap_detect_errors}.

Cancellation effects arising from multiple errors may be treated using one of the following three lemmas derived in Appendix \ref{bounds_on_prob_of_error_canc}.
In Lemma \ref{neglect_canc}, the probability of error cancellation is neglected.
In Lemma \ref{only_J_canc_bound}, the probability of error cancellation from errors affecting multi-qubit Clifford gates and gates performed using magic states is bounded, and any other possible error cancellation is neglected.
In Lemma \ref{bounding_all_canc}, the probability of error cancellation for any collection of errors is bounded.
Lemmas \ref{neglect_canc} and \ref{only_J_canc_bound} use the original trap construction.
Lemma \ref{bounding_all_canc}, however, uses a modified trap circuit construction, this is described in Appendix \ref{altered_trap_construction}.

\begin{lemma} \label{neglect_canc}
    In the Markovian noise regime where $N_{\mathcal{E}} \cdot q_{max} \ll 1$, the probability of error cancellation is bounded
    \begin{equation} 
    p_{err}p_{canc} \leq O((N_{\mathcal{E}} \cdot q_{max})^2),
    \end{equation}    
    where $N_{\mathcal{E}}$ is the number of logical noise channels affecting the circuit, and $q_{max}$ is the highest total channel error rate of the noise channels. 
    In this regime, we use $\beta=0$ as a leading-order approximation, with the resulting additive approximation error bounded as stated below.
    If this $\beta$ value is used to compute the TVD upper bound, $\gamma$, the magnitude of the resulting approximation error is bounded by 
    \begin{equation} \label{approx_error_bound1}
    |\epsilon_{\gamma}| \leq O((N_{\mathcal{E}} \cdot q_{max})^2).
    \end{equation}
\end{lemma}

This result is derived in Appendix \ref{neglect_error_canc_}.

\begin{lemma} \label{only_J_canc_bound}
    In the regime where the dominant source of noise is due to the $\mathcal{J}$-type gate layers, the probability of error cancellation is bounded by
    \begin{equation} 
    p_{err}p_{canc} \leq 1/2p_{err} + O\big(N_{\mathcal{J}} \cdot q_{1,\mathcal{J},max} \cdot q_{\mathcal{C},max} \big),
    \end{equation}  
    where the noise affecting the $\mathcal{J}$-type gate layers can be non-Markovian, and the rest of the noise is assumed to be Markovian.
    Here $N_{\mathcal{J}}$ denotes the number of logical noise channels acting on $\mathcal{J}$-type gate layers,
    $q_{1,\mathcal{J},max}$ the highest total channel error rate among these, 
    $N_{\mathcal{C}}$ the number of logical noise channels acting on single-qubit Clifford gate layers, state preparation, and measurement,
    and $q_{\mathcal{C},max}$ the highest total channel error rate among these.
    In this noise regime, we use $\beta=1/2$ as a leading-order approximation, with the resulting additive approximation error bounded as stated below.
    If this $\beta$ value is used to compute the TVD upper bound, $\gamma$, the magnitude of the resulting approximation error is bounded by
    \begin{equation} \label{approx_error_bound2}
    |\epsilon_{\gamma}| \leq O\big(N_{\mathcal{J}} \cdot q_{1,\mathcal{J},max} \cdot q_{\mathcal{C},max} \big).
    \end{equation}
\end{lemma}

This result is derived in Appendix \ref{J_type_cancellation_section}.

The following lemma uses a slightly different trap circuit construction, referred to as \textit{the modified trap circuit construction}, described in Appendix \ref{altered_trap_construction}.
\begin{lemma}\label{bounding_all_canc}
For the modified trap circuit construction, the probability of error cancellation in the trap circuits is upper-bounded by $p_{canc}\leq 7/8$.
\end{lemma}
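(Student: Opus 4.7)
My plan is to classify the mechanisms through which a collection of logical Pauli errors can conspire to leave a trap circuit's deterministic output string unchanged, and then bound each class using the randomisation present in the modified trap construction. After logical randomised compiling (Appendix \ref{log_rand_comp}), every noise channel is stochastic Pauli, so I can condition on a fixed pattern of non-identity Paulis occurring at some subset of noise locations and analyse, over the random choice of trap circuit, the probability that the product of these errors, once propagated to the end of the circuit, lies in the stabiliser group of the trap output state. Since each trap outputs a deterministic computational basis string, this output stabiliser is generated by single-qubit $Z$ operators, modulo the action of the endpoint Hadamard layers $\mathcal{H}$.

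The key technical step is to track how a fixed error is conjugated as it is pushed through the circuit by the random $\{S, S^\dagger, H\}$ gates in the $\mathcal{W}$ layers, by the random endpoint $\mathcal{H}$ layers, and by the additional source of randomness that defines the modified trap construction in Appendix \ref{altered_trap_construction}. Each such source, acting on a qubit where the propagated error is non-trivial, produces an image whose conditional probability of lying in the trap output stabiliser (that is, being $I$ or $Z$) is at most $1/2$. Because the three sources are sampled independently and act qubit-wise, the probability that the joint propagated product is a stabiliser factorises across them and is bounded by $(1/2)^{3} = 1/8$, giving $p_{canc} \leq 1 - 1/8 = 7/8$. The enhancement over the $\beta = 1/2$ bound of Lemma \ref{only_J_canc_bound} comes precisely from the additional independent randomisation introduced in the modified construction, which closes the remaining cancellation pathways that were not controlled there.

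The main obstacle is that error cancellation is not driven by a single error but by arbitrary coordinated collections: errors at different qubits, layers, and gate types can combine so that their joint propagated image is a stabiliser even when no individual image is. I would address this by showing that the sandwiching gates act independently on each error's causal worldline through the circuit, so that the independence exploited in the factorisation above is preserved when we take a union (or product) over arbitrary error collections, rather than only over single-error events as in Lemma \ref{single_detection_lemma}. A secondary check, which I expect to be the most delicate part of the argument, is verifying that the new randomness introduced by the modified construction is genuinely independent of the randomness already present in the $\mathcal{W}$ and $\mathcal{H}$ layers across all possible error placements; this should follow from the sampling definition in Appendix \ref{altered_trap_construction}, but it is where the proof needs to be handled carefully to rule out residual correlations that could in principle boost $p_{canc}$ above $7/8$.
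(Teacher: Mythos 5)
There is a genuine gap, and it sits at the heart of your probabilistic argument. You claim that each of three independent randomisation sources leaves the propagated error in the output stabiliser with conditional probability at most $1/2$, that these factorise to give probability at most $(1/2)^3 = 1/8$ that the joint product is a stabiliser, and then conclude $p_{canc}\leq 1-1/8 = 7/8$. This is internally inconsistent: if the probability of the product lying in the stabiliser were really bounded by $1/8$, you would have proved $p_{canc}\leq 1/8$, not $7/8$. The factorisation claim itself is also unjustified --- the event ``the final propagated image is a stabiliser'' is a single event about the final image and does not decompose as an intersection of three independent events, one per randomisation source, so the probabilities do not multiply. The paper's $7/8$ arises from the \emph{opposite} direction: in the worst case (a state-preparation error paired with a measurement error, both necessarily $X$-type), the only obstruction to cancellation is a rotation gate in some intervening $\mathcal{J}$-layer, and the randomisation is effective only when three binary conditions simultaneously hold --- the rotation is by $\pi/2$ rather than $0$, it stabilises the state so that no adjoint correction at the end of the circuit undoes the randomisation, and its (randomised) basis anticommutes with the error. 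Each holds with probability $1/2$, so the randomisation succeeds with probability at least $1/2^3 = 1/8$, and hence cancellation occurs with probability at most $7/8$. The $(1/2)^3$ is a lower bound on the probability of the \emph{good} event, not an upper bound on the bad one.

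Beyond this, your proposal omits the structure the paper actually relies on: an explicit case analysis over error-pair placements (prep/measurement with prep/measurement; prep/measurement with gate errors; gate errors straddling a $\mathcal{J}$-layer; gate errors separated by $\mathcal{W}$-layers; errors involving the appended adjoint correction layers), each yielding a bound of $1/2$, $1/4$, or $7/8$, with $7/8$ as the maximum; and a separate inductive argument showing that if any two errors cancel with probability at most $c$, then any collection of errors cancels with probability at most $c$ (combine two, which cancel with probability $\leq c$ or merge into a new Pauli with probability $\geq 1-c$, and recurse). Your appeal to independence along ``causal worldlines'' does not substitute for either step, and without identifying the worst-case pair you cannot know that $7/8$, rather than some other constant, is the right bound.
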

This result is derived in Appendix \ref{altered_trap_construction}.
It can be shown that the same error detection bound stated in Lemma \ref{single_detection_lemma} also applies for the modified trap circuits; i.e. $p_{det} \geq 1/2$.


Combining Lemma~\ref{single_detection_lemma}, which gives $p_{\mathrm{det}}\geq 1/2$,
with the cancellation results, we use $\beta=0$ for Lemma~\ref{neglect_canc}
and $\beta=1/2$ for Lemma~\ref{only_J_canc_bound}, in each case neglecting the
corresponding additive approximation term stated above.
For Lemma~\ref{bounding_all_canc}, $\beta=7/8$ is an exact uniform conditional
upper bound. Thus, let
\begin{equation}
\beta = 
\begin{cases}
0 & \text{if Lemma \ref{neglect_canc} is used} \\
1/2 & \text{if Lemma \ref{only_J_canc_bound} is used} \\
7/8 & \text{if Lemma \ref{bounding_all_canc} is used},
\end{cases}
\end{equation}
the TVD is bounded by
\begin{equation} \label{TVD_bound_with_beta}
\delta(\mathcal{D}_{exp},\mathcal{D}_{ideal}) \leq \frac{2  p_{inc}}{1 - \beta}.
\end{equation}
The soundness of the protocol comes from upper bounding the false positive rate (i.e. $1 - (1 - p_{canc})p_{det}$) with a soundness parameter $\epsilon_s$.
The different values of $\beta$ and Lemma~\ref{single_detection_lemma} give the
corresponding leading-order soundness parameters
\begin{equation} \label{false_positive_bounds}
\epsilon_s = 
\begin{cases}
1/2 & \text{if Lemma \ref{neglect_canc} is used} \\
3/4 & \text{if Lemma \ref{only_J_canc_bound} is used} \\
15/16 & \text{if Lemma \ref{bounding_all_canc} is used}.
\end{cases}
\end{equation}
Let $\gamma:=
\min\left\{
1,\frac{2(\widehat{p}_{\mathrm{inc}}+\epsilon +\frac{1}{M+1})}{1-\beta}
\right\}$, where $\hat{p}_{inc}=N_{inc}/M$, the bound
\begin{equation}\label{to_ref_eqfin} \frac{1}{2}\sum_{s \in \{0,1\}^n} |p_{exp}(s) - p_{ideal}(s)| \leq \gamma,
\end{equation}
then applies with  
confidence $(1-\alpha)$, where the number of trap circuits satisfies the inequality $M \geq \frac{1}{2\epsilon^2}\log(\frac{2}{\alpha})$.
That is, at least $\lceil\frac{1}{2\epsilon^2}\log(\frac{2}{\alpha})\rceil$ traps are needed to ensure a confidence of at least $(1-\alpha)$ that the $\gamma$ bound is valid.
When Lemma~\ref{neglect_canc} or Lemma~\ref{only_J_canc_bound} is used, the right-hand side
of Eq.~\ref{to_ref_eqfin}, and of the subsequent state and observable
bounds, is understood up to the additive approximation
error $|\epsilon_{\gamma}|$ stated in Eq.~\ref{approx_error_bound1} or
Eq.~\ref{approx_error_bound2}, respectively. No such approximation is required
when Lemma~6 is used.

The accreditation bound also bounds the error in the expectation
value of an observable. Define the experimental target output state,
averaged over the uniformly random target execution position, as
\begin{equation}
\label{eq:position_averaged_output_state}
    \rho_{\mathrm{exp}}
    :=
    \mathbb{E}_{R}\!\left[
        \rho_{\mathrm{out}}^{(R)}
    \right].
\end{equation}
The ideal target output state, denoted by
$\rho_{\mathrm{out,id}}$, is independent of $R$ because the
random circuit ordering changes only the physical execution
position of the target circuit and not its ideal logical
operation.
By convexity of the trace distance,
Eq.~\eqref{eq:positionwise_trace_distance_bound}, and the same
averaging, detection, and cancellation bounds used to obtain
Eq.~\eqref{TVD_bound_with_beta}, we have
\begin{align}
D\!\left(
    \rho_{\mathrm{exp}},
    \rho_{\mathrm{out,id}}
\right)
&\leq
\mathbb{E}_{R}\!\left[
    D\!\left(
        \rho_{\mathrm{out}}^{(R)},
        \rho_{\mathrm{out,id}}
    \right)
\right]
\nonumber\\
&\leq
\mathbb{E}_{R}\!\left[
    p_{\mathrm{err}}^{(R)}
\right]
\nonumber\\
&\leq
\frac{
    2\overline{p}_{\mathrm{inc}}
}{
    1-\beta
}.
\label{eq:trace_distance_bound_with_beta}
\end{align}
Combining the $\bar{p}_{inc}$ bound and
the definition of $\gamma$ with
Eq.~\eqref{eq:trace_distance_bound_with_beta} gives
\begin{equation}
\label{eq:accredited_trace_distance_bound}
D\!\left(
    \rho_{\mathrm{exp}},
    \rho_{\mathrm{out,id}}
\right)
\leq
\gamma
\end{equation}
with confidence at least $1-\alpha$.
Therefore, for any Hermitian observable $O$, H\"older's
inequality gives
\begin{align}
\left|
    \langle O\rangle_{\rho_{\mathrm{exp}}}
    -
    \langle O\rangle_{\rho_{\mathrm{out,id}}}
\right|
&=
\left|
    \operatorname{Tr}\!\left[
        O\left(
            \rho_{\mathrm{exp}}
            -
            \rho_{\mathrm{out,id}}
        \right)
    \right]
\right|
\nonumber\\
&\leq
\|O\|_{\infty}
\left\|
    \rho_{\mathrm{exp}}
    -
    \rho_{\mathrm{out,id}}
\right\|_{1}
\nonumber\\
&=
2\|O\|_{\infty}
D\!\left(
    \rho_{\mathrm{exp}},
    \rho_{\mathrm{out,id}}
\right)
\nonumber\\
&\leq
2\gamma\|O\|_{\infty},
\label{eq:observable_expectation_bound}
\end{align}
where
$\langle O\rangle_{\rho}:=\operatorname{Tr}(O\rho)$ and
$\|O\|_{\infty}$ denotes the operator norm. For a Pauli
observable, $\|O\|_{\infty}=1$, and the expectation-value error is
therefore upper-bounded by $2\gamma$.

\section{Lower-bound for the probability that trap circuits detect any single error} \label{trap_detect_errors}

We now prove Lemma \ref{single_detection_lemma}, that states:

\textit{
Any single logical Pauli error of arbitrary weight occurring at any single time-step during a randomly generated trap circuit is detected with probability $p_{det} \geq 1/2$.
}
\vspace{0.2em}

\begin{proof}
Let $\{{\mathcal{C}}_T^{(i)}\}_i$ denote the set of all possible trap circuits for a given target circuit.
The $i$-th trap circuit affected by logical-qubit-level circuit noise, $\tilde{\mathcal{C}}_T^{(i)}$, can be decomposed into logical gate layers and noise channels, and expressed in the form
\begin{equation}
\begin{split}
\tilde{\mathcal{C}}_T^{(i)} 
&= \mathcal{E}_{D+1}\cdot \prod_{j=1}^D ( \mathcal{E}_{j} \cdot\mathcal{L}_j^{(i)}) \cdot \mathcal{E}_{0},\\
\end{split}
\end{equation}
where $\mathcal{E}_{j}$ denotes the logical Pauli noise channel associated with logical gate layer $\mathcal{L}_j^{(i)}$, $\mathcal{E}_{0}$ denotes logical state preparation noise and $\mathcal{E}_{D+1}$ logical measurement noise.
Explicitly labelling gate layers where logical Pauli twirling operations are performed, the sequence of logical gate layers of an ideal trap circuit may be written
\begin{equation}
\begin{split}
{\mathcal{C}}^{(i)}_T &= \mathcal{P}_{D+1}^{(1)} \cdot \prod_{j=1}^{D}  \big(\mathcal{P}_{j}^{(2)}  \cdot \mathcal{L}_{j}^{(i)} \cdot  \mathcal{P}_{j}^{(1)}\big) \cdot \mathcal{P}_{0}^{(2)}. \\
\end{split}
\end{equation}
The initial application of a Pauli twirling gate layer is indicated by the superscript shown in $\mathcal{P}^{(1)}$, while the subsequent Pauli gate layer that cancels it is denoted by $\mathcal{P}^{(2)}$.
As shown in Fig. \ref{fig:target_and_trap_circuit_structure}, each trap circuit is generated by randomly choosing whether to apply a layer of Hadamard gates at the beginning and end of the circuit, 
and randomly choosing the composition of the sandwiching gate layers.
For the $i$-th trap circuit, let the $j$-th sandwiching gate layer be denoted $\mathcal{W}_j^{(i)}\in \{S,S^{\dagger},H \}^{\otimes n}$.
Let the random logical gate layers at the beginning and end of the trap circuits be denoted by $\mathcal{H}^t = (H{}^t)^{\otimes n}$ where $t\in\{0,1\}$. 
Let $R_{\mathcal{W},l}$ denote a specific instance of sandwiching gate layers for a particular trap circuit instance, and let $\{R_{\mathcal{W},l}\}_l$ denote the set of all possible sandwiching gate layer instances.
Let $\mathcal{J}_j$ denote the $j$-th logical gate layer that can contain gates implemented using trap magic states, as well as multi-qubit Clifford gates.
As $t\in\{0,1\}$ and $R_{\mathcal{W},l} \in \{R_{\mathcal{W},l}\}_l$, there are then $|\{{\mathcal{C}}_T^{(i)}\}_i| = 2 \cdot |\{R_{\mathcal{W},l}\}_l|$ unique possible trap circuits.
The $i$-th trap circuit from the set $\{{\mathcal{C}}_T^{(i)}\}_i$, including randomly chosen Hadamard gate layers, sandwiching layers, and Pauli twirling layers, may be expanded as

\begin{widetext}
\begin{align}
{\mathcal{C}}^{(i)}_T &=  \mathcal{P}_{3K+3}^{(1)} \cdot \mathcal{P}_{3K+2}^{(2)} \cdot \mathcal{H}^t \cdot \mathcal{P}_{3K+2}^{(1)}  \cdot \prod_{k=1}^{K}   \big( \mathcal{P}_{3k+1}^{(2)} \cdot \mathcal{W}_{3k+1} \cdot \mathcal{P}_{3k+1}^{(1)} \cdot \mathcal{P}_{3k}^{(2)} \cdot \mathcal{J}_{3k} \cdot \mathcal{P}_{3k}^{(1)} 
\cdot \mathcal{P}_{3k-1}^{(2)} \cdot \mathcal{W}_{3k-1} \cdot \mathcal{P}_{3k-1}^{(1)}\big)\\
& \hspace{3em}\cdot  \mathcal{P}_{1}^{(2)} \cdot  \mathcal{H}^t \cdot \mathcal{P}_{1}^{(1)} \cdot \mathcal{P}_{0}^{(2)}. 
\end{align}
\end{widetext}
In practice, during circuit compilation any adjacent logical Pauli gate layers will be merged together. 

We now ignore Pauli twirling gate layers, as these do not change the circuit logic but rather serve to transform general logical noise into stochastic logical Pauli noise (see Appendix \ref{log_rand_comp}).
However, the noise channels associated with the Pauli gate layers can be combined with the noise channels associated with the previous gate layers, and so are included in the following analysis.
The previous expression then may be written as
\begin{equation}
\begin{split}
{\mathcal{C}}_T^{(i)}{}' &=  \mathcal{H}^t \cdot \prod_{k=1}^{K}  (  \mathcal{W}_{3k+1}
\cdot  \mathcal{J}_{3k} \cdot \mathcal{W}_{3k-1} ) \cdot \mathcal{H}^t. \\
\end{split}
\end{equation}
Each distinct set of randomising gates defines a unique trap circuit.
As $t\in\{0,1\}$ and $R_{\mathcal{W},l} \in \{R_{\mathcal{W},l}\}_l$ are both chosen with uniform probability over their respective sets, the probability of measuring $s=0^n$ in the absence of logical noise, summing over all possible trap circuits, is

\begin{widetext}
\begin{align}
\Pr(s=0^n) &= \frac{1}{2\cdot |\{R_{\mathcal{W},l}\}_l|}\sum_{t\in\{0,1\}} \sum_{R_{\mathcal{W},l} \in \{R_{\mathcal{W},l}\}_l} \bra{0^{n}}  \mathcal{H}^t \bigcirc_{k=1}^{K}  (  \mathcal{W}_{3k+1}  \circ  \mathcal{J}_{3k} \circ \mathcal{W}_{3k-1} ) \circ \mathcal{H}^t \circ (\ket{0^n}\bra{0^n})\ket{0^{n}} \\
& = 1.
\end{align}
\end{widetext}
The second line follows from the fact that all of the logical gate layers either cancel or stabilise the initial state.
The $\mathcal{J}$-type gate layers can include identity operations (from the use of trap magic states instead of target magic states when performing state injection), as well as C$Z$.
This means that $\mathcal{W}_{3k+1} \mathcal{J}_{3k}  \mathcal{W}_{3k-1} \mathcal{H}^{t}\ket{0^n} = \mathcal{H}^{t}\ket{0^n}$, for all $k \in\{1,\ldots, K\}$ and $t \in\{0,1\}$. 
The result is then that no error is registered as having occurred during any ideal trap computation with probability 1.

We now consider noise channels acting at different time steps of the computation and establish a lower bound on the detection probability of all possible single time-step logical Pauli errors across the set of possible trap circuits.

With the inclusion of a logical state-preparation Pauli noise channel, $\mathcal{E}_0$, the probability that a randomly generated trap circuit fails to detect a Pauli error due to this channel is
\begin{widetext}
\begin{align}
\Pr(s=0^n| \mathcal{E}_0) &= \frac{1}{2\cdot|\{R_{\mathcal{W},l}\}_l|}\sum_{t\in\{0,1\}} \sum_{R_{\mathcal{W},l} \in \{R_{\mathcal{W},l}\}_l}\bra{0^{n}} \mathcal{H}^t  \bigcirc_{k=1}^{K}   (  \mathcal{W}_{3k+1} \circ \mathcal{J}_{3k}  \circ \mathcal{W}_{3k-1} )  \circ  \mathcal{H}^t \circ \mathcal{E}_0\circ (\ket{0^n}\bra{0^n}) \ket{0^{n}}. 
\end{align}
\end{widetext}
From the set of logical Pauli errors induced by the channel $ \mathcal{E}_0$, only those that have a non-trivial $X$ or $Y$ component will not stabilise the initial quantum state. 
When a logical Pauli error of this type occurs it follows that $p(s=0^n| P_0)= \bra{0^{n}}P_0 \ket{0^{n}}=0$, where $P_0$ is a Pauli error with a non-trivial $X$ or $Y$ component.
This means that any single Pauli error that non-trivially changes the prepared computational state is detected with probability 1.
Similar arguments can be made regarding logical measurement Pauli noise, as such errors affecting the logical measurement operation at the end of the circuit may also be detected with probability 1.

We now consider logical noise affecting the first Hadamard gate layer.
The probability a randomly generated trap circuit not detecting a logical Pauli error due to this type of noise is
\begin{widetext}
\begin{align}
\Pr(s=0^n| \mathcal{E}_1) &= \frac{1}{2\cdot|\{R_{\mathcal{W},l}\}_l|}\sum_{t\in\{0,1\}} \sum_{R_{\mathcal{W},l} \in \{R_{\mathcal{W},l}\}_l} \bra{0^{n}}  \mathcal{H}^t \bigcirc_{k=1}^{L}   (  \mathcal{W}_{3k+1} \circ \mathcal{J}_{3k}  \circ \mathcal{W}_{3k-1} ) \circ \mathcal{E}_1  \circ  \mathcal{H}^t \circ (\ket{0^n}\bra{0^n}) \ket{0^{n}} \\
&= \frac{1}{2\cdot|\{R_{\mathcal{W},l}\}_l|}\sum_{t\in\{0,1\}} \sum_{R_{\mathcal{W},l} \in \{R_{\mathcal{W},l}\}_l} \bra{0^{n}}  \mathcal{H}^t \circ \mathcal{E}_1  \circ  \mathcal{H}^t \circ (\ket{0^n}\bra{0^n}) \ket{0^{n}} \\
&\leq\frac{1}{2}.
\end{align}
\end{widetext}
The inequality of the final line is obtained as follows.
Since the parameter $t$ is either $0$ or $1$, each with probability $1/2$, if a Pauli error $P_1$ occurs with a non-trivial $Y$ component, it is detected with probability $1$.
If $P_1$ has a non-trivial $X$ component, but a trivial $Y$ and $Z$ component, it is detected with probability $1/2$.
Likewise, if $P_1$ has a non-trivial $Z$ component, but a trivial $X$ and $Y$ component, it is detected with probability $1/2$.
This then means that any Pauli error is detected with probability at least $1/2$.
Similar arguments can be applied to Pauli noise affecting the final logical Hadamard gate layer.

We now consider noise affecting one of the $\mathcal{W}$-type gate layers.
The probability that noise affects the gate layer $\mathcal{W}_{3M-1}$ and is not detected is
\begin{widetext}
\begin{align}
\Pr(s=0^n | \mathcal{E}_{3M-1}) 
&= \frac{1}{2\cdot|\{R_{\mathcal{W},l}\}_l|}\sum_{t\in\{0,1\}} \sum_{R_{\mathcal{W},l} \in \{R_{\mathcal{W},l}\}_l} \bra{0^{n}}   \mathcal{H}^t \bigcirc_{k=M+1}^{L}   (  \mathcal{W}_{3k+1} \circ \mathcal{J}_{3k}  \circ \mathcal{W}_{3k-1} )\\
& \hspace{2em} \circ (  \mathcal{W}_{3M+1} \circ \mathcal{J}_{3M} \circ \mathcal{E}_{3M-1}  \circ \mathcal{W}_{3M-1} ) \bigcirc_{k=1}^{M-1}   (  \mathcal{W}_{3k+1} \circ \mathcal{J}_{3k}  \circ \mathcal{W}_{3k-1} )\\
& \hspace{2em}\circ  \mathcal{H}^t \circ (\ket{0^n}\bra{0^n}) \ket{0^{n}} \\
&= \frac{1}{2\cdot|\{R_{\mathcal{W},l}\}_l|}\sum_{t\in\{0,1\}} \sum_{R_{\mathcal{W},l} \in \{R_{\mathcal{W},l}\}_l} \bra{0^{n}}   \mathcal{H}^t  \bigcirc_{k=M}^{L}   (  \mathcal{W}_{3k+1} \circ \mathcal{J}_{3k}  \circ \mathcal{W}_{3k-1} )\\
& \hspace{2em} \circ \mathcal{E}_{3M-1}{}'   \bigcirc_{k=1}^{M-1}   (  \mathcal{W}_{3k+1} \circ \mathcal{J}_{3k}  \circ \mathcal{W}_{3k-1} )\circ  \mathcal{H}^t \circ (\ket{0^n}\bra{0^n}) \ket{0^{n}} \\
&= \frac{1}{2\cdot|\{R_{\mathcal{W},l}\}_l|}\sum_{t\in\{0,1\}} \sum_{R_{\mathcal{W},l} \in \{R_{\mathcal{W},l}\}_l}\bra{0^{n}}   \mathcal{H}^t  \circ \mathcal{E}_{3M-1}{}' \circ  \mathcal{H}^t \circ (\ket{0^n}\bra{0^n}) \ket{0^{n}} \\
&\leq\frac{1}{2}.
\end{align}
\end{widetext}
To get the second line, we have that since $\mathcal{W}_{3M-1}\in \{S,S^{\dagger},H \}^{\otimes n}$, the following equality holds: $ \mathcal{E}_{3M-1}\mathcal{W}_{3M-1} = \mathcal{W}_{3M-1}\mathcal{E}_{3M-1}{}'$, where $\mathcal{E}_{3M-1}{}'$ is another Pauli channel.
The inequality of the final line is obtained through similar reasoning as was used for the case of noise affecting the logical Hadamard gate layers.
So that any error occurring as a result of the noise channel $\mathcal{E}_{3M-1}{}'$ is detected with probability at least $1/2$.
Similar arguments can be made regarding noise affecting the $\mathcal{J}$-type gate layers.
Therefore, any single logical Pauli error occurring at any single time-step during the execution of a trap circuit is detected with probability $p_{det} \geq 1/2.$ 
\end{proof}

\section{Upper bounds for the probability of error cancellation in trap circuits} \label{bounds_on_prob_of_error_canc}

We now bound the probability that errors act on a randomly generated trap circuit but cancel, thereby becoming undetectable.
Together with the bound on the trap circuit detection probability for individual errors derived in the previous section, this provides an overall bound on the \textit{soundness} of the framework.
That is, a lower bound on the detection probability for any collection of errors affecting a randomly generated trap circuit; see eqn. \ref{TVD_bound_with_beta}.

We consider three different scenarios, deriving upper bounds on error cancellation probability in the trap circuits, $p_{canc}$, for each. 
The three scenarios considered are:
\begin{enumerate}
    \item The probability of logical error cancellation is sufficiently small that cancellation effects can be reasonably neglected.
    \item The logical error rates for gates implemented via magic states are significantly larger than those for performing Clifford gates, state preparation, or measurement. 
    The dominant cancellation effects are then from errors affecting gates performed using magic states, while other cancellation effects can be reasonably neglected.
    \item No cancellation effects are neglected.
\end{enumerate}
Scenario (1) is appropriate when it is assumed that the noise channels affecting circuits are Markovian, with no time-correlated noise effects.
Scenario (2) is appropriate when it is assumed that the noise from $\mathcal{J}$-type gate layers, i.e. the noise from multi-qubit Clifford gates and gates requiring magic states, is non-Markovian, while noise from single-qubit Clifford gates is Markovian.
Or, alternatively, where the dominant logical noise comes from $\mathcal{J}$-type gate layers, and so the cancellation effects of other lesser sources of noise can reasonably be ignored.
In Scenario (3), no assumptions need to be made about the Markovianity of the noise.
Thus, Scenario (1) considers complete Markovianity, Scenario (2) partial Markovianity and part non-Markovianity, and Scenario (3) potentially complete non-Markovianity.

The Markovianity of the logical noise is relevant to cancellation effects because Pauli twirled non-Markovian noise results in classically correlated stochastic Pauli channels \cite{liu_non-markovian_2024, winick_concepts_2022}.
Time-correlated logical Pauli errors may cause systematic cancellation in trap circuits, and must therefore be accounted for to ensure an accurate TVD bound.
In the following analysis, we show that for Scenario (2) the randomisation of the trap circuits means that the probability of errors affecting $\mathcal{J}$-type layers and cancelling can be upper-bounded.
For Scenario (3), we show that, using a modified version of the trap circuit construction, the probability of any errors affecting a trap circuit and cancelling is upper-bounded by a constant.

All of the error cancellation probability upper bounds derived are constant.
In the Appendix \ref{neglect_error_canc_} analysis, Scenario (1) provides the smallest upper bound on the probability of error cancellation, but this relies on the assumption that errors do not cancel in the trap circuits.
The next smallest upper bound value is derived in Appendix \ref{J_type_cancellation_section} for Scenario (2), where the probability of error cancellation due to errors affecting gates performed using magic states as well as C$Z$ gates is bounded, and the assumption is required that there are no cancellation effects due to errors from other components of the trap circuit.
No assumptions are made regarding error cancellation for the Scenario (3) analysis presented in Appendix \ref{altered_trap_construction}; however, the price for this is a slightly larger upper bound.
The logical randomised compiling described in Appendix \ref{log_rand_comp} transforms general logical noise to effective logical stochastic Pauli noise. 
In the following error cancellation analysis, the effective logical stochastic Pauli channels resulting from twirling are propagated together through the trap circuits to form a single Pauli channel.
A bound on the trap circuit error cancellation probability may be derived by analysing the components of the error channels that cancel when the channels are combined.

\subsection{Neglecting error cancellation} \label{neglect_error_canc_}

In the case of Markovian noise, if the total error rate of each logical noise channel is significantly smaller than one, the probability of error cancellation is negligible.
The expected number of logical errors when running a circuit, $N_e$, is bounded $N_e \leq N_{\mathcal{E}}q_{max}$, where $N_{\mathcal{E}}$ is the number of logical noise channels affecting the circuit, and $q_{max}$ is the maximum total error rate among these channels.
Only in the regime where $ N_{\mathcal{E}}q_{max}  \ll 1$ is it possible to achieve a non-trivial computational output.
In this noise regime, it is reasonable to assume that error cancellation occurs with sufficiently low probability that it can be neglected in the computation of the TVD bound.



\begin{proof}[Proof of Lemma~\ref{neglect_canc}]

If noise is Markovian and there are $N_{\mathcal{E}}$ stochastic logical Pauli channels affecting a trap circuit, the set of total error rates for these channels is $\{q_{tot}(j)\}_{j=1}^{N_{\mathcal{E}}}$,
and the highest total error rate is $q_{max}:= \max(\{q_{tot}(j)\}_{j=1}^{N_{\mathcal{E}}})$.
The probability of error cancellation is upper-bounded, such that
\begin{equation} 
p_{err}p_{canc} \leq \sum_{j=2}^{N_{\mathcal{E}}}{\binom{N_{\mathcal{E}}}{j}} \cdot q_{max}^j .
\end{equation}
And as ${\binom{N_{\mathcal{E}}}{j}} \leq N_{\mathcal{E}}^j$, and $1 - q_{max} \leq 1$, it follows that
\begin{equation} 
p_{err}p_{canc} \leq \sum_{j=2}^{N_{\mathcal{E}}}(N_{\mathcal{E}} q_{max})^j.
\end{equation}
Now, as $N_{\mathcal{E}}\in \mathbb{Z}^+$ and $N_{\mathcal{E}}\gg 1$, the condition that $N_{\mathcal{E}} q_{max} \ll 1$ implies $q_{max} \ll 1$.
And so the leading order contribution to the previous sum is from the cancellation of pairs of errors, corresponding to the $j=2$ term, with the probability that two errors cancel upper-bounded by $(N_{\mathcal{E}} q_{max})^2$.
This means that
\begin{equation} \label{fdete}
p_{err}p_{canc} \leq O((N_{\mathcal{E}} q_{max})^2).
\end{equation}
This bound will generally be loose, since not every
multiple-error configuration cancels. Using Eq.~\ref{p_inc_bound},
$p_{\mathrm{det}}\geq 1/2$, and Eq.~\ref{fdete}, we obtain
\begin{align}
p_{\mathrm{inc}}
&\geq
p_{\mathrm{err}}
\bigl(1-p_{\mathrm{canc}}\bigr)
p_{\mathrm{det}}
\nonumber\\
&=
\bigl(
p_{\mathrm{err}}
-
p_{\mathrm{err}}p_{\mathrm{canc}}
\bigr)
p_{\mathrm{det}}
\nonumber\\
&\geq
\frac{1}{2}
\left[
p_{\mathrm{err}}
-
O\!\left((N_E q_{\max})^2\right)
\right].
\end{align}
Therefore,
\begin{equation}
\label{<retain-the-current-E6-label>}
p_{\mathrm{err}}
\leq
2p_{\mathrm{inc}}
+
O\!\left((N_E q_{\max})^2\right).
\end{equation}
Thus, using $\beta=0$ introduces an additive approximation
error satisfying
\begin{equation}
|\epsilon_{\gamma}|
\leq
O\!\left((N_E q_{\max})^2\right).
\end{equation}
\end{proof}

\subsection{\texorpdfstring{Bounding the probability of error cancellation for errors affecting solely  $\mathcal{J}$-type gate layers in a trap circuit}{Bounding the probability of error cancellation for errors affecting J-type gate layers in a trap circuit}}

\label{J_type_cancellation_section}

In the target and trap circuits, the $\mathcal{J}$-type gate layers are composed of gates performed by the consumption of magic states and C$Z$ gates.
In early fault-tolerant computation, gates performed using imperfectly purified magic states are likely to have considerably higher logical error rates than those of the rest of the circuit operations.
In this scenario, the most significant error cancellation effects come from noise affecting the $\mathcal{J}$-type gate layers.
We now show that the probability of error cancellation due to errors affecting $\mathcal{J}$-type gate layers in the trap circuits can be upper-bounded by a constant.
In the following analysis, the noise affecting $\mathcal{J}$-type gate layers can be Markovian or non-Markovian, and the rest of the noise is assumed to be Markovian.

The bound on the error cancellation probability from noise affecting $\mathcal{J}$-type gate layers is formalised in the following lemma:
\begin{lemma}\label{cancellation_lemma2}
The probability that errors affecting $\mathcal{J}$-type gate layers in a randomly generated trap circuit cancel is upper-bounded $p_{canc}\leq 1/2$.
\end{lemma}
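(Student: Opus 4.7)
The plan is to bound the probability that a collection of Pauli errors affecting only $\mathcal{J}$-type layers in a uniformly random trap circuit cancels after propagation to the measurement. After logical randomised compiling, all logical noise is stochastic Pauli, so I may condition on the inserted error realisation and take probabilities over the trap circuit construction alone: the uniform bit $t \in \{0,1\}$ controlling the outer Hadamard layers, and the independent uniform per-qubit, per-layer draws from $\{S,S^\dagger,H\}$ in every $\mathcal{W}$ layer.

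First I would propagate every inserted Pauli error forward through the intermediate Clifford gates (subsequent $\mathcal{J}$, $\mathcal{W}$, and Pauli-twirl layers) to a common reference point immediately before the final Hadamard layer, combining them into a single Pauli $P$. Cancellation — the trap returning $0^n$ despite errors — is equivalent to $P$ preserving the pre-measurement state, i.e., $P \in \{I,Z\}^{\otimes n}$ when $t = 0$ and $P \in \{I,X\}^{\otimes n}$ when $t = 1$, both up to a global phase. The key observation is that these two subgroups intersect only in $\{\pm I\}$, since any non-identity qubit factor would have to be simultaneously $Z$ and $X$. Hence for any fixed $P$ with $P \ne \pm I$, averaging over the uniform $t$ contributes a factor of at least $1/2$ against cancellation, delivering the bound on the non-trivial branch.

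The main obstacle is controlling the residual event $P = \pm I$, in which the inserted $\mathcal{J}$-layer errors exactly cancel under the random Clifford conjugations before the $t$-averaging can act. I would handle this by peeling off the latest error $E_\ell$ inserted in some $\mathcal{J}_{3k^*}$ layer and observing that $P = \pm I$ requires the product of all earlier propagated errors to equal $\mp E_\ell$ at that layer, a relation that must hold after propagation through at least one random $\mathcal{W}$ layer. Since conjugation by a uniformly random element of $\{S,S^\dagger,H\}$ maps any fixed non-identity single-qubit Pauli to any given non-identity Pauli with probability at most $2/3$, and the required equality must hold on every qubit in the support of $E_\ell$, this residual probability is suppressed enough to fit within the $1/2$ budget left by the Hadamard averaging, and one can iterate the peeling on the reduced set of errors if the worst-case qubit pattern requires it. Making this spreading argument tight — in particular when the per-qubit Paulis are oriented to be the least-randomised by $\{S,S^\dagger,H\}$, such as $Z$, which is fixed by $S$ and $S^\dagger$ and therefore returns with probability $2/3$ — is the principal technical hurdle, and parallels in structure the single-error analysis of Lemma \ref{single_detection_lemma} with the Hadamard-averaging step taking the role played there by the random $t$ and per-qubit $\mathcal{W}$ gates.
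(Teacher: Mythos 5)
There is a genuine gap here, and it comes from bounding the wrong event. In the paper's decomposition $p_{inc}^{(i)} \geq p_{err}^{(i)}(1-p_{canc})\,p_{det}$, the quantity $p_{canc}$ is specifically the probability that the inserted Pauli errors multiply, after propagation, to the identity; the probability that a surviving \emph{non-identity} residual $P$ nevertheless stabilises the measured state is the separate detection question already handled by Lemma \ref{single_detection_lemma}. Your main mechanism --- averaging over the outer Hadamard bit $t$ and noting that $\{I,Z\}^{\otimes n}$ and $\{I,X\}^{\otimes n}$ intersect only in the identity (up to phase) --- is precisely the argument for $p_{det}\geq 1/2$, and it cannot touch the event $P=\pm I$ at all. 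The entire content of this lemma therefore lives in your ``residual'' case, which you correctly identify as the principal hurdle but do not close: peeling off the last error $E_\ell$ and using that a uniformly random element of $\{S,S^{\dagger},H\}$ sends a fixed non-identity Pauli to a prescribed one with probability at most $2/3$ gives only $\Pr[P=\pm I]\leq(2/3)^{|\mathrm{supp}(E_\ell)|}$, which for a weight-one $E_\ell$ (e.g.\ a $Z$ error, fixed by both $S$ and $S^{\dagger}$) is $2/3>1/2$. Nor can the Hadamard averaging absorb the shortfall: $\Pr[\text{trap outputs }0^n\mid\text{errors}]\leq \Pr[P=\pm I]+\tfrac{1}{2}\Pr[P\neq \pm I]=\tfrac{1}{2}+\tfrac{1}{2}\Pr[P=\pm I]$, which exceeds $1/2$ whenever cancellation has positive probability, so you would still need $\Pr[P=\pm I]\leq 1/2$ --- i.e.\ the lemma itself --- even to recover the paper's $3/4$ soundness parameter for this scenario.

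The paper's proof works entirely at the level of the cancellation event and uses randomness your peeling leaves on the table. It conjugates \emph{each} of the two errors by its own independent $\mathcal{W}$-sandwich and, crucially, also propagates one error through the intervening $\mathcal{J}$-layer, whose randomly oriented CNOTs further scramble the two-qubit components; in the worst case each propagated error is effectively uniform over a two-element set of Paulis, so the coincidence $P''P'=I$ occurs with probability at most $1/2$ on every single-qubit or CNOT-supported block, and all $a+b$ blocks must coincide simultaneously, giving $(1/2)^{a+b}\leq 1/2$. It then extends from two errors to arbitrarily many by the iterative pairing bound $(1-c)c\leq c$ with $c=1/2$. To repair your argument you would need to replace the single-error peeling by a joint randomisation analysis of this kind (or restrict each $\mathcal{W}$ draw so that every conjugation is uniform over two images), rather than leaning on the Hadamard averaging, which belongs to the detection lemma.
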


\begin{proof}
A trap circuit in which two nearest-neighbour $\mathcal{J}$-type gate layers are affected by errors may be expressed in the form:
\begin{widetext}
\begin{equation}
\begin{split}
\mathcal{H}^{t} \circ (\circ^m_{j=l+2}&\mathcal{W}_{3j} \mathcal{J}_{3j-1} \mathcal{W}_{3j-2}) \circ (\mathcal{W}_{3(l+1)} P_{3(l+1)-1} \mathcal{J}_{3(l+1)-1} \mathcal{W}_{3(l+1)-2}) \circ (\mathcal{W}_{3l} P_{3l-1} \mathcal{J}_{3l-1} \mathcal{W}_{3l-2}) \\
&\circ (\circ^{l-1}_{j=1}\mathcal{W}_{3j} \mathcal{J}_{3j-1} \mathcal{W}_{3j-2}) \circ \mathcal{H}^{t}  ,  \\
\end{split}
\end{equation}
\end{widetext}
where $P_{3(l+1)-1}$ is a logical Pauli error affecting the gate layer  $\mathcal{J}_{3(l+1)-1}$, and $P_{3l-1}$ is a logical Pauli error affecting the gate layer $\mathcal{J}_{3l-1}$.
Let $\mathcal{J}_{3j-1}{}' := \mathcal{W}_{3j} \mathcal{J}_{3j-1} \mathcal{W}_{3j-2}$, where $\mathcal{J}_{3j-1}{}'$ is a logical gate layer of randomly oriented CNOT gates and identity operations.
Using this identity, and the fact that $\mathcal{W}_{3k} =  \mathcal{W}_{3k-2}^{\dagger}$ for $k \in\{1, \ldots, m\}$, we can rewrite the previous expression as:
\begin{widetext}
\begin{equation}
\begin{split}
\mathcal{H}^{t} \circ& (\circ^m_{j=l+2} \mathcal{J}_{3j-1}{}') \circ (\mathcal{W}_{3(l+1)} P_{3(l+1)-1} \mathcal{W}_{3(l+1)-2} \mathcal{J}_{3(l+1)-1}{}') \circ (\mathcal{W}_{3l} P_{3l-1}  \mathcal{W}_{3l-2} \mathcal{J}_{3l-1}{}') \circ (\circ^{l-1}_{j=1} \mathcal{J}_{3j-1}{}') \circ \mathcal{H}^{t}  .  \\
\end{split}
\end{equation}
\end{widetext}
Since conjugation of each Pauli error by $\mathcal{W}$-type gate layers just maps it to another Pauli operator, the previous expression becomes:
\begin{widetext}
\begin{equation}
\begin{split}
\mathcal{H}^{t} \circ& (\circ^m_{j=l+2} \mathcal{J}_{3j-1}{}') \circ P_{3(l+1)-1}' \circ \mathcal{J}_{3(l+1)-1}{}' \circ  P_{3l-1}'  \circ \mathcal{J}_{3l-1}{}' \circ (\circ^{l-1}_{j=1} \mathcal{J}_{3j-1}{}') \circ \mathcal{H}^{t}  .  \\
\end{split}
\end{equation}
\end{widetext}
where $P_{3(l+1)-1}' = \mathcal{W}_{3(l+1)} P_{3(l+1)-1} \mathcal{W}_{3(l+1)-2}$ and $P_{3l-1}' = \mathcal{W}_{3l} P_{3l-1}  \mathcal{W}_{3l-2}$.
The mapping caused by the random $\mathcal{W}$-type gate layer conjugation may be understood in terms of the $H$-conjugation Pauli identities:
\begin{equation}
\begin{split}
HXH &:= Z, \\
HYH &:= -Y, \\
HZH &:= X, \\
\end{split}
\end{equation}
and the $S$-conjugation Pauli identities:
\begin{equation}
\begin{split}
SXS^\dagger &:= Y, \\
SYS^\dagger &:= -X, \\
SZS^\dagger &:= Z. \\
\end{split}
\end{equation}
The randomly chosen $\mathcal{W}$-type gate layers ensure that each Pauli error is mapped to a random new Pauli operator.
Since we only care about the error cancellation probability, we now consider only the Pauli operators and the gate layer separating them, this is the sequence:
\begin{equation}
\begin{split}
 P_{3(l+1)-1}' \circ \mathcal{J}_{3(l+1)-1}{}' \circ  P_{3l-1}' .  \\
\end{split}
\end{equation}
The Pauli error $P_{3(l+1)-1}{}'$ is now propagated through the Clifford gate layer $\mathcal{J}_{3(l+1)-1}{}'$, leading to the expression
\begin{equation}
\begin{split}
\mathcal{J}_{3(l+1)-1}{}' \circ   P_{3(l+1)-1}'' \circ P_{3l-1}',  \\
\end{split}
\end{equation}
where 
\begin{equation}
P_{3(l+1)-1}{}'' := (\mathcal{J}_{3(l+1)-1}{}')^{\dagger} \circ P_{3(l+1)-1}{}' \circ \mathcal{J}_{3(l+1)-1}{}',
\end{equation}
and $P_{3(l+1)-1}{}''$ is again a Pauli operator.
The two Pauli errors cancel if
\begin{equation}
\begin{split}
P_{3(l+1)-1}'' \cdot P_{3l-1}' \propto I^{\otimes n},  \\
\end{split}
\end{equation}
where the notation `$\propto$' indicates equality up to a phase.

In the case of a single logical qubit, the Pauli errors $P_{3(j+1)-1}, P_{3j-1}\in\{X,Y,Z\} $ are mapped to new Pauli operators under random conjugation by $\mathcal{W}$-type gate layers with probability at least $1/2$.
The $\mathcal{J}$-type layer is an identity operation and thus does not affect error propagation.
The random mapping induced by the conjugation means that each Pauli error randomly becomes one of two single-qubit Pauli operators.
In the worst case, cancellation occurs for two out of the four possible error combinations, and so
\begin{equation} \label{one_q_case_cancellation}
\Pr (P_{3(j+1)-1}''P_{3j-1}' \propto I) \leq 1/2.
\end{equation}

Similar analysis can be performed for the case of two logical qubits.
Now $P_{3(j+1)-1}, P_{3j-1}\in\{I,X,Y,Z\}^{\otimes 2} \backslash I^{\otimes 2}$, and the $\mathcal{J}$-type layer can be either an $I^{\otimes 2}$ operation or a randomly oriented CNOT gate.
The conjugation by $\mathcal{W}$-type gate layer randomly maps each Pauli error to a different Pauli error with probability at least $1/2$.
And, when moving the Pauli errors together to combine them, if the $\mathcal{J}$-type layer contains a randomly-oriented CNOT gate this randomises the error propagation, mapping the Pauli error to a new Pauli error when it is moved through the CNOT gate.
This random error propagation means that after the two errors have been moved together to be in the form $P_{3(j+1)-1}''P_{3j-1}'$, both $P_{3(j+1)-1}''$ and $P_{3j-1}'$ can each be at least two different Pauli errors. 
In the worst case, both errors are effectively chosen uniformly at random from two Pauli operators, and error cancellation occurs with probability $1/2$.
It follows that
\begin{equation} \label{two_q_case_cancellation}
\Pr(P_{3(j+1)-1}''P_{3j-1}' \propto I^{\otimes 2}) \leq 1/2.
\end{equation}

For Pauli errors of weight greater than two, the probability of error cancellation may be bounded by considering the probability that all one- or two-qubit components of the errors cancel.
For each logical qubit where the $\mathcal{J}$-type gate layer acts with a local identity operation, eqn. \ref{one_q_case_cancellation} may be applied to bound the probability that the components of the errors acting on these qubits cancel.
For each pair of logical qubits where the $\mathcal{J}$-type gate layer acts with a randomly oriented CNOT operation, eqn. \ref{two_q_case_cancellation} may be applied to bound the probability that the components of the errors acting on each such pair of qubits cancel.
As all such error components must cancel for the errors to cancel overall, the probability of error cancellation is upper-bounded by $(1/2)^{a+b}$, where $a$ is the number of qubits for which the component of the $\mathcal{J}$-type layer is identity and $b$ is the number of CNOT operations in the layer. 
It follows that 
\begin{equation} \label{n_q_case_cancellation}
\Pr(P_{3(j+1)-1}''P_{3j-1}' \propto I^{\otimes n}) \leq 1/2,
\end{equation}
where $n$ is an arbitrary number of logical qubits, and the two Pauli errors can be of any non-trivial weight up to $n$.
If two errors affect $\mathcal{J}$-type gate layers that are not nearest-neighbour, the same bound can be derived using a similar approach, the only difference being that the Pauli errors must be propagated through multiple $\mathcal{J}$-type layers instead of just one.
This means that any two Pauli errors that affect the $\mathcal{J}$-type gate layers of a randomly generated trap circuit will cancel with a probability upper-bounded by $1/2$.

We now consider the case where three Pauli errors affect
$\mathcal{J}$-type gate layers of a randomly generated trap circuit.
We have just shown that the probability of cancellation of any two
errors affecting $\mathcal{J}$-type gate layers within a trap circuit
is upper-bounded by $1/2$.
Let $c=1/2$ denote this worst-case cancellation probability, and let
$q_j$ denote the worst-case probability that a collection of $j$
Pauli errors affecting $\mathcal{J}$-type gate layers cancels.
Thus, $q_2\leq c$.
If two of the three Pauli errors are propagated together, let
$p\leq c$ denote the probability that they cancel. 
If they cancel,
then the remaining non-trivial Pauli error prevents all three errors
from cancelling. With probability $1-p$, they instead combine into
a new non-trivial Pauli error. 
By the two error result, the probability
that this combined error cancels with the remaining Pauli error is
upper-bounded by $c$. 
Therefore, $q_3\leq (1-p)c\leq c$.
So we have that $q_2\leq c$ and $q_3\leq c$.
Now consider the case of four Pauli errors.
If two of the errors cancel then two errors remain, otherwise they
combine into a new non-trivial Pauli error and three errors remain.
Therefore,
\begin{equation}
\begin{split}
q_4
&\leq (1-p)q_3+p q_2\\
&\leq (1-p)c+p c\\
&\leq \frac{1}{2}.
\end{split}
\end{equation}
More generally,
\[
q_j\leq (1-p)q_{j-1}+p q_{j-2},
\]
so if $q_{j-1}\leq c$ and $q_{j-2}\leq c$, then $q_j\leq c$.
This extends iteratively to any number of Pauli errors
affecting $\mathcal{J}$-type gate layers in a trap circuit.
Therefore, for any collection of Pauli errors affecting
$\mathcal{J}$-type gate layers in a trap circuit, the cancellation
probability satisfies the inequality $p_{\mathrm{canc}}\leq 1/2$. 
\end{proof}



\begin{proof}[Proof of Lemma~\ref{only_J_canc_bound}]

The noise affecting the $\mathcal{J}$-type gate layers is permitted to be non-Markovian, and so applying the logical randomised compiling scheme can result in Pauli errors that are classically correlated in time.
If there are $N_J$ logical noise channels affecting the
$J$-type gate layers of each trap circuit, then the probabilities
associated with all non-empty subsets of these layers are given by
the set $\left\{q_{J,\mathrm{tot}(j)}\right\}_{j=1}^{2^{N_J}-1}$,
where each subset specifies a combination of error locations.
Let $\mathcal{I}_i$ denote the set of indices $j$ corresponding
to subsets containing exactly $i$ $J$-type gate layers. 
The maximum probability that any set of $i \in \{1,\ldots,N_J\}$ $J$-type gate layers is affected by errors is $q_{i,J,\max}:=\max_{j \in\mathcal{I}_i}q_{J,\mathrm{tot}(j)}$.
If there are $N_{\mathcal{C}}$ Markovian logical noise channels affecting the logical single-qubit Clifford gate layers, state preparation and measurement, the set of total channel error rates for these is $\{q_{\mathcal{C},tot}(i)\}_{i=1}^{N_{\mathcal{C}}}$,
and the largest value from this set is $q_{\mathcal{C},max}:= \max(\{q_{\mathcal{C},tot}(i)\}_{i=1}^{N_{\mathcal{C}}})$.

Let $\epsilon_{\mathrm{canc}}$ denote the unconditional
probability mass of cancellation configurations not covered by
the conditional $1/2$ bound of Lemma~\ref{cancellation_lemma2}.
Two scenarios are neglected when using the $1/2$ bound, both of which contribute to this approximation error.
These are: (1) the cancellation of errors affecting $\mathcal{J}$-type gate layers with errors affecting the rest of the circuit, 
and (2) the cancellation of errors affecting all circuit components except $\mathcal{J}$-type gate layers.

The approximation error resulting from Scenario (1) may be bounded in the following way.
By arguments similar to those used in the proof of Lemma \ref{cancellation_lemma2}, if a collection of errors includes at least one error affecting a $\mathcal{J}$-type gate layer and at least one error affecting another circuit component, and these errors are separated by at least one randomising gate layer, then the probability of error cancellation is upper-bounded by $1/2$.
Therefore, these errors are accounted for if the value $1/2$ is used for the cancellation probability bound.
What is not accounted for is the cancellation of errors affecting $\mathcal{J}$-type gate layers with those affecting immediately preceding $\mathcal{W}$-type gate layers, as no randomising gate layers separate these errors.
The probability that errors of this type occur and cancel is
upper-bounded by
\begin{equation}
\sum_{i=1}^{N_{\mathcal{J}}}
\binom{N_{\mathcal{J}}}{i}
q_{i,\mathcal{J},\max}
q_{\mathcal{C},\max}^{\,i}.
\end{equation}
Recall that $q_{i,\mathcal{J},\max}$ is the maximum probability
that any set of $i$ $\mathcal{J}$-type gate layers are all affected
by errors, and $q_{\mathcal{C},\max}$ is the maximum value of the
set of total channel error probabilities for logical single-qubit
Clifford gate layers, state preparation, and measurement.

The approximation error resulting from Scenario~(2) may be bounded
using a union bound over subsets of at least two
$\mathcal{C}$-type noise channels.
This means that the total approximation error from using the value
$1/2$ to bound the cancellation probability for any collection of
errors is upper-bounded by
\begin{equation} \label{eqn_to_ref11}
\begin{split}
|\epsilon_{\mathrm{canc}}|
&\leq
\sum_{i=1}^{N_{\mathcal{J}}}
\binom{N_{\mathcal{J}}}{i}
q_{i,\mathcal{J},\max}
q_{\mathcal{C},\max}^{\,i}
\\
&\hspace{3.5em}
+
\sum_{i=2}^{N_{\mathcal{C}}}
\binom{N_{\mathcal{C}}}{i}
q_{\mathcal{C},\max}^{\,i}.
\end{split}
\end{equation}
If noise affecting the $\mathcal{J}$-type gate layers is the dominant source of circuit noise, then $q_{1,\mathcal{J},max} \gg q_{\mathcal{C},max}$.
And so, considering only the leading order term, the previous bound becomes
\begin{equation}
\begin{split}
|\epsilon_{canc}| &\leq O\big(N_{\mathcal{J}} \cdot q_{1,\mathcal{J},max} \cdot q_{\mathcal{C},max} \big). \\
\end{split}
\end{equation}
This means that the probability of error cancellation is bounded by
\begin{equation} \label{ghhh}
p_{err}p_{canc} \leq 1/2p_{err} + O\big(N_{\mathcal{J}} \cdot q_{1,\mathcal{J},max} \cdot q_{\mathcal{C},max} \big).
\end{equation} 
Using Eq.~\ref{p_inc_bound}, $p_{\mathrm{det}}\geq 1/2$, and
Eq.~\ref{ghhh}, we obtain
\begin{align}
p_{\mathrm{inc}}
&\geq
p_{\mathrm{err}}
\bigl(1-p_{\mathrm{canc}}\bigr)
p_{\mathrm{det}}
\nonumber\\
&=
\bigl(
p_{\mathrm{err}}
-
p_{\mathrm{err}}p_{\mathrm{canc}}
\bigr)
p_{\mathrm{det}}
\nonumber\\
&\geq
\frac{1}{2}
\left[
\frac{1}{2}p_{\mathrm{err}}
-
O\!\left(
N_J q_{1,J,\max} q_{C,\max}
\right)
\right].
\end{align}
Therefore,
\begin{equation}
\label{<retain-the-current-E26-label>}
p_{\mathrm{err}}
\leq
4p_{\mathrm{inc}}
+
O\!\left(
N_J q_{1,J,\max} q_{C,\max}
\right).
\end{equation}
Thus, using $\beta=1/2$ introduces an additive approximation
error satisfying
\begin{equation}
|\epsilon_{\gamma}|
\leq
O\!\left(
N_J q_{1,J,\max} q_{C,\max}
\right).
\end{equation}
\end{proof}

\subsection{Bounding the probability of error cancellation for arbitrary errors affecting a trap circuit} \label{altered_trap_construction}

\begin{figure*}
\centering
\includegraphics[width=0.34\textwidth]{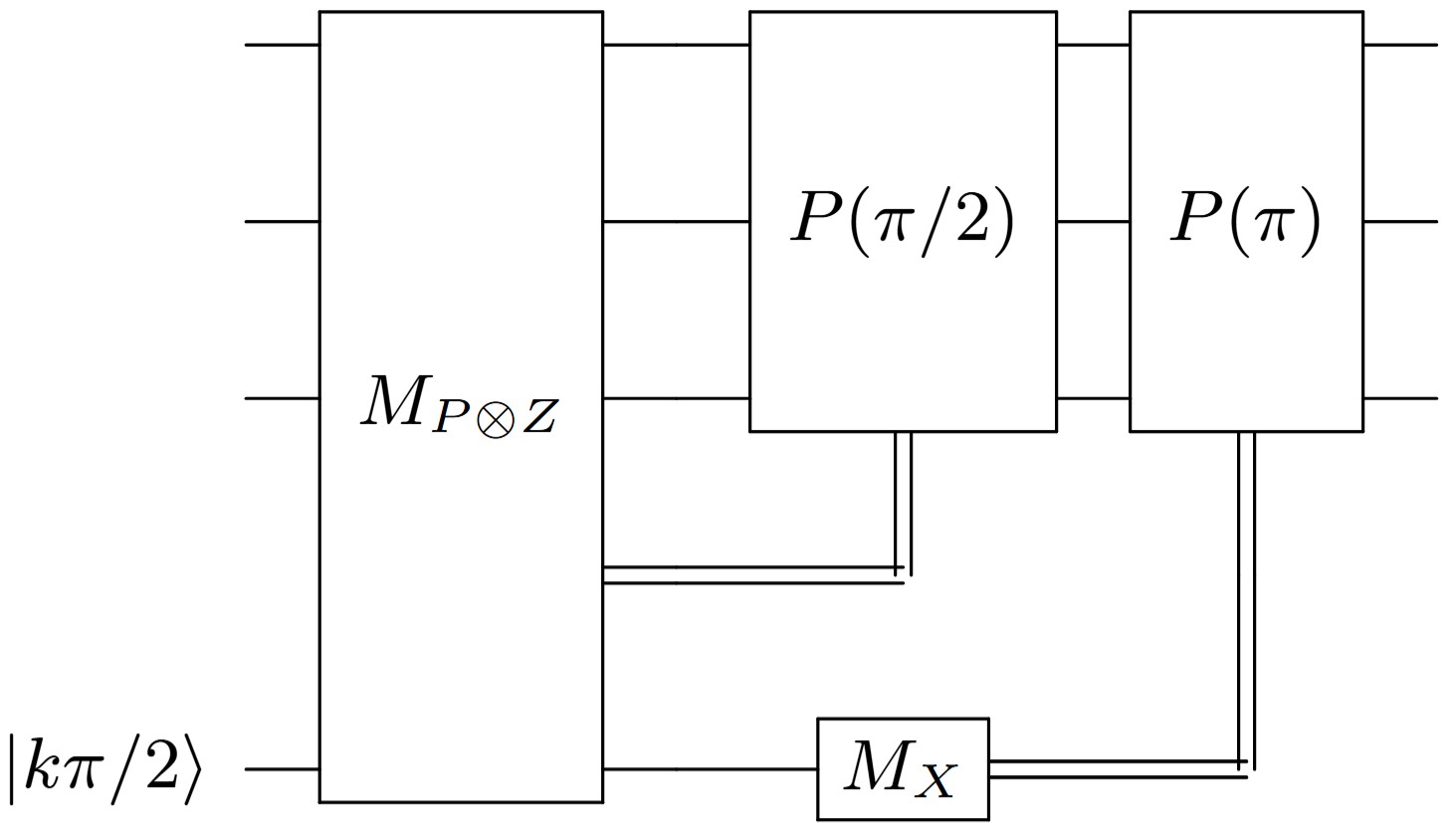}\vspace{0.7em}
\caption{In the modified trap construction described in Section \ref{altered_trap_construction}, the value of $k$ is randomly chosen where $k \in \{1,2\}$ with probability $1/2$. 
This means that either a $P(\pi/2)$ or $P(\pi)$ Pauli rotation gate is applied, depending on the magic state that is prepared.
If a $P(\pi/2)$ rotation is applied, this stabilises the quantum state with probability $1/2$, otherwise a correction operation must be included at the end of the circuit to ensure that the trap computation is deterministic.
The two adaptive correction operations after the adaptive measurements are applied in the same way as in the target circuit.
Namely if the rotation is in the wrong direction the $P(\pi/2)$ operation is applied.
So that if a $\ket{\pi/2}$ magic state is used, the overall operation the gadget performs is an identity operation or a $P(\pi/2)$ rotation operation.
While if a $\ket{\pi}$ magic state is used, the overall operation the gadget performs is an $P(-\pi/2)$ rotation operation or a $P(\pi)$ rotation operation.
}
\label{fig:cancel_all_errors}
\end{figure*}

We now describe a \textit{modified trap circuit} construction that guarantees the cancellation probability of any collection of errors affecting a randomly generated trap circuit is upper-bounded by $7/8$.
Without loss of generality, we will assume magic states are only used to perform $Z$-type rotations in the target and trap circuits by projective measurement, i.e. these logical gate operations are of the form: $Z^{\otimes a}(\theta)$ where $a \in \{0,1\}^n$.
This does not restrict computation, since straightforward conjugation using Clifford gates can be used to arbitrarily change the basis of the Pauli rotation.

Although otherwise identical to the previous construction, the modified trap circuit construction differs in several respects:
\begin{enumerate}
    \item The ordering of neighbouring $\mathcal{W}$-type gate layers is randomised.
    \item For each gate performed by the consumption of a magic state, either a $\ket{\pi}$ or $\ket{\pi/2}$ magic state is prepared, each with probability $1/2$.
    The same correction gates are applied as for the corresponding gates in the target circuit, where $\ket{\pi/4}$ magic states are used to implement $\pi/4$ rotations.
    Including the correction gates, the overall gate operations performed are then of the form $Z^{\otimes a}(k\cdot\pi/2)$ for $a \in \{0,1\}^n$ and $k\in\{1,2\}$.
    The value of $k$ is then 1 or 2 depending on whether the magic state used is $\ket{\pi/2}$ or $\ket{\pi}$, and on the random direction of the initial projective measurement operation.
    Fig. \ref{fig:cancel_all_errors} shows the operations required to implement such a gate in a trap circuit.
    \item To ensure that the outputs of the trap circuits are deterministic in the absence of error, an adjoint operation is included for each $\pi/2$ rotation gate that does not stabilise the quantum state.
    These adjoint operations are propagated to the end of each trap circuit, and the order in which they are applied at the end of the circuit is randomised.
    Of these operations, those that commute may be parallelised, while those that combine into an identity operation may be removed entirely.
    \item For any gate $Z^{\otimes a}(k\cdot\pi/2)$ conjugated by $\mathcal{W}$-type gate layers, for qubits in its support (i.e. where $a_i=1$) the components of the conjugating $\mathcal{W}$-type gate layers are either all $H$-type or all $S$-type. 
    Considering only logical qubits in the support of $a$, the conjugation takes the form $U^{\otimes a} \cdot Z^{\otimes a}(k \pi/2) \cdot (U^\dagger)^{\otimes a}$ for $U \in \{H,S\}$ chosen with uniform probability.
\end{enumerate}
During the protocol, we assume that the additional adjoint
operations are non-noise-decreasing, so that the total error
rate of each modified trap circuit is no smaller than that of
the corresponding target circuit. The resulting TVD bound is
therefore conservative.

Using the modified construction, each trap circuit is of the form:
\begin{equation}
\mathcal{H}^{t} \circ (\circ^p_{k=1}\mathcal{C}_k)  \circ (\circ^m_{j=1}\mathcal{W}_{3j} \mathcal{J}_{3j-1} \mathcal{W}_{3j-2}) \circ \mathcal{H}^{t}.  \\
\end{equation}
Note that neighbouring $\mathcal{W}$-type gate layers are randomly ordered.
If this randomisation reverses the order of an $S$ and an $H$ gate, the $S$ gate is mapped to an $X(\pi/2)$ gate; likewise, an $S^\dagger$ gate is mapped to an $X(\pi/2)^\dagger$ gate.
The $(\circ^p_{k=1}\mathcal{C}_k)$ gate layers are randomly ordered adjoint operations of the $\pi/2$ $Z$-basis rotations that do not stabilise the state $\mathcal{H}^{t}\ket{0}^{\otimes n}$, and have been propagated to the end of the circuit.
Where, the same as for the previous construction, the value of $t$ is chosen uniformly at random from $\{0,1\}$.

We now show that any two Pauli errors that occur while performing such a trap circuit cancel with bounded probability.

\vspace{1em}
We now prove Lemma \ref{bounding_all_canc}, that states:
\vspace{1em}

\textit{
For the modified trap circuit construction, the probability of error cancellation in the trap circuits is upper-bounded by $p_{canc}\leq 7/8$.
}

\begin{proof}

This proof is organised as follows.
We first show that errors that occur at any two positions in a randomly generated trap circuit cancel with bounded probability. 
We then extend this to bound the error cancellation probability for any number of errors affecting a randomly generated trap circuit.

By Lemma \ref{cancellation_lemma2}, the probability of cancellation for any two errors affecting $J$-type layers is upper-bounded by $1/2$.
Likewise, using arguments similar to those used in the proof of Lemma \ref{cancellation_lemma2}, any two errors separated by randomly chosen $\mathcal{W}$-type gate layers, where these intervening gate layers do not compile together into an identity operation,
are randomly mapped to different Pauli errors when the errors are propagated together and so
cancel with probability upper-bounded by $1/2$.
This includes the case of a gate preparation or measurement error with an error affecting a gate layer, and the case of errors affecting any two $\mathcal{W}$-type gate layers.

For readability, the rest of the proof is structured in sections.
In Sections $(a)$-$(e)$, upper bounds are derived for the probability of cancellation for all other scenarios where two Pauli errors affect the trap circuit.
In Section $(f)$, this analysis is extended to bound the probability of cancellation for any number of errors.

\vspace{0.5em}
\subsubsection{Cancellation of state preparation and measurement errors}

The gate layers of a trap circuit affected by state preparation and measurement Pauli errors may be expressed in the form
\begin{equation}
P_{M}\circ\mathcal{H}^{t}  \circ (\circ^p_{k=1}\mathcal{C}_k) \circ (\circ^m_{j=1}\mathcal{W}_{3j} \mathcal{J}_{3j-1} \mathcal{W}_{3j-2}) \circ \mathcal{H}^{t} \circ P_{P},  \\
\end{equation}
where $P_{P}$ and $P_{M}$ denote Pauli errors affecting state-preparation and measurement, respectively.
Firstly, without changing the logic of the trap circuit, the ordering of neighbouring $\mathcal{W}$-type layers can be un-randomised to
\begin{equation}
P_{M}\circ\mathcal{H}^{t} \circ (\circ^p_{k=1}\mathcal{C}_k) \circ (\circ^m_{j=1}\mathcal{W}_{3j}' \mathcal{J}_{3j-1} \mathcal{W}_{3j-2}')  \circ \mathcal{H}^{t} \circ P_{P},  \\
\end{equation}
where the prime notation indicates the un-randomisation of contiguous $\mathcal{W}$-type gate layer ordering.
The gate layers separating the two Pauli errors compile to a mixture of randomly oriented CNOT gates, 
$X$- or $Z$-type $\pi/2$ rotation gates, and identity operations.
The $\ket{0}^{\otimes n}$ state is always initially prepared, and measurement is always in the $Z$ basis.
So the only state preparation and measurement Pauli errors that do not stabilise the initial state or measurement, and non-trivially affect the target and trap computations, are $X$-type errors.

The randomised construction of the trap circuits ensures that state-preparation and measurement errors propagate in a random manner, so that their cancellation occurs with bounded probability.
This may be seen by propagating the state-preparation and measurement Pauli errors through the trap circuit until they are either side of the $j$-th gate layer sequence from $(\circ^m_{j=1}\mathcal{W}_{3j}' \mathcal{J}_{3j-1} \mathcal{W}_{3j-2}')$, 
i.e. $\mathcal{W}_{3j}' \mathcal{J}_{3j-1} \mathcal{W}_{3j-2}'$, where the $\mathcal{J}$-type layer has non-trivial support on at least one of the same qubits as one of the propagated Pauli errors.
Such a $J$-type gate layer must exist for the qubits that the errors are affecting to be non-trivially included in the target computation.
The errors and the gate layers may then be considered independently of the rest of the circuits as 
\begin{equation}
P_M' (\mathcal{W}_{3j}' \mathcal{J}_{3j-1} \mathcal{W}_{3j-2}') P_P',\\
\end{equation}
where $P_P'$ and $P_M'$ denote the propagated state-preparation and measurement errors, respectively.
And because the gate layers compile together into randomly oriented CNOT gates, and either $X$-or $Z$-type $\pi/2$ rotations or identity operations, the propagation of the Pauli errors through these intervening gate layers is randomised.

Any randomly oriented CNOT in $\mathcal{J}_{3j-1}$ that has shared support on at least one qubit affected by at least one of the Pauli errors randomises the propagation of the Pauli errors through it, resulting in cancellation of those components of the Pauli errors with probability upper-bounded by $1/2$. 
Any Pauli rotation operation in $\mathcal{J}_{3j-1}$ that has shared support on at least one of the same qubits as at least one of the Pauli errors is a rotation by $\pi/2$ with probability $1/2$.
With probability $1/2$, any Pauli rotation operation stabilises the ideal quantum state, and so does not require an adjoint operation to keep the circuit deterministic in the absence of errors.
This means that the random propagation of the errors through the rotation operation is not undone by an adjoint operation.
Furthermore, the sandwiching $\mathcal{W}$-type gate layers randomise the rotation basis to be either $Z$-type or $X$-type with probability $1/2$.
Therefore, a Pauli operation in $\mathcal{J}_{3j-1}$ is: (1) a $\pi/2$ rotation, (2) stabilises the ideal quantum state, and (3) does not commute with the Pauli errors, with probability at least $1/2^3$.
This means that the propagation through the circuit of any state preparation error and measurement error is randomised with a probability at least $1/8$.
Consequently, the probability of state-preparation and measurement errors cancelling in a randomly generated trap circuit is upper-bounded by $7/8$.

\subsubsection{Cancellation of gate errors and state-preparation or measurement errors}

Only Pauli errors with an $X$ component non-trivially affect state preparation and measurement.
Errors affecting the final gate layer of the circuit may be combined with measurement errors and are non-trivial only if they have an $X$ component.
The modified trap construction ensures that any Pauli error affecting a gate layer is separated from errors arising during state preparation or measurement by at least one randomly chosen single-qubit Clifford gate layer.
This randomises the propagation required to combine either a state preparation or measurement Pauli error with a gate Pauli error.
And, through similar arguments as were used to prove Lemma \ref{cancellation_lemma2}, any state preparation or measurement Pauli error cancels with a gate Pauli error with probability upper-bounded by $1/2$.

\subsubsection{\texorpdfstring{Cancellation of gate errors occurring before and after $J$-type layers}{Cancellation of gate errors occurring before and after J-type layers}}

A trap circuit affected by errors occurring before and after a $J$-type layer may be expressed in the form
\begin{widetext}
\begin{equation}
\begin{split}
\mathcal{H}^{t} \circ& (\circ^m_{j=l+1}\mathcal{W}_{3j} \mathcal{J}_{3j-1} \mathcal{W}_{3j-2}) \circ (\mathcal{W}_{3l} P_{3l-1} \mathcal{J}_{3l-1} P_{3l-1-2}\mathcal{W}_{3l-2}) \circ (\circ^{l-1}_{j=1}\mathcal{W}_{3j} \mathcal{J}_{3j-1} \mathcal{W}_{3j-2}) \circ \mathcal{H}^{t}  .  \\
\end{split}
\end{equation}
\end{widetext}
Restricting our attention to only the errors and the intervening gate layer, we obtain
\begin{equation}
    P_{3l-1} \mathcal{J}_{3l-1} P_{3l-1-2}.
\end{equation}
As the $\mathcal{J}$-type layer is a mixture of $CZ$, identity and $Z$-type rotations in the target circuit, and gate positioning and type is unchanged in the trap circuits, any Pauli errors that cancel in the target circuit also cancel in the trap circuits.
Cancellation of these errors in the trap circuits is permitted since these errors do not contribute to the total circuit error rate of the target circuit.
The rotation angle of the $Z$-type rotations in the traps is chosen to be $\pi/2$ or 0, each with probability $1/2$.
This randomises the propagation of errors in the trap circuits that do not cancel in the target circuit with probability $1/2$, and so the probability of errors of this type cancelling in the trap circuits is upper-bounded by $1/2$.

\subsubsection{Cancellation of  gate errors occurring between sequences of $\mathcal{W}$-type and $\mathcal{J}$-type gate layers }

A trap circuit affected by errors occurring between sequences of $\mathcal{W}$-type and $\mathcal{J}$-type gate layers may be written
\begin{widetext}
\begin{equation}
\begin{split}
\mathcal{H}^{t} \circ& (\circ^m_{j=l+1}\mathcal{W}_{3j} \mathcal{J}_{3j-1} \mathcal{W}_{3j-2})  \circ P_{3l} \circ (\mathcal{W}_{3l} \mathcal{J}_{3l-1} \mathcal{W}_{3l-2}) \circ P_{3(l-1)} \circ (\circ^{l-1}_{j=1}\mathcal{W}_{3j} \mathcal{J}_{3j-1} \mathcal{W}_{3j-2}) \circ \mathcal{H}^{t}  .  \\
\end{split}
\end{equation}
\end{widetext}
Again isolating the errors and gate layers dividing them from the rest of the circuit, we obtain the sequence
\begin{equation}
    P_{3l} (\mathcal{W}_{3l} \mathcal{J}_{3l-1} \mathcal{W}_{3l-2}) P_{3(l-1)}.
\end{equation}
Since the ordering of neighbouring $\mathcal{W}$-type gate layers is randomised in the modified trap circuit construction, the Pauli errors are randomly mapped to new Pauli operators when they are propagated together through the intervening gate layers. 
There is at least one such randomising layer between the Pauli errors with probability $3/4$.
This can be combined with the previously derived result that two Pauli errors separated by a randomising gate layer cancel with probability upper-bounded by $1/2$, to get that the propagation of the errors is randomised with probability $3/8$.
This means that the Pauli errors cancel with probability upper-bounded by $5/8$.
So that for
\begin{equation}
    (\mathcal{W}_{3l} \mathcal{J}_{3l-1} \mathcal{W}_{3l-2}) P_{3l}' P_{3(l-1)},
\end{equation}
where $ P_{3l}' = (\mathcal{W}_{3l} \mathcal{J}_{3l-1} \mathcal{W}_{3l-2})^\dagger P_{3l} (\mathcal{W}_{3l} \mathcal{J}_{3l-1} \mathcal{W}_{3l-2})$,
it is true that 
\begin{equation}
\Pr(P_{3l}' P_{3(l-1)} \propto I) \leq 5/8.
\end{equation}

\subsubsection{Cancellation of any errors with errors from adjoint Clifford gate layers}

In this trap circuit construction, Pauli rotation gates applied by the consumption of magic states randomly perform rotations by either $\pi/2$ or 0.
If a rotation by $\pi/2$ is performed, this operation stabilises the quantum state with probability $1/2$.
This depends on whether the layer of Hadamard gates is applied at the beginning and end of the circuit, making the quantum state either $\ket{0}^{\otimes n}$ or $\ket{+}^{\otimes n}$, and on the choice of random sandwiching gates, rendering the rotation gate basis either $Z$-type or $X$-type.
If the gate does not stabilise the state, then a Clifford correction must be applied that performs the adjoint operation of the Pauli rotation gate.
These operations are propagated to the end of the circuit and applied in a random order.
All of the trap circuit gate operations are Clifford, and so this propagation may be performed efficiently.
As stated, the additional Clifford operation is only necessary when both the Pauli operation is a $\pi/2$ rotation, and when this operation does not stabilise the quantum state.
As the probability of each of these is $1/2$, the probability of a correction operation being necessary is $1/4$.
So that the probability that an error affects another part of the circuit and cancels due to an error affecting one of the correction operations is upper-bounded by $1/4$.

\subsubsection{Cancellation of any number of errors}

We have shown that, in this construction, any two non-trivial
Pauli errors affecting a randomly generated trap circuit cancel
with probability upper-bounded by $c={7}/{8}$.
The preceding two-error bounds apply uniformly to any non-trivial
residual Pauli errors obtained by propagating and combining other
errors, since the remaining trap randomisation continues to
randomise their relative Pauli orientation.

For $j\geq 1$, let $q_j$ denote the worst-case probability that a
residual configuration of $j$ non-trivial Pauli errors affecting a
trap circuit cancels. Clearly, $q_1=0$, while the preceding
two-error analysis gives $q_2\leq c$.
Now consider a residual configuration of $j\geq 3$ errors. Propagate
any two of these errors together, and let $p_j$ denote the probability
that they cancel. If they cancel, then $j-2$ non-trivial errors remain.
Otherwise, they combine into a new non-trivial Pauli error, leaving
$j-1$ non-trivial errors. Therefore,
\begin{equation}
q_j \leq p_j q_{j-2} + (1-p_j)q_{j-1}.
\end{equation}
For $j=3$, this gives
\[
q_3 \leq p_3 q_1+(1-p_3)q_2 \leq c.
\]
More generally, if $q_{j-2}\leq c$ and $q_{j-1}\leq c$, then
\[
q_j \leq p_jc+(1-p_j)c = c.
\]
It follows by strong induction that $q_j\leq c$ for every $j\geq 1$.
Therefore, the cancellation probability for any collection of
errors affecting a randomly generated trap circuit satisfies $p_{\mathrm{canc}}\leq\frac{7}{8}$.
\end{proof}

\section{Numerical decomposition of false positive rates}
\label{appendix:false_positive_numerics}

\begin{figure*}[!ht]
\vspace{-1.4em}
    \centering
    \begin{subfigure}[t]{0.44\textwidth}
        \includegraphics[width=\linewidth]{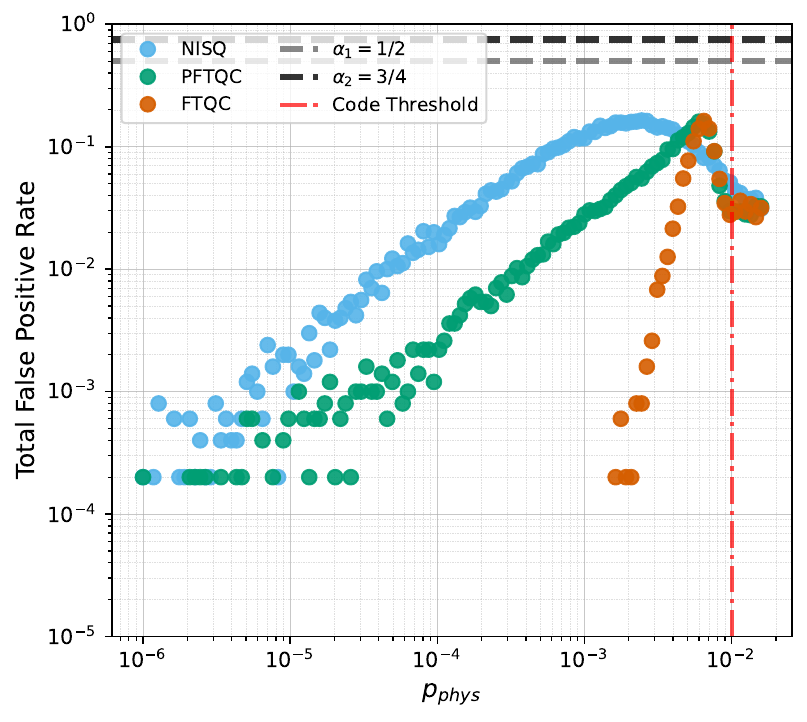}
        \caption{Total false positive rate}
        \label{fig:fp-panel-a}
    \end{subfigure}
    \hspace{4em}
    \begin{subfigure}[t]{0.44\textwidth}
        \includegraphics[width=\linewidth]{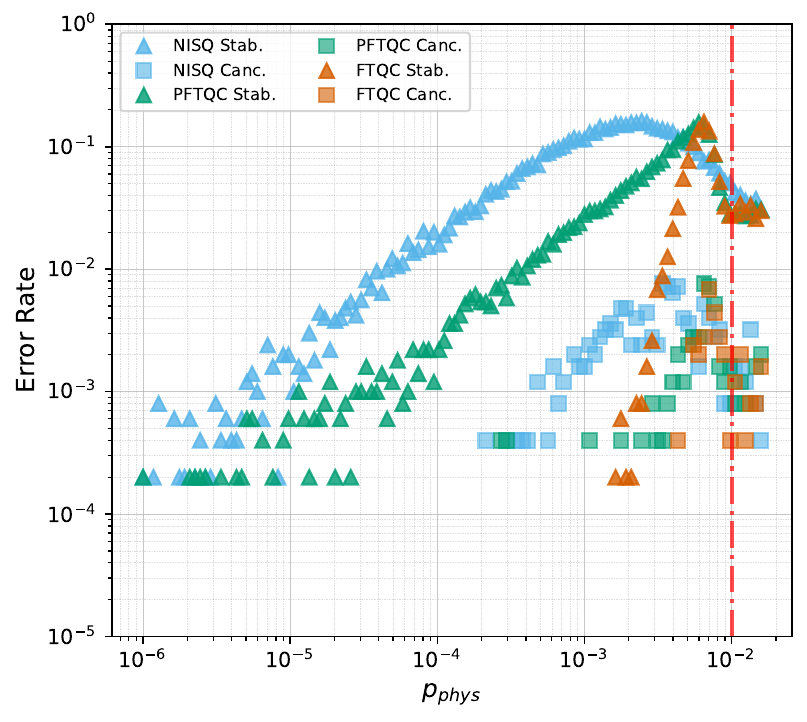}
        \caption{False positive decomposition}
        \label{fig:fp-panel-b}
    \end{subfigure}
    \begin{subfigure}[t]{0.44\textwidth}
        \includegraphics[width=\linewidth]{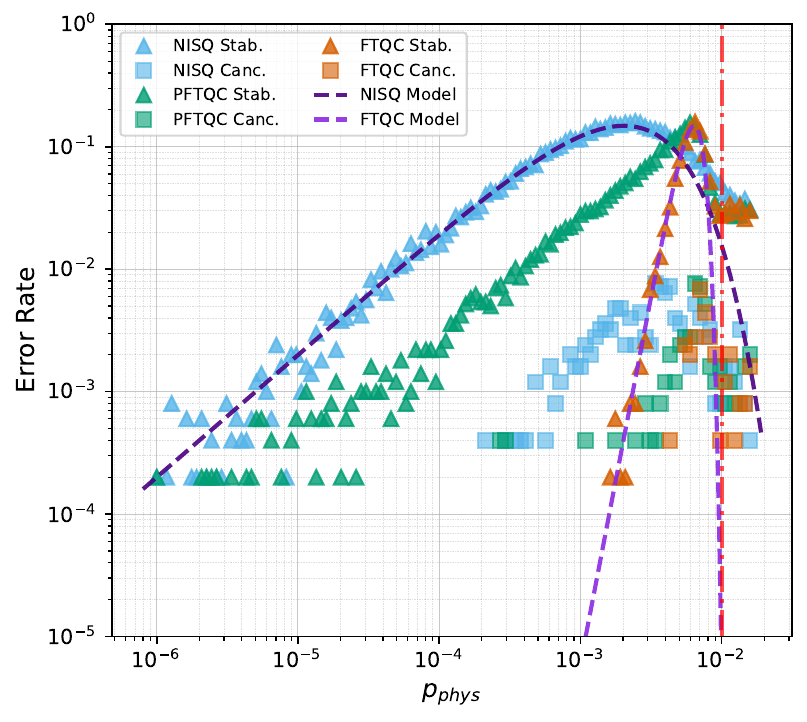}
        \caption{Stabilisation-induced false positives vs. analytical prediction}
        \label{fig:fp-panel-c}
    \end{subfigure}
    \caption{\emph{Numerical analysis of false positive rates for IQP circuits.} 
    The trap circuit false positive rates are plotted for 5-qubit, 40-layer IQP circuits for a range of physical error rates. 
    In (a), false positive rates are plotted against the physical error rate for NISQ, PFTQC and FTQC circuits.
    Two analytical upper bounds on the false positive rates from eqn. \ref{false_positive_bounds} are also plotted as horizontal dashed lines: $\epsilon_{s,1} = 1/2$ and $\epsilon_{s,2} = 3/4$.
    The numerical false positive rates are observed to lie well below the analytical bounds.
    In (b), false positive rates are decomposed into stabilisation-induced false positive rates and cancellation-induced false positive rates.
    These results indicate that the stabilisation-induced false positive rates are the dominant contribution to the total false positive rates. 
    Cancellation-induced false positive rates are consistently one to two orders of magnitude smaller than those from stabilisation. 
    In (c), the analytical formula from eqn. \ref{stabi_formula} is overlaid on the plot from (b), showing close agreement with the observed behaviour.
    The formula used to generate the model curves was: $s(k, p_{L}) = \left(1 - \frac{2p_{L}}{3} \right)^k - (1 - p_{L})^k$, where $k = 3 m n$ denotes the total number of error locations in the circuit, with $m$ the circuit depth and $n$ the number of qubits.
    }
    \label{fig:iqp-false-positives-split-errors}
\end{figure*}

We carried out a numerical study of the false positive rates arising when logical accreditation is applied to IQP circuit sampling under a local logical-level depolarising noise model. 
False positives, where errors affect trap circuits and are not registered in the measurement outcomes, arise from two primary sources.
The first is stabilisation-induced errors, which correspond to logical Pauli errors that stabilise the logical quantum state or commute with the trap circuit measurement basis, leading to false positives where the measurement output is $0^n$ despite errors having occurred. 
The second source is cancellation-induced errors, which occur when multiple errors combine to cancel, resulting in trap circuit false positive measurement outcomes.

The total trap circuit false positive rate may be decomposed as
\begin{equation}
\Pr[\text{false positive}] = p_{\mathrm{stab}} + p_{\mathrm{canc}},
\end{equation}
where \(p_{\mathrm{stab}}\) and \(p_{\mathrm{canc}}\) denote the probability contributions from stabilisation and cancellation effects, respectively.

Under the local logical depolarising noise model with physical error rate $p_{phys}$, each error location applies identity with probability $1-p_L$, or one of the three logical Pauli errors $\{X, Y, Z\}$ each with probability $p_L/3$. 
Given that the trap circuit measurement is chosen uniformly at random to be in the $Z$- or $X$-basis, the per-location probability that stabilises the logical state is
\begin{equation}
p_{\mathrm{stab, loc}} = 1 - \frac{2p_L}{3}.
\end{equation}

Assuming $k$ independent error locations, the probability that the operators probabilistically applied at all error locations stabilise the logical state (including the trivial no-error case) is
\begin{equation}
\left(1 - \frac{2p_L}{3}\right)^k.
\end{equation}
Subtracting from this the probability that no errors occur, $(1-p_L)^k$, isolates the trap circuit false positives arising from non-trivial stabilising errors:
\begin{equation} \label{stabi_formula}
p_{\mathrm{stab}} = \left(1 - \frac{2p_L}{3}\right)^k - (1 - p_L)^k.
\end{equation}

The numerical results shown in Fig. \ref{fig:iqp-false-positives-split-errors} indicate that the false positive contribution from cancellation effects is minimal, remaining consistently $1$–$2$ orders of magnitude lower than the contribution from stabilisation effects across the full range of physical error rates examined.
These findings support the conclusion that, for stochastic noise models relevant to fault-tolerant quantum computation that are well-approximated by depolarising noise, false positive rates where logical accreditation is applied to IQP circuits are dominated by stabilisation-induced errors. 
The analytical expression in eqn. \ref{stabi_formula} provides a means of approximating the trap circuit false positive rate caused by this effect and closely corresponds to the experimental results plotted in Fig. \ref{fig:iqp-false-positives-split-errors} (c).

\noindent\textbf{Remark:} \emph{This analysis is primarily valid under the local depolarising noise model used for these numerical experiments. 
However, the implication that stabilisation effects dominate the false positive rate likely extends to more general stochastic error models, particularly those where weight-1 errors are the most probable and higher-weight errors occur with decreasing rates. 
This would generally be true in regimes of large-scale fault-tolerance, and where the error channels are independent.}

\vspace{-2em}
\section{Robustness bound} \label{robustness_derivation}
\vspace{-1.3em}

Assumptions A1 and A2 can be violated if logical single-qubit Clifford gate- or physical single-qubit gate noise exhibits gate-dependence.
This may cause noise channels to act differently on a circuit instance during logical accreditation depending on the choice of gates used.
We will now show that, when these assumptions are violated, the resulting difference in the computed TVD upper bound depends only linearly on the diamond norm distance of each of the noise channels.
In this analysis we will ignore finite-sampling error.
\begin{definition}[Diamond norm distance]
For any two CPTP maps $A$ and $B$ acting on density matrices of a $2^n$-dimensional Hilbert space $\mathcal{H}$, their diamond norm distance is defined as
\begin{equation}
||A - B||_{\diamond}:= \max_\rho ( || (A \otimes \mathbb{I})(\rho) - (B \otimes \mathbb{I})(\rho) ||_1 ),
\end{equation}
where $\mathbb{I}$ is identity channel acting on auxiliary system $\mathcal{H}'$ of size $\text{dim}(\mathcal{H}') = \text{dim}(\mathcal{H})$, $||.||_1$ is the $l_1$-norm or trace norm, where $||X||_1:=\text{Tr}(\sqrt{X^\dagger X})$, the density matrix $\rho$ ranges over the Hilbert space $\mathcal{H} \otimes \mathcal{H}'$, and the expression is maximised over all possible density matrices.
\end{definition}

In order to derive this result, we will make use of the following lemma:
\begin{lemma} \label{diamond_norm_lemma}
For CPTP maps $A$, $B$, $C$ and $D$, the diamond norm, indicated $|| \hspace{0.2em}.\hspace{0.2em} ||_{\diamond}$, exhibits the property
    \begin{equation*}
    ||AB - CD||_{\diamond} \leq ||A - C||_{\diamond} + ||B - D||_{\diamond}.
\end{equation*}
\end{lemma}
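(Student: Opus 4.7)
The plan is to prove this via the standard "add and subtract" trick combined with two well-known properties of the diamond norm: sub-multiplicativity under channel composition, and the fact that the diamond norm of any CPTP map equals $1$. The rough proof will be very short; the writeup will mostly be citing the right background facts.

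First I would write
\begin{equation*}
AB - CD = A(B - D) + (A - C)D,
\end{equation*}
by inserting $\pm AD$. Applying the ordinary triangle inequality for the diamond norm (which inherits this property from the trace norm, via the maximisation over input states in the definition) gives
\begin{equation*}
\|AB - CD\|_{\diamond} \leq \|A(B-D)\|_{\diamond} + \|(A-C)D\|_{\diamond}.
\end{equation*}
Next I would invoke sub-multiplicativity, $\|XY\|_{\diamond} \leq \|X\|_{\diamond}\,\|Y\|_{\diamond}$, which holds for arbitrary linear superoperators (this is the standard stability result, usually proved by considering a purification of the optimal input state and using that composition commutes with tensoring with the identity on the reference system). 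This yields
\begin{equation*}
\|AB - CD\|_{\diamond} \leq \|A\|_{\diamond}\,\|B-D\|_{\diamond} + \|A-C\|_{\diamond}\,\|D\|_{\diamond}.
\end{equation*}
Finally, since $A$ and $D$ are CPTP, $\|A\|_{\diamond} = \|D\|_{\diamond} = 1$, and the claimed bound follows.

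There is no real obstacle here: the only subtlety is that one should be careful to state that the diamond norm satisfies the triangle inequality and sub-multiplicativity on the whole space of Hermitian-preserving maps, not just CPTP maps, since $B-D$ and $A-C$ are differences of CPTP maps and are themselves generally not CPTP. Both properties are standard and can be cited (e.g.\ from Watrous's textbook on quantum information) rather than re-derived. If the authors prefer a self-contained proof of sub-multiplicativity, the one-line argument is to take an optimal input $\rho$ achieving $\|XY\|_{\diamond}$, write $(X \otimes \mathbb{I})(Y \otimes \mathbb{I})(\rho)$, bound the outer trace norm by $\|X\|_{\diamond}$ times $\|(Y \otimes \mathbb{I})(\rho)\|_1$, and then bound the latter by $\|Y\|_{\diamond}$.
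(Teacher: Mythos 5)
Your proposal is correct and follows essentially the same route as the paper's own proof: insert $\pm AD$, apply the triangle inequality, use submultiplicativity of the diamond norm, and bound $\|A\|_{\diamond}$ and $\|D\|_{\diamond}$ by $1$ since they are CPTP. Your remark about needing these properties on general Hermitian-preserving maps (since $B-D$ and $A-C$ are not CPTP) is a point the paper glosses over, but it does not change the argument.
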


\vspace{0em}
\begin{proof}
\begin{equation}
    \begin{split}
        ||AB - CD||_{\diamond} &= ||AB +AD -AD - CD||_{\diamond} \\
        &\leq ||AB - AD||_{\diamond} + ||AD - CD||_{\diamond}\\
         &\leq ||A||_{\diamond}\cdot||B - D||_{\diamond} + ||A - C||_{\diamond} \cdot ||D||_{\diamond}\\
         &\leq ||B - D||_{\diamond} + ||A - C||_{\diamond}. \\
\vspace{2em}
    \end{split}
\end{equation}
\end{proof}
where the second line follows from a triangle inequality, the third line from the submultiplicity of the diamond norm, and the final line from the property that $||G||_{\diamond}\leq 1$ for any quantum channel $G$.

\subsection{Proof of Theorem \ref{robustness_theorem}}

\begin{proof}
Let the $k$-th logical trap circuit affected by noise, where the previously stated assumptions hold, be denoted
\begin{equation}
\begin{split}
\tilde{\mathcal{C}}^{(k)} &= \mathcal{E}_{M}^{(k)} \cdot \prod_{j=1}^D (\mathcal{E}_{j}^{(k)} \mathcal{L}_j^{(k)}) \cdot \mathcal{E}_{P}^{(k)},\\
\end{split}
\end{equation}
and the $k$-th logical trap circuit affected by noise where the assumptions are violated be denoted
\begin{equation}
\begin{split}
\tilde{\mathcal{C}}^{(k)}{}' &= \mathcal{E}_{M}^{(k)}{}' \cdot \prod_{j=1}^D (\mathcal{E}_{j}^{(k)}{}' \mathcal{L}_j^{(k)}) \cdot \mathcal{E}_{P}^{(k)}{}'.\\
\end{split}
\end{equation}
Applying Lemma \ref{diamond_norm_lemma} once to the quantity $||\tilde{\mathcal{C}}^{(k)} - \tilde{\mathcal{C}}^{(k)}{}'||_{\diamond}$, we get that
\begin{widetext}
\begin{equation}
\begin{split}
||\tilde{\mathcal{C}}^{(k)} - \tilde{\mathcal{C}}^{(k)}{}'||_{\diamond}  &= || \mathcal{E}_{M}^{(k)} \cdot \prod_{j=1}^D (\mathcal{E}_{j}^{(k)} \mathcal{L}_j^{(k)}) \cdot \mathcal{E}_{P}^{(k)} - \mathcal{E}_{M}^{(k)}{}' \cdot \prod_{j=1}^D (\mathcal{E}_{j}^{(k)}{}' \mathcal{L}_j^{(k)}) \cdot \mathcal{E}_{P}^{(k)}{}' ||_{\diamond} \\
&  \leq || \prod_{j=1}^D (\mathcal{E}_{j}^{(k)} \mathcal{L}_j^{(k)}) \cdot \mathcal{E}_{P}^{(k)} - \prod_{j=1}^D (\mathcal{E}_{j}^{(k)}{}' \mathcal{L}_j^{(k)}) \cdot \mathcal{E}_{P}^{(k)}{}' ||_{\diamond} + || \mathcal{E}_{M}^{(k)} - \mathcal{E}_{M}^{(k)}{}'  ||_{\diamond}. \\
\end{split}
\end{equation}
\end{widetext}
This operation can be iteratively applied a further $D$ times, resulting in the bound
\begin{widetext}
\begin{equation}
\begin{split}
 &|| \prod_{j=1}^D (\mathcal{E}_{j}^{(k)} \mathcal{L}_j^{(k)}) \cdot \mathcal{E}_{P}^{(k)} - \prod_{j=1}^D (\mathcal{E}_{j}^{(k)}{}' \mathcal{L}_j^{(k)}) \cdot \mathcal{E}_{P}^{(k)}{}' ||_{\diamond} + || \mathcal{E}_{M}^{(k)} - \mathcal{E}_{M}^{(k)}{}'  ||_{\diamond} \\
&\hspace{4em}\leq  ||\mathcal{E}_{P}^{(k)} - \mathcal{E}_{P}^{(k)}{}'||_{\diamond} + \sum_{j=1}^D||\mathcal{E}_{j}^{(k)}-  \mathcal{E}_{j}^{(k)}{}'||_{\diamond} + || \mathcal{E}_{M}^{(k)} - \mathcal{E}_{M}^{(k)}{}'  ||_{\diamond} \\
&\hspace{4em}\leq \sum_{j=0}^{D+1}||\mathcal{E}_{j}^{(k)}-  \mathcal{E}_{j}^{(k)}{}'||_{\diamond} , \\
\end{split}
\end{equation}
\end{widetext}
where to get the final line, the logical state preparation and measurement noise channels are relabelled using indices `$0$' and `$D+1$' respectively.
And so the diamond norm distance of the $k$-th trap circuit affected by noise where the assumptions hold, $\tilde{\mathcal{C}}^{(k)}$, from the $k$-th trap circuit affected by noise where the assumptions are violated, $\tilde{\mathcal{C}}^{(k)}{'}$ is bounded by
\begin{equation} \label{diamond_norm_distance_bound}
    ||\tilde{\mathcal{C}}^{(k)} - \tilde{\mathcal{C}}^{(k)}{'}||_{\diamond} \leq \sum_j || \mathcal{E}_{j}^{(k)} - \mathcal{E}_{j}^{(k)}{'}||_{\diamond}. 
\end{equation}
Now, from the definition of the diamond norm
\begin{equation*}
  | \rho_{\tilde{\mathcal{C}}^{(k)}} - \rho_{\tilde{\mathcal{C}}^{(k)}{'}} |_{1} \leq ||\tilde{\mathcal{C}}^{(k)} - \tilde{\mathcal{C}}^{(k)}{'}||_{\diamond}, 
\end{equation*}
where $\rho_{\tilde{\mathcal{C}}^{(k)}} $ is the output state generated by applying the circuit ${\tilde{\mathcal{C}}}^{(k)} $ to any input state, and $\rho_{\tilde{\mathcal{C}}^{(k)}{'}}$ is the output state generated by applying the circuit $\tilde{\mathcal{C}}^{(k)}{'}$ to the same input state.
Additionally, we have that
\begin{equation*}
  \delta(\mathcal{D}( \rho_{\tilde{\mathcal{C}}^{(k)}} ) , \mathcal{D}( \rho_{\tilde{\mathcal{C}}^{(k)}{'}} )) \leq \frac{1}{2}| \rho_{\tilde{\mathcal{C}}^{(k)}} - \rho_{\tilde{\mathcal{C}}^{(k)}{'}} |_{1} , 
\end{equation*}
where $\mathcal{D}(.)$ indicates the output probability distribution generated from measuring a quantum state in a given basis.
Therefore, it follows that 
\begin{equation*}
  \delta(\mathcal{D}( \rho_{\tilde{\mathcal{C}}^{(k)}} ) , \mathcal{D}( \rho_{\tilde{\mathcal{C}}^{(k)}{'}} )) \leq \frac{1}{2}\sum_j || \mathcal{E}_{j}^{(k)} - \mathcal{E}_{j}^{(k)}{'}||_{\diamond} .
\end{equation*}
The protocol bound computed if assumptions are violated is of the form $\gamma' = \gamma + \epsilon_d$, where $\gamma$ is the bound where the assumptions hold,  $\gamma'$ is the bound computed where the assumptions are violated, and $\epsilon_d$ is the difference in the protocol bound induced by the violation of assumptions.
The absolute value of $\epsilon_d$ may be bounded by taking the sum over the eqn. \ref{diamond_norm_distance_bound} bounds for each of the trap circuits.
The mean deviation over $M$ trap circuits provides the robustness bound for the protocol:
 \begin{equation*}
 | \gamma - \gamma'| \leq M^{-1}(1-\beta)^{-1}\sum_k\sum_j || \mathcal{E}_{j}^{(k)} - \mathcal{E}_{j}^{(k)}{'}||_{\diamond} .
 \end{equation*}
\end{proof}

\begin{figure*}[]
    \centering
    \begin{subfigure}[t]{0.4\textwidth}
        \centering
        \includegraphics[width=\textwidth]{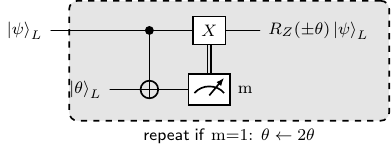}
        \caption{}
    \end{subfigure}
    \hspace{1cm}
    \begin{subfigure}[t]{0.4\textwidth}
        \centering
        \includegraphics[width=\textwidth]{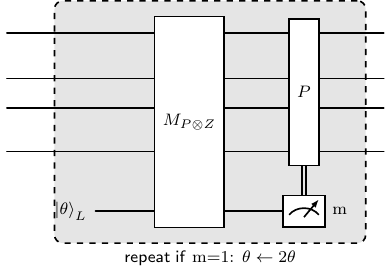}
        \caption{}
    \end{subfigure}
\caption{\textit{Repeat-until-success (RUS) gadgets for implementing single- and multi-qubit Pauli rotations using magic states.} 
(a) A quantum circuit diagram representing an analog single-qubit $Z$-rotation gate, $R_Z(\theta)$, where the ancilla qubit is measured destructively in the $Z$ basis.
If the ancilla qubit measurement outcome yields eigenvalue $-1$ (i.e., measurement outcome bit “1”), the RUS protocol is repeated. 
(b) A quantum circuit diagram representing an analog multi-qubit Pauli rotation gate, $R_P(\theta)$, where $P$ denotes an arbitrary Pauli string operator.
If the measurement outcome of $P \otimes Z$ yields eigenvalue $-1$ (i.e., measurement outcome bit “1”), the RUS protocol is repeated. 
A destructive $X$-basis measurement is used to determine whether a byproduct Pauli operator $P$ has been applied to the output state and requires correction.
}
\label{fig:STAR_teleport}
\end{figure*}

\section{Upper bound for the second-order Rényi entropy density using logical accreditation} \label{entropy_bound_derivation}

In the logical accreditation protocol, the $n$-logical-qubit
experimental target output state, averaged over the uniformly random
target execution position, may be expressed as
\begin{equation}
    \label{eqaccreditation-2}
    \rho_{\mathrm{out}}
    =
    (1-p_{\mathrm{err}})\rho_{\mathrm{out,id}}
    +
    p_{\mathrm{err}}\rho_{\mathrm{noisy}},
\end{equation}
where $p_{\mathrm{err}}$ is the corresponding position-averaged total
logical error rate of the target circuit.
 The purity of the logical output state can be lower-bounded as
\begin{widetext}
\begin{equation}
\begin{aligned}
\label{renyi_ent_bound}
\text{Tr}\big[\rho_{out} {}^2\big]&=\text{Tr}\big[\big((1-p_{err})\rho_{out,id} + p_{err} \rho_{noisy}\big)^2\big]\\
&=(1-p_{err})^2  + 2(1-p_{err})p_{err}\text{Tr}(\rho_{out,id}\rho_{noisy})+ p_{err}^2 \text{Tr}(\rho_{noisy} {}^2)\\
&\geq(1-p_{err})^2  + 2(1-p_{err})p_{err}\text{Tr}(\rho_{out,id}\rho_{noisy})+ p_{err}^2 \text{Tr}((\mathds{1}/2^{n}) {}^2)\\
&\geq(1-p_{err})^2  + p_{err}^2 \text{Tr}(\mathds{1}/2^{2n} )\\
&\geq(1-p_{err})^2  + p_{err}^2 /2^{n} \\
&\geq 1 -2 p_{err} + p_{err}^2(1  + 2^{-n})  \\
&\geq 1 -2 \gamma + \gamma^2(1  + 2^{-n}).  \\
\end{aligned}
\end{equation}
\end{widetext}
The distributivity of the trace is used to obtain the second line. 
The inequality of the third line is reached by replacing $\rho_{noisy}$ in the final term with the maximally mixed state, using that the maximally mixed state is the maximally entropic state.
To obtain the fourth line, we use the observation that
\begin{equation}
\begin{split}
\text{Tr}(\rho_{out,id}\rho_{noisy}) &= \text{Tr}\big( \ket{\psi}\bra{\psi} \sum_j \mu_j \ket{\phi_j}\bra{\phi_j} \big) \\
& = \sum_{j}  \mu_j |\bra{\psi}\ket{\phi_j}|^2 \\
& \geq 0,
\end{split}
\end{equation}
where the density matrices have been initially decomposed in their eigenbases, and the final inequality follows from the positivity of the eigenvalues and the absolute inner product.
Since $0\leq p_{\mathrm{err}}\leq\gamma\leq 1$, the final
inequality in Eq.~(H2) follows from the monotonic decrease of
$f(p)=1-2p+(1+2^{-n})p^2$ for
$p\leq 2^n/(2^n+1)$. For larger $\gamma$,
$f(\gamma)\leq 2^{-n}$, while every $n$-qubit state has purity
at least $2^{-n}$.
The second-order Rényi entropy density of the state $\rho_{out}$ is defined as
\begin{equation}
n^{-1}S^{(2)}(\rho_{out}) = - n^{-1}\log_2 (\text{Tr}[\rho_{out} {}^2]).
\end{equation}
Now, from the final inequality of eqn. \ref{renyi_ent_bound}, and using that for positive reals $x$ and $y$, if $x>y$ then $\text{log}_2(x)>\text{log}_2(y)$, it follows that
\begin{equation}
\begin{split}
\log_2 \big(\text{Tr}\big[\rho_{out} {}^2\big] \big) \geq \text{log}_2 (1 -2 \gamma + \gamma^2(1  + 2^{-n})). \\
\end{split}
\end{equation}
Therefore, the second-order Rényi entropy density of the logical output state can be upper-bounded using the experimentally estimated output value of $\gamma$, specifically
\begin{equation}
n^{-1}S^{(2)}(\rho_{out})  \leq - n^{-1}\log_2 (1 -2 \gamma + \gamma^2(1  + 2^{-n})).
\end{equation}
This bound holds with the same confidence as the logical accreditation bound.

\section{Frameworks for logical computation} \label{frameworks_logical_computation}
\vspace{-0.5em}

We now describe two frameworks for computation with different logical gate sets, and discuss the implications for logical accreditation within each framework.

These are: (1) the Clifford and $T$ gate framework, and (2) the Clifford and analog rotation gate framework. 

\vspace{-0.5em}

\subsection{\texorpdfstring{Clifford and $T$ gate logical circuits}{Clifford and $T$ gate logical circuits}}
\vspace{-0.5em}

Gate sets consisting of a subgroup of the Clifford group along with the $T$ gate are often considered in logical computation.
In the Clifford and $T$ framework, an input computation is compiled using gates from the Clifford group and $T$ gates.
The inclusion of the $T$ gate is necessary to extend the computational power of the Clifford group for universal computation.
Many CSS codes, such as the family of 2D colour codes, can perform Clifford gates transversally, with the non-Clifford $T$ gates needing to be applied through the use of magic state purification and gate teleportation.
This means the Clifford and $T$ framework can be made fully fault-tolerant, where Clifford gates may be performed using transversal operations or lattice surgery, and $T$ gates are performed using purified magic states.
In the Clifford and noisy $T$ gate framework, magic states are not purified to avoid the associated space-time overhead.
Instead, noisy $\ket{\pi/4}$ magic states are prepared,
and are directly used to perform $T$ gates.
This framework is often considered in the context of \textit{partial fault-tolerance} \cite{piveteau_error_2021}, as the Clifford gates are performed fault-tolerantly and the $T$ gates are not.

For computation using either of these frameworks to be compatible with logical accreditation, it is required that any input target computation is compiled using gates from the allowed Clifford gate set $\{\text{C}Z, H, S, I, X, Y, Z\}$ and the $T$ gate.
The Clifford gates may be implemented fault-tolerantly without the use of magic states, and the $T$ gates are implemented by the preparation of $\ket{\pi/4}$ magic states which are then used to perform the $T$ gates by gate teleportation. 
The computational logical qubits are required to be initialised in the $\ket{0}^{\otimes n}$ state, and the final logical measurement is in the computational basis. 
Such a set-up is universal for quantum computation.

\begin{figure*}[]
    \centering
    \begin{minipage}[c]{0.61\textwidth}
        \centering
        \includegraphics[width=\linewidth]{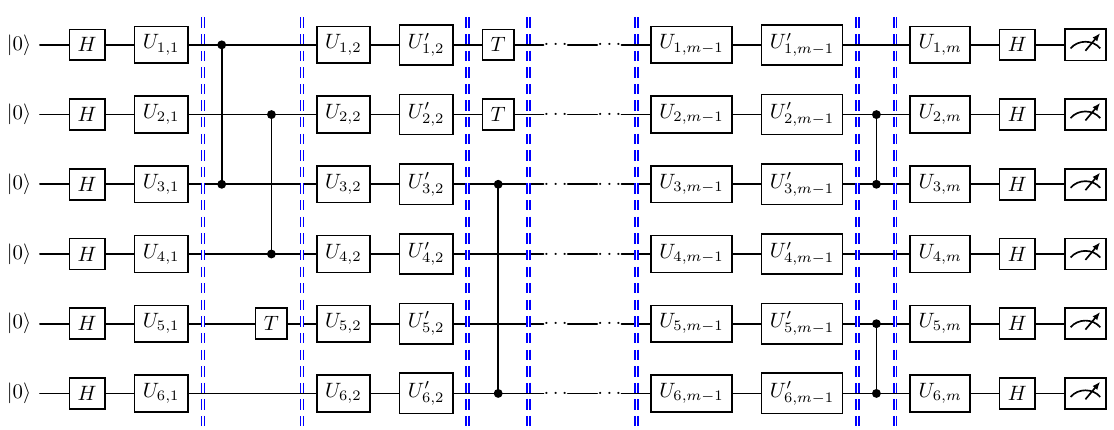}
        \subcaption{Target IQP circuit}
    \end{minipage}
    \hfill
    \begin{minipage}[c]{0.58\textwidth}
        \centering
        \includegraphics[width=\linewidth]{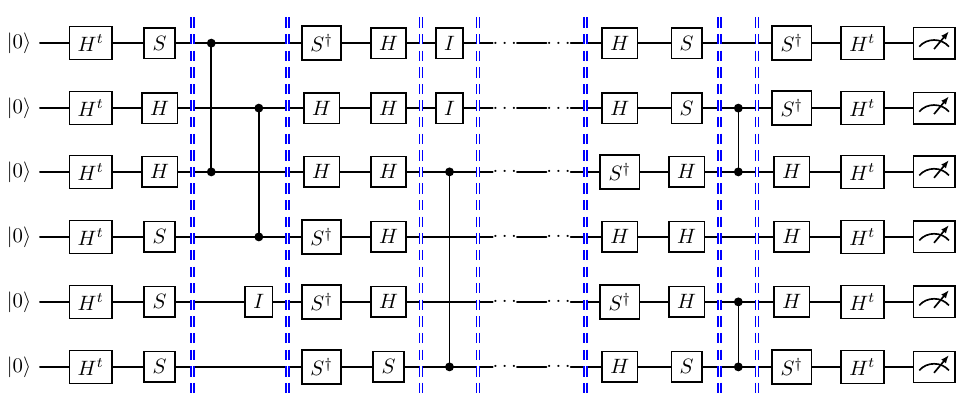}
        \subcaption{Trap IQP circuit}
    \end{minipage}
    \caption{\textit{Structure of target and trap circuits used for the IQP circuit numerical simulations.} 
    The target circuit and the trap circuit structures used for the IQP circuit numerical simulations are shown in (a) and (b), respectively.
    In the trap circuit structure, the target circuit gate layers sandwiching the $\mathcal{J}$-type gate layers are replaced by conjugating $S-S^\dagger$ gates and $H-H$ gates.
    The logic of the C$Z$ gates combined with these conjugating gates corresponds to randomly oriented $CNOT$ gates. 
    }
    \label{fig:iqp-target-trap-subfigs}
\end{figure*}

\begin{figure*}[]
    \centering
    \begin{subfigure}{0.64\textwidth}
        \centering
        \includegraphics[width=\linewidth]{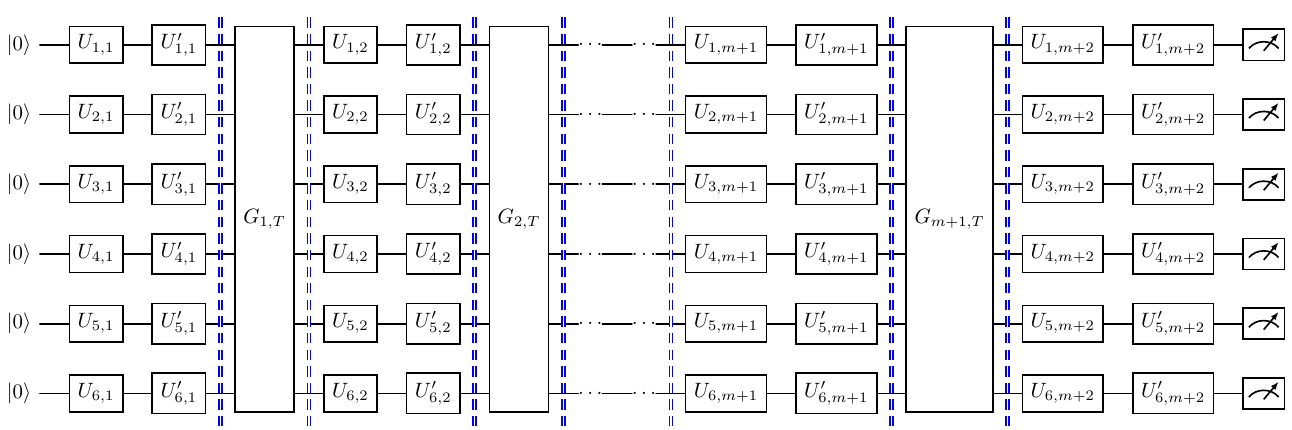}
        \caption{Target Trotter circuit}
        \label{fig:target-circuit}
    \end{subfigure}
    \hfill
    \begin{subfigure}{0.63\textwidth}
        \centering
        \includegraphics[width=\linewidth]{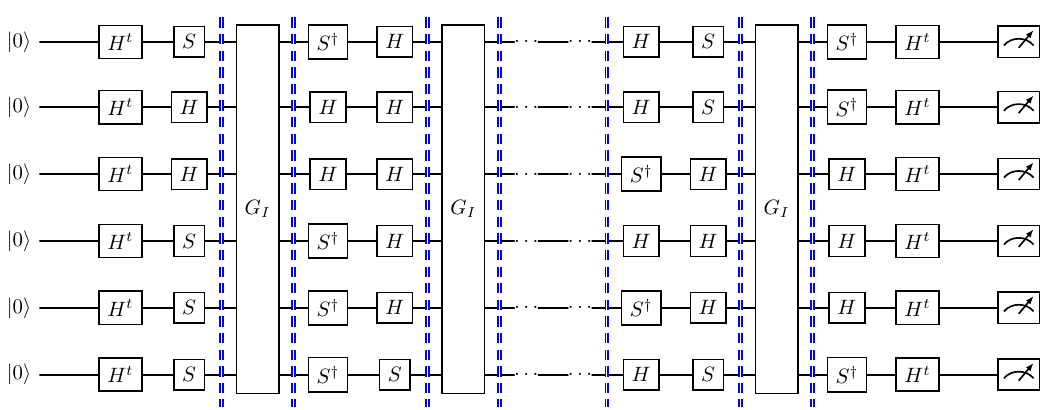} 
        \caption{Trap Trotter circuit}
        \label{fig:trap-circuit}
    \end{subfigure}
    \caption{\textit{Structure of target and trap circuits used for the Trotterised circuit numerical simulations.} 
    The target circuit and the trap circuit structures used for the Trotterised circuit numerical simulations are shown in (a) and (b), respectively.
    The gate layers denoted $G_{i,T}$ and $G_I$ are the Trotterised logical circuit operations used for the target and trap circuits, respectively.
    In the trap circuit structure, the target circuit gate layers sandwiching the $\mathcal{J}$-type gate layers are replaced by conjugating $S-S^\dagger$ gates and $H-H$ gates.}
    
    \label{fig:simulations-trotterized-circuit-structure}
\end{figure*}

\vspace{-1em}
\subsection{\texorpdfstring{Clifford and analog $Z$-rotation gate logical circuits}{Clifford and analog Z-rotation gate logical circuits}}
\vspace{-0.6em}

Another framework suited to partially fault-tolerant computation uses a gate set of Clifford operations and analog rotation operations. 
For computation using this framework to be compatible with logical accreditation, it is required that any input target computation is compiled using gates from the allowed Clifford gate set $\{\text{C}Z, H, S, I, X, Y, Z\}$ and gates of the form $R_P(\theta)$, where $P \in \{I,X,Y,Z\}^{\otimes n}$, and $n$ denotes the number of logical qubits involved in the rotation.
The Clifford gates may be implemented fault-tolerantly without the use of magic states, and the $R_Z(\theta)$ gates by the preparation of noisy $\ket{\theta}$ magic states, which are then used to perform the analog rotations by projective measurement. 
Logical qubits are initialised in the $\ket{0}^{\otimes n}$ state and the final logical measurement is in the computational basis.
Again this set-up is universal for quantum computation.

One example of a Clifford and analog rotation framework for logical computation is a scheme for partially fault-tolerant logical computation called the space-time efficient analog rotation (STAR) framework \cite{akahoshi_partially_2024, toshio_practical_2025}.
This allows for fault-tolerant Clifford operations and noisy arbitrary Pauli rotations using the surface code. 
These analog rotations can be local Pauli-$Z$ rotations, shown in Fig. \ref{fig:STAR_teleport} (a), or global logical Pauli rotations as shown in Fig. \ref{fig:STAR_teleport} (b).

\section{Certifying Trotterised quantum Hamiltonian simulation circuits}

Quantum states evolve in time according to the Schr\"{o}dinger equation: $i\frac{\partial}{\partial t}\ket{\psi(t)}=H(t) \ket{\psi(t)}$, where $\hbar$ is assumed to be 1, $H(t)$ is a (possibly time-dependent) Hamiltonian, and $\ket{\psi(t)}$ is a quantum state. 
If the Hamiltonian is time-independent, 
 the Schr\"{o}dinger equation can be solved exactly: $\ket{\psi(t)} = e^{-iHt}\ket{\psi(0)}$.
Trotterisation uses Trotter–Suzuki product formulas to approximate the time-evolution operator, $e^{-iHt}$, as a product of exponentials of simpler terms.
A Hamiltonian, $H$, that has been mapped to qubits, by, for example, the Jordan-Wigner transformation \cite{jordan_uber_1928, somma_simulating_2002}, may be expressed as a linear combination of Pauli operators
\begin{equation}
H = \sum_{i=1}^{L} a_i P_i,
\end{equation}
where $P_i \in \{I,X,Y,Z\}^{\otimes n}$ for $n$ computational logical qubits.
The Trotter-Suzuki decomposition allows the time evolution operator $e^{-iHt}$ to be approximately decomposed into a sequence of $N$ discrete time-steps.
The time evolution operator may be written as a product of $N$ terms 
\begin{equation}
e^{-iHt} = \Big(e^{-i(\sum_{i=1}^{L} a_i P_i)t/N}\Big)^N,
\end{equation}
where the evolution operator within the brackets is applied $N$ times and each evolution step is by time $t/N$.
The second-order Trotter-Suzuki decomposition approximates these time-step terms as
\begin{equation}
\Big(e^{-i(\sum_{i=1}^{L} a_i P_i)t/N}\Big)^N \approx \bigg(\prod_{i=1}^{L}e^{-i (\frac{a_it}{2N})P_i} \prod_{i=L}^{1} e^{-i (\frac{a_it}{2N})P_i} \bigg)^N.
\end{equation}
The approximation error of the Trotterised evolution is
\begin{equation}
\norm{e^{-iHt} - \bigg(\prod_{i=1}^{L}e^{-i (\frac{a_it}{2N})P_i} \prod_{i=L}^{1} e^{-i (\frac{a_it}{2N})P_i} \bigg)^N} \leq \mathcal{O}\Big(\frac{t^3}{N^2}\Big),
\end{equation}
where $\norm{.}$ is the operator norm, also known as the spectral norm. Specifically, the upper bound is $Wt^3/N^2$ where $W$ is the Trotter error norm - a constant depending on the Hamiltonian and the type of Trotterisation used \cite{childs_theory_2021, campbell_early_2022}.
As the Hamiltonian is expressed as a linear combination of Pauli operators, the previous decomposition may be written as a sequence of multi-qubit Pauli rotations:
\begin{equation} 
\begin{split}
\bigg(\prod_{i=1}^{L} e^{-i (\frac{a_it}{2N})P_i} &\prod_{i=L}^{1} e^{-i (\frac{a_it}{2N})P_i} \bigg)^N \\
&= \bigg(\prod_{i=1}^{L} R_{P_i}\Big( \frac{a_it}{N}\Big) \prod_{i=L}^{1} R_{P_i}\Big( \frac{a_it}{N}\Big) \bigg)^N.
\end{split}
\end{equation}
To perform this sequence of operations with a logical circuit, each Pauli rotation gate is performed by the consumption of a single magic state.
As the logical accreditation protocol permits target circuits including multi-qubit Pauli rotation operations (the construction required is described in Appendix \ref{trap_circuits_section}) it can be used to provide an upper bound on the total circuit error of the circuit consisting of a sequence of multi-qubit Pauli operations like eqn. \ref{trotter_circuit}, and also an upper bound on the TVD of the output when measuring any observable of the time-evolved state.
Logical accreditation may therefore be used to certify the Trotterised evolution of a Hamiltonian using a fault-tolerant logical circuit.

\section{IQP circuit numerical simulations}

The structure of IQP circuits used for the simulations is shown in Fig.~\ref{fig:iqp-target-trap-subfigs}, and is similar to the circuits considered in \cite{ferracin_experimental_2021}. 
For the target circuits, layers of single-qubit Clifford phase gates $U_{i,j}$ were alternated with layers composed of C$Z$ gates and $T$ gates. 
In the experiments, the IQP trap circuits were constructed as follows:
\begin{itemize}
    \item Every $T$ gate was replaced with an identity operation ($I$).
    \item Each $CZ$ gate was sandwiched with $S, S^\dagger$, and $H$ gates, as shown in Fig.~\ref{fig:CZ_rand}.
    \item Random $S-S^\dagger$ or $H-H$ gate pairs were applied to qubits not involved in $CZ$ gates.
    \item For each trap circuit, it was randomly chosen whether to prepend and append Hadamard gate layers (denoted by $H^t$ in Fig.~\ref{fig:iqp-target-trap-subfigs}) to the circuit.
\end{itemize}

In Fig.~\ref{fig:tvd-tcr-infidelity}, we compare the total circuit error rate, the TVD upper bound and the logical state infidelity for 5-qubit, 40-layer IQP circuits, using a similar setting as throughout (e.g., distance-11 surface codes for the fault-tolerant configurations). 
Interestingly, the results suggest that the TVD upper bound not only upper bounds the total circuit error rate, but also upper bounds the infidelity of the logical output state. 
We show that this upper bound holds analytically in Appendix~\ref{sec:upper-bound-infidelity}.

\section{Upper bound for the infidelity between the experimental and ideal output states}
\label{sec:upper-bound-infidelity} 
\begin{figure*}
    \centering
    \begin{subfigure}[t]{0.41\textwidth}
        \includegraphics[width=\linewidth]{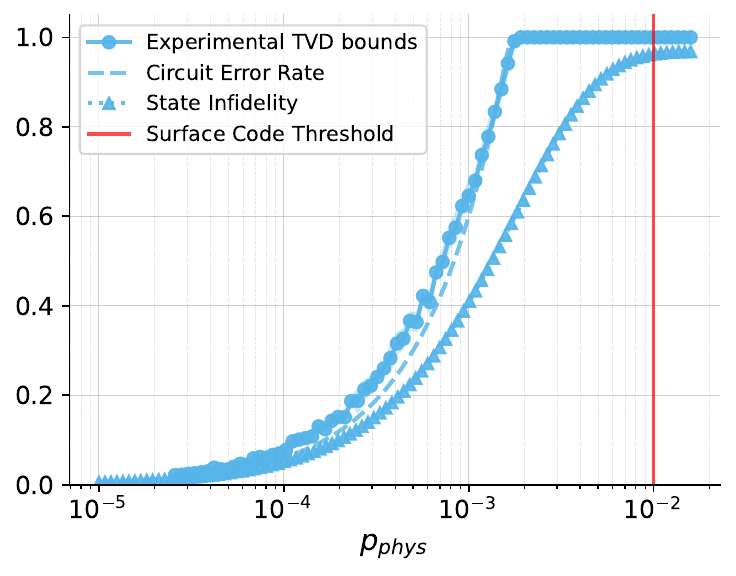}
        \caption{NISQ}
        \label{fig:infidelity-tvd-tcr-nisq}
    \end{subfigure}
    \hfill
    \begin{subfigure}[t]{0.41\textwidth}
        \includegraphics[width=\linewidth]{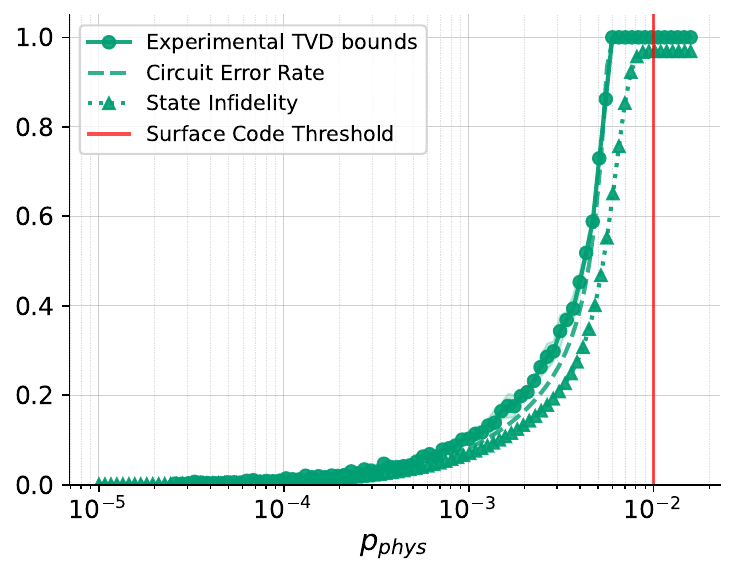}
        \caption{PFTQC}
        \label{fig:infidelity-tvd-tcr-pftqc}
    \end{subfigure}
    \hfill
    \begin{subfigure}[t]{0.41\textwidth}
        \includegraphics[width=\linewidth]{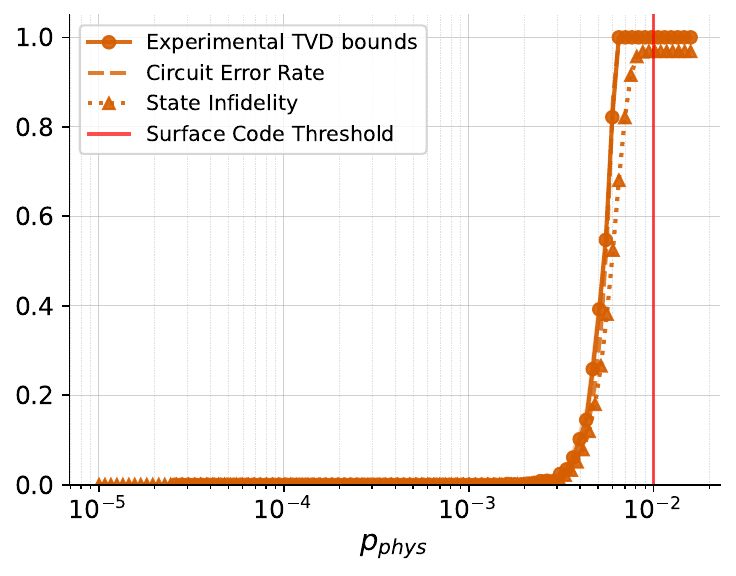}
        \caption{FTQC}
        \label{fig:infidelity-tvd-tcr-ftqc}
    \end{subfigure}
    \caption{\emph{Comparison of the logical accreditation TVD bound, total circuit error rate, and logical state infidelity for the NISQ, PFTQC and FTQC regimes.} 
    This data was obtained from experiments involving 5-qubit IQP circuits with 40 gate layers.
    The experimentally derived TVD bounds, the circuit error rate, the logical output state infidelity, and the surface code threshold are plotted as functions of the physical error rate $p_{phys}$ in the range $[10^{-5},10^{-1}]$.
    The results for NISQ circuits are plotted in (a), the results for PFTQC circuits are plotted in (b), and the results for FTQC circuits are plotted in (c).
    These results agree with the analytical bound derived in Appendix \ref{sec:upper-bound-infidelity}, showing that the logical output state infidelity is upper-bounded by the $\gamma$ value provided by logical accreditation.
    }
    \label{fig:tvd-tcr-infidelity}
\end{figure*}

The fidelity between the ideal logical target circuit output state,
$\rho_{\mathrm{out,id}}$, and the experimental logical target circuit
output state, $\rho_{\mathrm{out}}$, is defined by
\begin{equation}
F(\rho_{out,id}, \rho_{out}) =\bigg(\text{Tr}\sqrt{\sqrt{\rho_{out,id}}\rho_{out}\sqrt{\rho_{out,id}}} \bigg)^2.
\end{equation}
Since $\rho_{out,id}$ is a pure state, we have that $\rho_{out,id} = \ket{\psi}\bra{\psi}$.
We know that projection operators are idempotent, so $(\ket{\psi}\bra{\psi})^2=\ket{\psi}\bra{\psi}$.
Therefore, $\ket{\psi}\bra{\psi}=\sqrt{\ket{\psi}\bra{\psi}}$.
So the expression for the fidelity then becomes
\begin{equation}
\begin{split} \bigg(\text{Tr}\sqrt{\ket{\psi}\bra{\psi}\rho_{out}\ket{\psi}\bra{\psi}}  \bigg)^2 &= \bra{\psi}\rho_{out}\ket{\psi}\Big(\text{Tr}\sqrt{\ket{\psi}\bra{\psi}}  \Big)^2 \\
    &= \bra{\psi}\rho_{out}\ket{\psi}\\
    &= \text{Tr}\big(\ket{\psi}\bra{\psi}\rho_{out}\big).
\end{split}
\end{equation}
The experimental output state $\rho_{out}$ is of the form
\begin{equation}
\rho_{out}=(1-p_{err})\rho_{out,id} + p_{err} \rho_{noisy}.
\end{equation}
The fidelity of the experimental output state and the ideal output state is, therefore,
\begin{equation}
\begin{split}
\text{Tr}\Big(&\rho_{out,id}\big( (1-p_{err})\rho_{out,id} + p_{err} \rho_{noisy}\big)\Big)\\
&=\text{Tr}\big((1-p_{err})\rho_{out,id}{}^2 + p_{err} \rho_{noisy}\rho_{out,id}\big)\\
&=(1-p_{err})+ p_{err}\text{Tr}\big( \rho_{noisy}\rho_{out,id}\big).\\
\end{split}
\end{equation}
The infidelity is the complement of the fidelity, so the infidelity between $\rho_{out,id}$ and $\rho_{out}$ is 
\begin{equation}
\begin{split}
1 - &\text{Tr}\Big(\rho_{out,id}\big( (1-p_{err})\rho_{out,id} + p_{err} \rho_{noisy}\big)\Big).\\
\end{split}
\end{equation}
This may be upper-bounded
\begin{equation}
\begin{split}
1 - &\text{Tr}\Big(\rho_{out,id}\big( (1-p_{err})\rho_{out,id} + p_{err} \rho_{noisy}\big)\Big)\\
&=1 - (1-p_{err}) - p_{err}\text{Tr}\big( \rho_{noisy}\rho_{out,id}\big)\\
&= p_{err} \Big(1 -\text{Tr}\big( \rho_{noisy}\rho_{out,id}\big) \Big)\\
& \leq p_{err} \\
&\leq \gamma .\\
\end{split}
\end{equation}
Here we have used the inequality 
\begin{equation}
\begin{split}
\text{Tr}(\rho_{out,id}\rho_{noisy}) &= \text{Tr}\big( \ket{\psi}\bra{\psi} \sum_j \mu_j \ket{\phi_j}\bra{\phi_j} \big) \\
& = \sum_{j}  \mu_j |\bra{\psi}\ket{\phi_j}|^2 \\
& \geq 0,
\end{split}
\end{equation}
 along with the upper bound previously derived for the total logical circuit error rate.

\begin{figure*}[t]
    \centering
    \includegraphics[width=0.54\textwidth]{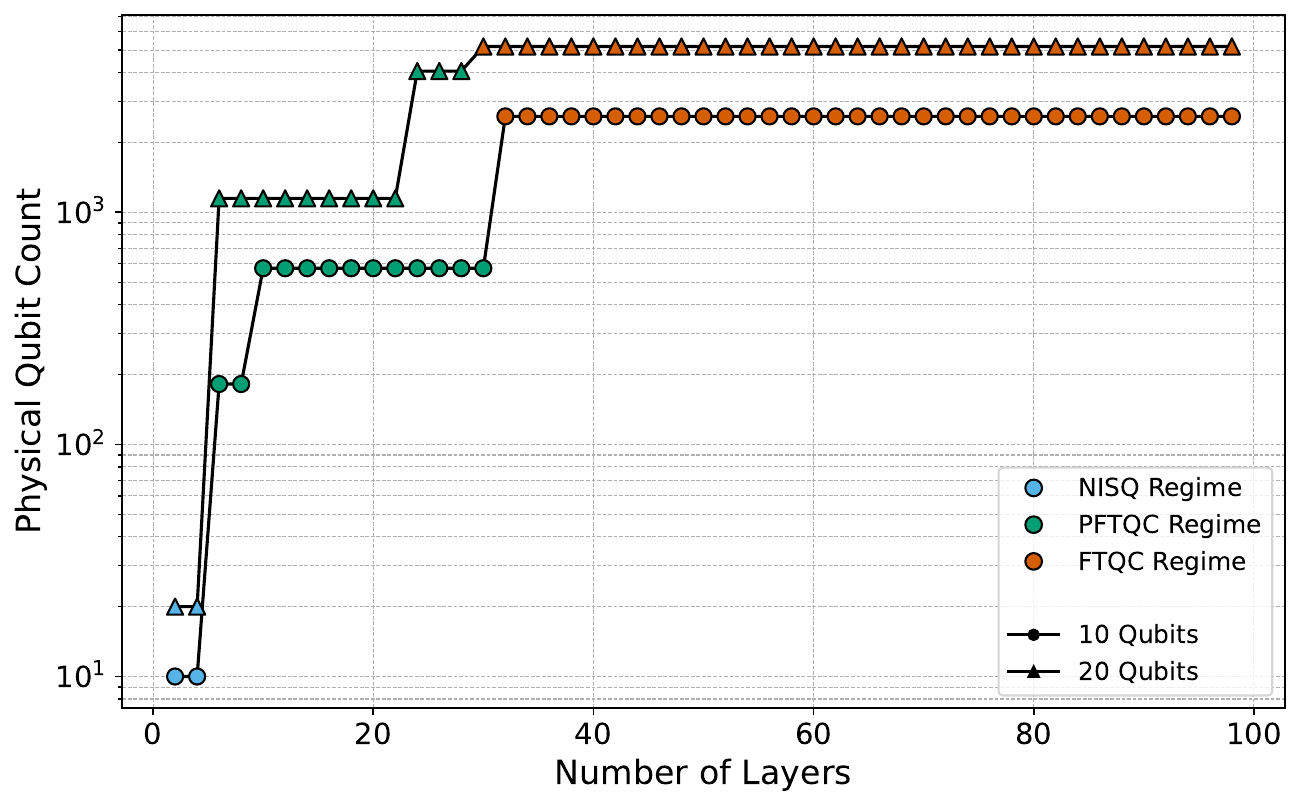}
    \caption{\emph{Numerical analysis of the minimum number of physical qubits required to achieve the IQP threshold for quantum advantage.} 
    The plot shows the most resource-efficient option among NISQ, PFTQC, and FTQC for achieving the quantum advantage threshold of \cite{bremner_average-case_2016}, as circuit depth is increased up to 100 gate layers.
    The physical error rate is assumed to be $10^{-5}$. 
    The colour of each data point indicates the most resource-efficient regime to achieve the threshold: NISQ (blue), PFTQC (green), or FTQC (orange). 
    The physical qubit counts for PFTQC and FTQC are determined by the minimum surface code distance (and hence the minimum number of physical qubits) required such that the advantage bound is satisfied.}
    \label{fig:optimal_resource_comparison}
\end{figure*}

\section{Physical Resource Analysis for IQP Advantage}
\label{apx:resource-analysis-for-IQP-advantage}

We also used logical accreditation to carry out a resource analysis for an IQP circuit sampling quantum advantage with logical qubits encoded using the surface code. 
We estimate the physical qubit performance requirements needed to ensure that the TVD of an IQP circuit remains below the quantum advantage threshold provided in \cite{bremner_average-case_2016}.

In the numerical simulations, the physical error rate was set to $p_{\text{phys}} = 10^{-5}$ for 10 and 20 logical qubit IQP circuits. 
The results of these experiments are plotted in Fig. \ref{fig:optimal_resource_comparison}. 
For increasing circuit depths, we find the minimum code distance required in the NISQ, PFTQC, and FTQC regimes to keep the TVD below the quantum advantage threshold. 
If a regime fails to satisfy the threshold at a given depth, no deeper circuits are attempted beyond that point. 
The optimal strategy for an IQP circuit with a given number of layers is then simply the viable regime with the lowest physical qubit cost.

The results show resource-efficiency crossovers between the regimes. 
NISQ is only optimal for very shallow circuits, after which the PFTQC scheme has a lower resource cost. 
With increasing circuit depth, the accumulation of errors from PFTQC's unprotected $T$ gates forces its required code distance to become prohibitively large. 
At this point, the FTQC scheme, despite the overhead of magic state distillation, becomes the most cost-efficient option.

\end{document}